\newcommand{\Real}{\ensuremath{\mathbb{R}}}
\newcommand{\Plane}{\ensuremath{\mathbb{R}^2}}
\newsavebox\Tbox
\savebox\Tbox{
  \unitlength=0.1ex%
  \begin{picture}(8,10)
    \linethickness{0.1mm}
    \multiput(0,0)(0,2){4}{\line(0,1){1}}
    \multiput(0,0)(2,0){4}{\line(1,0){1}}
    \multiput(7,7)(0,-2){4}{\line(0,-1){1}}
    \linethickness{0.2mm}
    \put(7,7){\line(-1,0){7}}
    \put(3.5,7){\circle*{3.5}}
  \end{picture}}
\newsavebox\Bbox
\savebox\Bbox{
  \unitlength=0.1ex%
  \begin{picture}(8,10)
    \linethickness{0.1mm}
    \multiput(7,7)(0,-2){4}{\line(0,-1){1}}
    \multiput(0,0)(0,2){4}{\line(0,1){1}}
    \multiput(7,7)(-2,0){4}{\line(-1,0){1}}
    \linethickness{0.2mm}
    \put(0,0){\line(1,0){7}}
    \put(3.5,0){\circle*{3.5}}
  \end{picture}}
\newsavebox\Lbox
\savebox\Lbox{
  \unitlength=0.1ex%
  \begin{picture}(8,10)
    \linethickness{0.1mm}
    \multiput(0,0)(2,0){4}{\line(1,0){1}}
    \multiput(7,7)(0,-2){4}{\line(0,-1){1}}
    \multiput(7,7)(-2,0){4}{\line(-1,0){1}}
    \linethickness{0.2mm}
    \put(0,0){\line(0,1){7}}
    \put(0,3.5){\circle*{3.5}}
  \end{picture}}
\newsavebox\Rbox
\savebox\Rbox{
  \unitlength=0.1ex%
  \begin{picture}(8,10)
    \linethickness{0.1mm}
    \multiput(0,0)(0,2){4}{\line(0,1){1}}
    \multiput(0,0)(2,0){4}{\line(1,0){1}}
    \multiput(7,7)(-2,0){4}{\line(-1,0){1}}
    \linethickness{0.2mm}
    \put(7,7){\line(0,-1){7}}
    \put(7,3.5){\circle*{3.5}}
  \end{picture}}
\newsavebox\Abox
\savebox\Abox{
  \unitlength=0.1ex%
  \begin{picture}(8,10)
    \linethickness{0.1mm}
    \multiput(0,0)(0,2){4}{\line(0,1){1}}
    \multiput(0,0)(2,0){4}{\line(1,0){1}}
    \multiput(7,7)(0,-2){4}{\line(0,-1){1}}
    \multiput(7,7)(-2,0){4}{\line(-1,0){1}}
  \end{picture}}
\DeclareRobustCommand\mt{{\scalerel*{{\mathord{\usebox{\Tbox}}}}{b}}}
\DeclareRobustCommand\mb{{\scalerel*{{\mathord{\usebox{\Bbox}}}}{b}}}
\DeclareRobustCommand\ml{{\scalerel*{{\mathord{\usebox{\Lbox}}}}{b}}}
\DeclareRobustCommand\mr{{\scalerel*{{\mathord{\usebox{\Rbox}}}}{b}}}
\DeclareRobustCommand\ma{{\scalerel*{{\mathord{\usebox{\Abox}}}}{b}}}
\newcommand{\VD}{\ensuremath{\mathsf{VD}}}
\newcommand{\VG}{\ensuremath{V\!G}}
\newcommand{\GSq}{\ensuremath{\mathfrak{G}}}
\newcommand{\VR}{\ensuremath{{V\!R}}}
\newcommand{\OCH}{\ensuremath{\mathsf{OH}}}
\newcommand{\MES}{\ensuremath{\mathfrak{M}}}
\newcommand{\bMES}{\ensuremath{\mathfrak{M}^\square}}
\newcommand{\Mes}{\ensuremath{M}}
\newcommand{\OS}{\ensuremath{\mathbb{O}}}
\newcommand{\ct}{\ensuremath{\kappa}}
\newcommand{\Vin}{\ensuremath{V_\mathrm{in}}}
\newcommand{\Sq}{\ensuremath{\mathfrak{S}}}
\newcommand{\cone}{\ensuremath{\angle}}
\newcommand{\Q}{\ensuremath{\mathcal{Q}}}
\newcommand{\T}{\ensuremath{\mathcal{T}}}
\newtheoremstyle{mytheorem}{\topsep}{\topsep}{\slshape}{}{\bfseries}{}{.5em}{}
\theoremstyle{mytheorem}
\newtheorem{theorem}{Theorem}[section]
\newtheorem{lemma}[theorem]{Lemma}
\newtheorem{corollary}[theorem]{Corollary}
\theoremstyle{definition}
\theoremstyle{remark}
\newtheorem*{remark_nonumber}{Remark}
\newbox\ProofSym
\renewenvironment{proof}[1][Proof.]{\O@proof{#1}}{\O@endproof}
\def\O@proof#1{\trivlist
   \@topsep\z@\@topsepadd\smallskipamount%
   \@ifstar{\item[]}{\item[\hskip\labelsep\it #1 ]}}
\def\O@endproof{\hfill\copy\ProofSym\linebreak[3mm]\endtrivlist}
\def\denseitems{
    \itemsep1pt plus1pt minus1pt
    \parsep0pt plus0pt
    \parskip0pt\topsep0pt}
\newcommand*\patchAmsMathEnvironmentForLineno[1]{%
  \expandafter\let\csname old#1\expandafter\endcsname\csname #1\endcsname
  \expandafter\let\csname oldend#1\expandafter\endcsname\csname end#1\endcsname
  \renewenvironment{#1}%
     {\linenomath\csname old#1\endcsname}%
     {\csname oldend#1\endcsname\endlinenomath}}%
\newcommand*\patchBothAmsMathEnvironmentsForLineno[1]{%
  \patchAmsMathEnvironmentForLineno{#1}%
  \patchAmsMathEnvironmentForLineno{#1*}}%
\begin{document}


\title{Empty Squares in Arbitrary Orientation Among Points%
\thanks{%
This work was supported by Basic Science Research Program through the National
Research Foundation of Korea (NRF) funded
by the Ministry of Education (2018R1D1A1B07042755).
}
}

\author{%
Sang Won Bae\footnote{%
Division of Computer Science and Engineering, Kyonggi University, Suwon, Korea.
Email: \texttt{swbae@kgu.ac.kr} }
\and
Sang Duk Yoon\footnote{%
Department of Service and Design Engineering, Sungshin Women's University, Seoul, Korea.
Email: \texttt{sangduk.yoon@sungshin.ac.kr} }
}

\date{%
\today\quad\currenttime
}

\maketitle

\begin{abstract}
This paper studies empty squares in arbitrary orientation
among a set $P$ of $n$ points in the plane.
We prove that the number of empty squares with four contact pairs
is between $\Omega(n)$ and $O(n^2)$, and that these bounds are tight,
provided $P$ is in a certain general position.
A contact pair of a square is a pair of a point $p\in P$ and
a side $\ell$ of the square with $p\in \ell$.
The upper bound $O(n^2)$ also applies to the number of empty squares
with four contact points,
while we construct a point set among which there is no square
of four contact points.
These combinatorial results are based on new observations on
the $L_\infty$ Voronoi diagram with the axes rotated and
its close connection to empty squares in arbitrary orientation.
We then present an algorithm that maintains
a combinatorial structure of the $L_\infty$ Voronoi diagram of $P$,
while the axes of the plane continuously rotates by $90$ degrees,
and simultaneously reports all empty squares with four contact pairs among $P$
in an output-sensitive way within $O(s\log n)$ time and $O(n)$ space,
where $s$ denotes the number of reported squares.
Several new algorithmic results are also obtained:
a largest empty square among $P$ and
a square annulus of minimum width or minimum area that encloses $P$
over all orientations
can be computed in worst-case $O(n^2 \log n)$ time.\\

\noindent
\textbf{Keywords}: \textit{empty square,
arbitrary orientation,
Erd\H{o}s--Szekeres problem.
$L_\infty$ Voronoi diagram,
largest empty square problem,
square annulus
}
\end{abstract}

\section{Introduction} \label{sec:intro}

We start by posing the following combinatorial question:
\begin{quote}
 \textit{%
 Given a set $P$ of $n$ points in a proper general position in $\Plane$
 how many empty squares in arbitrary orientation
 whose boundary contains four points in $P$ can there be?
 }
\end{quote}
For a square, its contact pair is a pair of a point $p\in P$
and a side $\ell$ of it such that $p \in \ell$,
regarding $\ell$ as a segment including its endpoints.
An analogous question asks the number of empty squares with
four contact pairs.
These questions can be seen as a new variant of
the Erd\H{o}s--Szekeres problem~\cite{es-cpg-35,es-sepeg-61} for empty squares.
Let $s=s(P)$ and $s^*=s^*(P)$ be the number of
empty squares with four contact points
and with four contact pairs, respectively.
In this paper, we prove that
$0 \leq s < c_1 n^2$ and $c_2 n < s^* < c_3 n^2$
for some constants $c_1,c_2,c_3>0$.
These lower and upper bounds are tight by the existence of
point sets with the asymptotically same number of such squares.
These questions and results are shown to be intrinsic to
several computational problems on empty squares,
implying new algorithmic results.

For the purpose, we provide a solid understanding
of empty squares in arbitrary orientation among $P$
by establishing a geometric and topological relation among those squares.
The family of axis-parallel empty squares is well understood
by the $L_\infty$ Voronoi diagram of $P$.
For example, each empty axis-parallel square with three points in $P$
on the boundary defines a vertex of the diagram and
the locus of the centers of empty axis-parallel squares
with two common points in $P$ on the boundary determines an edge.
We extend this knowledge to those in arbitrary orientation
by investigating the $L_\infty$ Voronoi diagrams of $P$
with the axes rotated.
Our proof for the above combinatorial problems is based on
new observations made upon the consideration of the Voronoi diagram
in this way.
This also motivates the problem of maintaining the $L_\infty$ diagram of $P$
while the axes continuously rotates.
A noteworthy observation states that
every combinatorial change of the diagram during the rotation of the axes
corresponds to an empty square with four contact pairs.
Hence, the total amount of changes of the diagram
is bounded by $\Theta(s^*)= O(n^2)$.

Based on the observations,
we then present an output-sensitive algorithm that finds
all empty squares with four contact pairs in $O(s^* \log n)$ time
using $O(n)$ space.
Our algorithm indeed maintains a combinatorial description of
the $L_\infty$ Voronoi diagram
as the axes continuously rotates by $90$ degrees
by capturing every occurrence of such special squares
and handling them so that the diagram is correctly maintained as the invariants.
To our best knowledge, there was no such algorithmic result in the literature.
Our algorithm also applies a couple of other geometric problems,
achieving new algorithmic results:
\begin{enumerate}[(1)] \denseitems
 \item A \emph{largest empty square} in arbitrary orientations
 can be found in $O(n^2 \log n)$ time.
 This improves the previous $O(n^3)$-time algorithm by
 Bae~\cite{b-marsap-19X}.
 We also solve some query versions of this problem.
 \item A \emph{square annulus} in arbitrary orientation
 with minimum width or area can be computed in $O(n^2\log n)$ time,
 improving the previous $O(n^3\log n)$ and $O(n^3)$-time
 algorithms~\cite{b-marsap-19X,b-cmwsaao-18}.
\end{enumerate}

\paragraph*{Related work.}

Our combinatorial problem on the number of empty squares in arbitrary orientation
indeed initiates a new variant of the Erd\H{o}s--Szekeres problem.
In its original version, also known as the \emph{Happy End Problem},
Erd\H{o}s and Szekeres ask the minimum number $N(k)$ for any $k\geq 3$ such that
any set of $N(k)$ points in general position in $\Plane$
contains $k$ points in convex position~\cite{es-cpg-35,es-sepeg-61}.
It is known that $N(4)=5$, $N(5)=9$~\cite{es-cpg-35},
$N(6)=17$~\cite{sp-cs17pesp-06},
and $N(k) \geq 2^{k-2}+1$ for any $k\geq 3$~\cite{es-sepeg-61}.
The exact value of $N(k)$ for any $k>7$ is not known, while
it is strongly believed that $N(k) = 2^{k-2}+1$.
Erd\H{o}s also posed an \emph{empty} version of the problem
which asks the minimum number $H(k)$ for any $k\geq 3$ such that
any set of $H(k)$ points in general position contains
$k$ points that form an empty convex $k$-gon~\cite{e-sepeg-78}.
It is trivial to see that $H(3)=3$ and $H(4)=5$.
Harborth showed that $H(5)=10$~\cite{h-kfep-78},
while Horton~\cite{h-snec7g-83} proved $H(k)$ is unbounded for any $k\geq 7$.
The finiteness of $H(6)$ was proven by
Nicol\'{a}s~\cite{n-eht-07} and Gerken~\cite{g-echpps-08},
independently.

Many other variants and generalizations on the Erd\H{o}s--Szekeres problem
have been studied in the literature.
For more details on this subject,
see survey papers by Morris and Soltan~\cite{ms-esppcps-00} and~\cite{ms-esp-16}.
Among them the most relevant to us is the problem of bounding
the number of empty convex $k$-gons whose corners are chosen from $P$,
called \emph{$k$-holes}.
The maximum number of $k$-holes among $n$ points is proven
to be $\Theta(n^k)$ for any $k\geq 3$ and any sufficiently large $n$
by B\'{a}r\'{a}ny and Valtr~\cite{bv-pfest-98}.
Let $X_k(n)$ be the minimum number of $k$-holes among $P$
over all point sets $P$ of $n$ point in general position in $\Plane$.
Note that $X_k(n) = 0$ for any $k\geq 7$ and any $n\geq 1$
by Horton~\cite{h-snec7g-83}.
B\'{a}r\'{a}ny and F\"{u}redi~\cite{bf-eses-87} proved
that $X_k(n) = \Theta(n^2)$ for $k=3$ and $4$
and, $\Omega(n) \leq X_k(n) \leq O(n^2)$ for $k=5$ and $6$.
There has been constant effort to narrow the gap of constant factors
hidden above by several researchers;
see B\'{a}r\'{a}ny and Valtr~\cite{bv-mneppps-95},
Valtr~\cite{v-mneppps-95}, Dumistrescu~\cite{d-psfecp-00},
Pinchasi, Radoi\v{c}i\'{c}, and Sharir~\cite{prs-ecppps-06},
and references therein.

Since any four points in $P$ lying on the boundary of an empty square
form a $4$-hole,
$s$ cannot exceed the number of $4$-holes among $P$.
This, however, gives only trivial upper and lower bounds on $s$ and $s^*$.
In this paper, we give asymptotically tight bounds on $s$ and $s^*$
since we rather focus on algorithmic applications of the bounds.

A lot of algorithmic results on convex $k$-gons and $k$-holes are
also known by researchers.
Dobkin, Edelsbrunner, and Overmars~\cite{deo-secp-90}
presented an algorithm that enumerates all convex $k$-holes for $3\leq k \leq 6$.
Their algorithm in particular for $k=3,4$ is indeed output-sensitive
in time proportional to the number of reported $k$-holes.
Rote et al.~\cite{rwwz-cksckgp-91} presented an $O(n^{k-2})$-time
algorithm that exactly counts the number of convex $k$-gons
and soon improved to $O(n^{\lceil k/2\rceil})$ by
Rote and Woeginger~\cite{rw-cckgpps-92}.
Later, Mitchell et al.~\cite{mrsw-ccppps-95} presented
an $O(n^3)$-time algorithm that counts the number of convex $k$-gons
and convex $k$-holes for any $k\geq 4$.
Eppstein et al.~\cite{eorw-fmakg-92} considered
the problem of finding a convex $k$-gon or $k$-hole with minimum area
and showed how to solve it in $O(n^2)$ time for $k=3$ and $O(kn^3)$ time
for $k\geq 4$,
which soon be improved to $O(n^2\log n)$ for constant $k\geq 4$
by Eppstein~\cite{e-namakg-92}.
Boyce et al.~\cite{bddg-fep-85} considered a maximization problem
that finds a maximum-area or perimeter convex $k$-gon
and obtained an $O(kn\log n + n\log^2 n)$-time algorithm,
which has bee improved to $O(kn + n\log n)$ by
Aggarwal et al.~\cite{akmsw-gamsa-87}.
Drysdale and Jaromcyz~\cite{dj-nlbmampkg-89} showed
an $\Omega(n\log n)$ for this problem.

The problem of maintaining the $L_\infty$ (or, equivalently, $L_1$)
Voronoi diagram while the axes rotates cannot be found in the literature.
A similar paradigm about rotating axes can be seen
in Bae et al.~\cite{blacc-cmarchls-09} in which
the authors show how to maintain the \emph{orthogonal convex hull} of $P$
in $O(n^2)$ time.
Alegr\'{i}­a-Galicia et al.~\cite{aosu-obhpps-18} recently improved it
into $O(n\log n)$ time using $O(n)$ space.
As will be seen in the following, the orthogonal convex hull of $P$ indeed
describes the unbounded edges of the $L_\infty$ Voronoi diagram of $P$.

The problem of finding a largest empty square is a square variant of
the well-known \emph{largest empty rectangle problem}.
The largest empty rectangle problem is one of the most intensively studied
problems in early time of computational geometry.
Its original version asks to find an axis-parallel empty rectangle of
maximum area among $P$.
There are two known general approaches to solve the problem:
one approach enumerates and checks all \emph{maximal empty rectangles}
and the other does not.
Naamad et al.~\cite{nlh-merp-84} presented an algorithm for the largest empty rectangle problem that runs in $O(\min\{r \log n, n^2\})$ time,
where $r$ denotes the number of maximal empty rectangles.
In the same paper, it is also shown that $r = O(n^2)$ in the worst case
and $r = O(n \log n)$ in expectation.
This algorithm was improved by Orlowski~\cite{o-nalerp-90}
to $O(n\log n + r)$ time.
A divide-and-conquer $O(n\log^3 n)$-time algorithm
without enumerating all maximal empty rectangles
has been presented by Chazelle et al.~\cite{cdl-cler-86}.
This was later improved to $O(n\log^2 n)$ worst-case time by
Aggarwal and Suri~\cite{as-facler-87}.
Mckenna et al.~\cite{mrs-flrop-85} proved a lower bound of $\Omega(n \log n)$
for this problem.
It is still open whether the problem can be solved in $O(n \log n)$ worst-case time.
Augustine et al.~\cite{admnrs-qlegodl-10} and Kaplan et al.~\cite{kmns-smqmmpmmta-17} considered a query version of the problem.

Chauduri et al.~\cite{cnd-lerps-03} considered the problem with
rectangles in arbitrary orientation.
They showed that there are $O(n^3)$ combinatorially different classes
of maximal empty rectangles over all orientations,
and presented a cubic-time algorithm that enumerates all those classes.
Hence, a largest empty rectangle among $P$ in arbitrary orientation
can be computed in $O(n^3)$ time.

The largest empty square problem, however, has attained relatively less
interest.
It is obvious that a largest empty axis-parallel square
can be found in $O(n\log n)$ time by computing
the $L_\infty$ Voronoi diagram of $P$~\cite{l-tdvdlpm-80,lw-vdl1m2dsa-80},
as also remarked in early papers, including
Naamad et al.~\cite{nlh-merp-84} and Chazelle et al.~\cite{cdl-cler-86}.
More precisely, all maximal empty axis-parallel squares among $P$ are described by
the vertices and edges of the $L_\infty$ Voronoi diagram of $P$.
It is rather surprising, however, that no further result about
the largest empty square problem in arbitrary orientation
has been come up with for over three decades, to our best knowledge.
From an easy observation that any maximal empty square in arbitrary orientation
is contained in a maximal empty rectangle,
Bae~\cite{b-marsap-19X} recently showed how to compute a largest empty square
in arbitrary orientation in $O(n^3)$ time.
In this paper, we improve this to $O(n^2\log n)$ time.

This also applies to the \emph{square annulus problem} in arbitrary orientation.
A square annulus is the closed region between a pair of concentric squares
in a common orientation.
The square annulus problem asks to find a square annulus of
minimum width or minimum area that encloses a given set $P$ of points.
Abellanas et al.~\cite{ahimpr-bfr-03} and
Gluchshenko et al.~\cite{ght-oafepramw-09}
presented an $O(n \log n)$-time algorithm
that computes a minimum-width axis-parallel square annulus
with the matching $\Omega(n \log n)$ lower bound.
Bae~\cite{b-cmwsaao-18} presented the first $O(n^3 \log n)$-time algorithm
for the square annulus problem in arbitrary orientation,
and later improved it to $O(n^3)$ time~\cite{b-marsap-19X}.
In this paper, we further improve it to $O(n^2 \log n)$ time.

The rest of the paper is organized as follows:
In Section~\ref{sec:pre}, we introduce some preliminaries and definitions.
We define the Voronoi diagram and collects basic and essential observations
related to empty squares and the diagram in Section~\ref{sec:sq_VD}.
We then bound the lower and upper bounds on the number of
empty squares with four points on the boundary or with four contact pairs,
so answer our combinatorial question in Section~\ref{sec:4sq}.
Our algorithm that finds all those squares are presented in Section~\ref{sec:alg}
and its algorithmic applications are presented in Section~\ref{sec:mes}.

\section{Preliminaries} \label{sec:pre}
We consider the standard coordinate system with
the (horizontal) $x$-axis and the (vertical) $y$-axis in the plane $\Plane$.
We mean by the \emph{orientation} of any line, half-line, or line segment $\ell$
a unique real number $\theta \in [0,\pi)$ such that
$\ell$ is parallel to a rotated copy of the $x$-axis by $\theta$.

Through out the paper, we discuss squares in $\Plane$ in arbitrary orientation.
Each vertex and edge of a square will be called a \emph{corner} and a \emph{side}.
For any square $S$ in $\Plane$,
the \emph{orientation} of $S$ is a real number $\theta \in [0,\pi/2)$
such that the orientation of each side of $S$
is either $\theta$ or $\theta + \pi/2$.
We regard the set $\OS: = [0,  \pi/2)$ of all orientations of squares
as a topological space homeomorphic to a circle,
so $x = \theta$ modulo $\pi/2$ for any $x\in \Real$ and $\theta \in \OS$.

Each side of a square, as a subset of $\Plane$,
is assumed to include its incident corners.
We identify the four sides of a square $S$ by
the \emph{top}, \emph{bottom}, \emph{left}, and \emph{right} sides,
denoted by $\mt(S)$, $\mb(S)$, $\ml(S)$, and $\mr(S)$, respectively.
This identification is clear, regardless of the orientation of $S$
since it is chosen from $\OS = [0,\pi/2)$.
Similarly, we can say that a point lies to the \emph{left} of,
to the \emph{right} of,
\emph{above}, or \emph{below} another point in a fixed orientation.
The \emph{center} of a square is the intersection point of its two diagonals,
and its \emph{radius} is half its side length.


Let $P$ be a set of $n$ points in $\Plane$.
A square is called \emph{empty} if no point in $P$ lies in its interior.
An empty square may contain some points in $P$ on its boundary.
A pair $(p, \ma)$ of a point $p\in P$ and a side identifier
$\ma \in \{\mt, \mr, \mb, \ml\}$
is called a \emph{contact pair} of $S$ if $p \in \ma(S)$.
A set of contact pairs is called a \emph{contact type}.
If $\ct$ is the set of all contact pairs of $S$,
then we say that $\ct$ is the \emph{contact type} of $S$.
A \emph{contact point} $p\in P$ of $S$ is a point on a side of $S$, that is,
one belonging to a contact pair of $S$.
Each contact point $p \in P$ of $S$ may lie either on the relative interior of
a side of $S$ or at a corner of $S$.
In the former case, the contact point $p$ contributes to one contact pair of $S$,
while in the latter case, it to two contact pairs.


In this paper, we are interested in the number of empty squares
with four contact points under a proper general position.
We also ask an analogous question for empty squares
with four contact pairs.
We discuss our general position assumption
to answer these combinatorial questions
and devise our algorithmic results.

Before introducing our general position assumption on $P$,
some degenerate cases we would like to avoid are listed.
\begin{lemma} \label{lem:GP1}
 If there are infinitely many empty squares with four contact pairs,
 then there are $a,b,c,d\in P$ in convex position, 
 at most two of which may be identical, 
 such that one of the following holds:
 \begin{enumerate}[(1)] \denseitems
  \item Three distinct points in $\{a,b,c,d\}$ are collinear.
  \item There exists an empty square whose contact points are $a,b,c,d$
  and two edges of the convex hull of $\{a,b,c,d\}$
  are orthogonal or parallel.
  \item The convex hull of $\{a,b,c,d\}$
   forms a convex quadrilateral $abcd$
   such that its two diagonals $ac$ and $bd$ are orthogonal 
   and are of the same length.
 \end{enumerate}
 If there are infinitely many empty squares
 with four contact points, then
 there are four distinct points $a,b,c,d\in P$
 such that one of the above three conditions hold.
\end{lemma}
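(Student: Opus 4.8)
The plan is to reduce the hypothesis, by a pigeonhole argument, to the case of a single combinatorial type realised by infinitely many empty squares, and then to read off the degeneracy by parametrising over the orientation. First I would note that there are only finitely many contact types with four pairs, since each contact pair is one of the $4n$ pairs $(p,\ma)$ with $p\in P$ and $\ma\in\{\mt,\mb,\ml,\mr\}$. Hence, if infinitely many empty squares have four contact pairs, some type $\ct$ with exactly four pairs is the contact type of infinitely many of them. Fixing such a $\ct$, I would pass to the rotated frame attached to an orientation $\theta\in\OS$, in which a square of orientation $\theta$ is axis-parallel with three degrees of freedom (two coordinates and the side length $w$). A contact pair $(p,\ma)\in\ct$ requires $p$ to lie on the line carrying side $\ma$, of orientation $\theta$ or $\theta+\pi/2$; this contributes one linear equation on the three parameters for each side occurring in $\ct$, together with one further equation pinning $\theta$ for each side that carries two of the contact points. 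Write $k$ for the number of distinct sides appearing in $\ct$.

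The first case is the \emph{rigid} one, $k\ge3$. Any three sides of a square determine it, so for each admissible orientation the square of type $\ct$ is unique; and when $k=3$, since $|\ct|=4>k$, two contact points lie on a common side, which pins $\theta$ to one value, so only finitely many squares of type $\ct$ arise and this cannot be our $\ct$. Hence $k=4$, with exactly one pair on each side. The existence of a square of type $\ct$ at orientation $\theta$ is then equivalent to a single equation $A\cos\theta+B\sin\theta=0$ whose coefficients depend only on the four contact points, and since (the square being unique per orientation) infinitely many orientations must work, we get $A=B=0$. A short computation interprets this. If the four contact points are distinct, $A=B=0$ says exactly that their convex hull is a quadrilateral whose diagonals are orthogonal and of equal length, which is condition~(3) after relabelling the vertices cyclically. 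If one contact point sits at a corner, hence on two sides and supplying two of the four pairs, then there are three distinct points --- the corner $a$ and one point on each of the two sides missing $a$ --- and $A=B=0$ says the two edges of their triangle incident to $a$ are orthogonal; since one of the infinitely many empty squares has exactly these points as its contact points, condition~(2) holds, with the four points $a,b,c,d$ listed as $d:=a$. Finally, two contact points at opposite corners would require the two corners to coincide, so that type admits only one orientation and does not arise.

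The second case is $k\le2$, where all four pairs lie on at most two sides. If $k=1$ there are four distinct collinear contact points, which is condition~(1). If $k=2$, then either one of the two sides carries three of the contact points --- again three collinear points, condition~(1) --- or each side carries exactly two of the pairs; in the latter situation the two sides are parallel or orthogonal, so the convex hull of the contact points has two parallel or two orthogonal edges, and one of our infinitely many empty squares realises them, giving condition~(2). This settles the four-contact-pair statement. For the four-contact-point statement, the full contact type of such a square has four \emph{distinct} points and between four and eight pairs, and the pigeonhole step is the same. If that type has four pairs, the analysis above applies and yields (1), (2), or~(3) with four distinct points, the three- and two-point branches of the $k=4$ case being excluded because the contact points are distinct. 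If it has five or more pairs, then some contact point lies at a corner, hence on two sides, and a brief check of the finitely many possibilities shows that some side then carries at least two of the distinct contact points, pinning $\theta$; thereupon the square is either rigidly determined at that single orientation, contradicting infinitude, or has three of its four distinct contact points on one side, which is condition~(1).

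The step I expect to be the main obstacle is the bookkeeping in this case analysis: one must enumerate, up to the symmetries of the square, the combinatorial types of a contact type with four (or more) pairs, classified by $k$ and by the number of distinct contact points; verify in every rigid type that only finitely many orientations admit a square, so that those types never meet the hypothesis; and match each surviving degenerate type to exactly one of conditions~(1)--(3). Two points need particular care: that a contact point at a corner contributes two contact pairs, and that the diagonals in condition~(3) are precisely the two segments joining contact points lying on opposite sides of the square.
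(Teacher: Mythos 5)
Your proposal is correct and follows the same overall strategy as the paper's own proof: a pigeonhole reduction to a single contact type $\ct$ realised by infinitely many empty squares, followed by a case analysis on the structure of $\ct$ and, in the generic case, a sinusoidal overdeterminacy equation in the orientation $\theta$. The main difference is organisational. You classify by $k$, the number of distinct sides carrying a pair of $\ct$, while the paper classifies by the distribution of distinct contact points among the sides; the two decompositions are essentially equivalent, since $k\ge 3$ forces at most one point per side and $k\le 2$ forces a stapled side. One place where your treatment is actually sharper than the paper's is the $k=4$ subcase with one contact point at a corner (three distinct points): the paper's computation of $|ac|=|bd|$ and $\beta-\alpha=\pi/2$ is carried out uniformly and it announces condition~(3), but when two of $a,b,c,d$ coincide the hull is a triangle, so the correct conclusion there is your condition~(2), not~(3). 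You also make explicit what the paper leaves implicit in the four-contact-points statement, namely that a common contact type obtained by pigeonhole could a priori have more than four pairs, and you argue directly that such a type cannot support infinitely many squares unless condition~(1) holds. Two very small cavils: a contact pair constrains $p$ to lie on the \emph{segment}, not merely the supporting line, so the equation $A\cos\theta+B\sin\theta=0$ is necessary rather than sufficient for a square of type $\ct$ --- this costs nothing since you only invoke the necessity; and you should record, as the paper does, that $\{a,b,c,d\}$ are in convex position because they lie on the boundary of a single square. Neither affects correctness.
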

\begin{proof}
Assume that there are infinitely many empty squares with four or 
more contact pairs among $P$.
Let $K:=P\times\{\mt,\mr,\mb,\ml\}$ be the set of all possible contact pairs.
Since $K$ is finite, there are only finitely many possible contact types.
So, there must exist a contact type $\ct\subset K$ with four contact pairs
such that
there are infinitely many empty squares with the same contact type $\ct$.

Let $\Sq$ be the set of those empty squares with the same contact type $\ct$.
Let $(a,\ma_a)$, $(b, \ma_b)$, $(c, \ma_c)$,  and $(d,\ma_d)$ be
the four distinct contact pairs contained in $\ct$
for some $a,b,c,d\in P$ and $\ma_a, \ma_b, \ma_c, \ma_d\in\{\mt,\mr,\mb,\ml\}$.
Note that the cardinality of the set $\{a,b,c,d\}$ is either three or four,
since $\Sq$ is an infinite set.
Thus, two of $a,b,c,d\in P$ may be identical.
We consider three cases as follows: for any $S\in \Sq$,
every side of $S$ contains exactly one point in $\{a, b, c, d\}$,
one side of $S$ contains exactly two distinct points in $\{a, b, c, d\}$, or
one side of $S$ contains three or more distinct points in $\{a,b,c,d\}$.
In either case, the points $\{a,b,c,d\}$ are in convex position,
since they lie on the boundary of a square.
If this is the last case, then this is exactly case (1) where
three of $a,b,c,d,$ are collinear, so we are done.

Suppose the first case where every side of $S\in\Sq$ contains one point.
Assume that points $a$, $b$, $c$, and $d$ lie on
the top, right, bottom and left sides of $S$, in this order, respectively.
Observe, on one hand, that any other square $S' \in \Sq$ with $S'\neq S$
has a different orientation from that of $S$,
since any combination of scaling and translation $S$ without rotation
will lose a contact pair from $\ct$.
On the other hand, consider the rectangle $R(\theta)$ in orientation $\theta$
such that $a$ lies on its top side, $b$ on its right side, $c$ on its bottom side,
and $d$ on its left side.
Note that if the orientation of $S$ is $\theta_0$, then we have $S = R(\theta_0)$
and $R(\theta)$ is well defined for any $\theta$ sufficiently close to $\theta_0$.
Since those infinitely many squares in $\Sq$ have the same contact type $\ct$,
we conclude that for any $S' \in \Sq$,
$S' = R(\theta)$ for some $\theta$ sufficiently close to $\theta_0$.
Now, observe that the height and the width of $R(\theta)$ are represented as
sinusoidal functions of $\theta$:
\[ |ac| \cdot |\sin(\theta + \alpha)| \quad \text{and} \quad 
 |bd| \cdot |\sin(\theta + \beta + \pi/2)|,\]
where $\alpha$ and $\beta$ denote the orientations of $ac$ and $bd$, respectively.
For any $\theta$ sufficiently close to $\theta_0$, we have the equality
\[ |ac|\cdot|\sin(\theta + \alpha)| = |bd|\cdot|\sin(\theta + \beta + \pi/2)|.\]
This implies that $|ac| = |bd|$ and $\beta - \alpha = \pi/2$,
hence this is case (3) of the lemma.

Lastly, suppose the second case where one side of $S \in \Sq$ contains
two distinct points in $\{a,b,c,d\}$.
Assume that $c$ and $d$ are distinct and lie on the bottom side of $S$.
Points $a$ and $b$ are then distributed on the other three sides of $S$.
In this case, observe that all squares in $\Sq$ have the same orientation,
as the two points $c$ and $d$, by the corresponding contact pairs,
fix the orientation.
If $a$ and $b$ lie on different sides of $S$,
then we do not have a spare degree of freedom,
so there exists a unique square $S$ with the contact type $\ct$.
This is not the case by our assumption that $\Sq$ is an infinite set.
Hence, both $a$ and $b$ lie on a common side of $S$
other than the bottom side on which $c$ and $d$ lie.
If $a$ and $b$ lie on the left or the right side, then
two segments $ab$ and $cd$ are orthogonal;
otherwise, if $a$ and $b$ lie on the top side, 
then two segments $ab$ and $cd$ are parallel.
This means we have case (2).

If there are infinitely many empty squares with four contact points,
then there exists a contact type $\ct$ with $|\ct|=4$
such that there are infinitely many empty squares with four contact points
and the same contact type $\ct$, as done above.
In this case, the four contact points $\{a, b, c, d\}$ in $\ct$
are distinct.
This proves the second part of the lemma.
\end{proof}

Thus, we want to avoid such a configuration in $P$ of the above three cases.
In addition, consider an empty square $S$ with five contact pairs.
Then, there always exists a local continuous transformation
that makes $S$ lose one contact pair
and results in infinitely many empty squares with four contact pairs.
The above discussion suggests us the following \emph{general position assumption}:
\begin{quote} \textit{%
 There is no square in arbitrary orientation with
 five or more contact pairs among $P$.
 }
\end{quote}
Any square in arbitrary orientation has four degrees of freedom,
namely, two coordinates of its center, its radius, and its orientation.
This implies that four non-redundant contact pairs are enough
to determine a square.
Hence, the above assumption is indeed about a general position,
and can be achieved by applying any standard perturbation technique for $P$.
Observe that the three cases listed in Lemma~\ref{lem:GP1} are avoided
by our general position assumption on $P$.
\begin{lemma} \label{lem:GP2}
 Suppose that $P$ is in general position in our sense.
 Then, there are finitely many empty squares with four contact pairs
 among $P$.
\end{lemma}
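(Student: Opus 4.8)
The plan is to prove the contrapositive: if there are infinitely many empty squares with four contact pairs among $P$, then $P$ fails to be in general position in our sense, i.e.\ some square has five or more contact pairs. By Lemma~\ref{lem:GP1}, the hypothesis of infinitely many such squares forces $P$ to contain a quadruple $a,b,c,d$ (at most two of which may coincide) in convex position satisfying one of the three listed conditions. So it suffices to show that each of these three conditions produces a square with at least five contact pairs among $P$, contradicting general position.

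For case (1), three of the four points $a,b,c,d$ are collinear; say they lie on a line $\ell$. Any empty square from the infinite family $\Sq$ whose boundary contains all of $a,b,c,d$ must have a side containing these three collinear points, and that side already accounts for three contact pairs from $\{a,b,c,d\}$; the fourth point contributes at least one more, giving four contact pairs from as few as four points, but the point on a corner (if any) or an additional coincidence pushes the count up — more cleanly, I would argue directly that infinitely many squares in $\Sq$ share the contact type, pick one, and observe that the collinear configuration plus the remaining contact pair over-constrains the square unless there is a fifth incidence; I would instead use the cleaner route of taking the specific witnessing square guaranteed by the proof of Lemma~\ref{lem:GP1} and counting. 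For case (2), there is an explicit empty square $S$ with contact points $a,b,c,d$ and two edges of $\conv\{a,b,c,d\}$ that are parallel or orthogonal; parallel edges of the convex hull lying on the boundary of $S$ force those two hull edges to lie along two parallel sides of $S$, so the endpoints of both edges are contact points — that is four points on two sides — and since the two edges share no vertex in the parallel subcase we again get four contact pairs, while the orthogonal subcase similarly forces points onto adjacent sides; in either subcase a short case analysis on which corners are occupied yields a fifth contact pair. For case (3), the diagonals $ac$ and $bd$ of the convex quadrilateral are orthogonal and equal in length, which is exactly the condition (derived in the proof of Lemma~\ref{lem:GP1}) under which the circumscribing rectangle $R(\theta)$ is a square for \emph{every} orientation $\theta$ in a whole interval; picking any orientation for which an extra incidence occurs — or arguing that the four-point circumscribed square can always be rotated until a fifth point of $P$ (or a corner coincidence among $a,b,c,d$) is hit — gives the contradiction.

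Actually, the most economical argument avoids dissecting the three cases individually. I would instead argue as follows. Suppose, for contradiction, that there are infinitely many empty squares with four contact pairs but $P$ is in general position. Consider the set $\Sq$ of infinitely many empty squares sharing a common contact type $\ct$ with $|\ct| = 4$, as extracted in the proof of Lemma~\ref{lem:GP1}; since $\OS = [0,\pi/2)$ is compact, the orientations of squares in $\Sq$ have an accumulation point $\theta^*$, and one checks (using the sinusoidal parametrization of the circumscribing rectangle's width and height from that proof) that the map $\theta \mapsto S(\theta)$ defined near $\theta^*$ is real-analytic, so the equality ``width $=$ height'' either holds on an open interval of orientations or only at isolated points. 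Since $\Sq$ is infinite it accumulates, so width $=$ height holds on an entire open interval $I \ni \theta^*$; hence for every $\theta \in I$ the square $S(\theta)$ has contact type $\ct$ and is empty (emptiness is a closed-and-open condition under the analytic deformation along $I$, since no point of $P$ crosses into the interior without first touching the boundary, which would create a fifth contact pair — the very thing we are ruling out). Now take any $\theta_1 \in I$ and continue rotating: as $\theta$ leaves $I$, the width/height equality breaks, which means some side of $S(\theta)$ must swing off one of its contact points; but by compactness and the finiteness of $P$, before the contact type can change some new point of $P$ must land on $\bd S(\theta)$ — producing a square with five contact pairs. This contradicts general position and completes the proof.

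The main obstacle is the bookkeeping in the last step: making precise that ``before a contact pair is lost, a new one is gained.'' Concretely, one must show that the boundary of the moving square $S(\theta)$ cannot simply slide off a contact point without, at the critical orientation, either acquiring a new incidence with $P$ or passing through a configuration where a contact point sits exactly at a corner (which already counts as two contact pairs, hence five total). This requires a careful continuity/monotonicity argument about the positions of the contact points along the sides of $S(\theta)$ as $\theta$ varies over the maximal interval on which the contact type is preserved; I expect this to be the only genuinely delicate part, and I would handle it by analyzing the signed position of each of the four contact points along its side as an analytic function of $\theta$ and noting that the contact type changes precisely when one of these functions reaches an endpoint of its side, which is a corner, yielding the fifth contact pair.
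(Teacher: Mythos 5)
Your proposal contains two arguments layered on top of each other, and neither is quite airtight as stated. The first, case-by-case part does follow the paper's route (split on the three cases from Lemma~\ref{lem:GP1} and manufacture a square with five contact pairs in each), but your handling of case~(1) is hand-waving: you say the collinear configuration ``over-constrains the square unless there is a fifth incidence'' without producing the fifth incidence. The paper's observation is concrete: take the square one of whose sides is exactly the minimal segment spanned by the three collinear points; then the two extreme collinear points sit at corners, each contributing two contact pairs, which already gives five. Your case~(2) and case~(3) sketches (slide to a corner, shrink to a corner, rotate until a corner coincidence or fifth point) are essentially the paper's.

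The ``more economical'' unified rotation argument that you then prefer has a genuine gap. It implicitly assumes that the common contact type $\ct$ of the infinite family $\Sq$ has one contact point on each side, so that the circumscribing rectangle $R(\theta)$ is a well-defined function of $\theta$ near $\theta^*$ and the family $\Sq$ is traced out by rotation. But in cases~(1) and~(2) of Lemma~\ref{lem:GP1} the contact type $\ct$ is \emph{stapled}: two (or three) of the contact points lie on a common side, which pins the orientation to a single value. Every square in $\Sq$ then has the \emph{same} orientation, the infinite family arises from sliding or scaling rather than rotation, and the map $\theta\mapsto S(\theta)$ either has a degenerate domain or is not even single-valued. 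The accumulation point $\theta^*$ exists trivially, but there is no open interval $I$ of orientations to work with, and the whole ``width $=$ height on an interval, then rotate out of $I$'' machinery does not get off the ground. So the unified argument only covers case~(3), and you must fall back on the case analysis for the other two — at which point you are back to the paper's proof anyway, with case~(1) still needing the missing corner-count. Separately, the step ``emptiness is a closed-and-open condition\ldots since no point of $P$ crosses into the interior without first touching the boundary, which would create a fifth contact pair — the very thing we are ruling out'' is stated in a slightly circular way; in a proof by contradiction this is salvageable, but it would be cleaner to let the interval $I$ be \emph{maximal} subject to the square being empty with contact type exactly $\ct$, and analyze the boundary orientation directly rather than invoking the hypothesis mid-argument.
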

\begin{proof}
Assume that $P$ is in general position,
and suppose that there are infinitely many empty squares with four contact pairs
among $P$.
Then, Lemma~\ref{lem:GP1} guarantees the existence of points
$a,b,c,d\in P$, two of which may be identical,
such that one of the three cases holds.
We handle each case.
\begin{enumerate}[(1)] \denseitems
\item If three distinct points of $a,b,c,d$ are collinear,
then a square with the side determined by
the minimal line segment containing the three points has at least five contact pairs.
So, we get a contradiction.
\item Suppose that there exists an empty square $S$ with contact points
$a,b,c,d$ and two edges of the convex hull of $\{a,b,c,d\}$
are orthogonal or parallel.
This means that two segments, say $ab$ and $cd$, 
lie on two distinct sides of the square $S$.
If $ab$ and $cd$ are parallel, then we observe that
$a, b, c,d$ should be all distinct.
We slide $S$ keeping $ab$ and $cd$
on its sides until one of the points $\{a, b, c, d\}$ is hit by a corner of $S$.
The resulting square now has five contact pairs, a contradiction.
Otherwise, if $ab$ and $cd$ are orthogonal, then
we assume that $ab$ is contained in the left side $\ml(S)$ of $S$
and $cd$ is contained in the bottom side $\mb(S)$ of $S$.
In this case, we shrink $S$ towards the bottom-left corner
until the top-left corner or the bottom-right corner hits
one of the points $\{a, b, c, d\}$.
The resulting square now has five contact pairs, a contradiction.
\item In the last case, the convex hull of $\{a, b, c, d\}$
 forms a convex quadrilateral $abcd$ such that
 its two segments $ac$ and $bd$ have the same length
 and make the right angle.
 Let $\theta_0\in [0,\pi)$ and $\theta'_0\in [0,\pi)$ 
 be the orientations of $ac$ and $bd$, respectively.
 Without loss of generality, assume that
 $\theta_0 < \theta'_0$,
 so $\theta'_0 - \theta_0 = \pi/2$.
 Consider the square $S_0$ whose orientation is $\theta'_0$ and
 each side of whose contains exactly one point in $\{a,b,c,d\}$.
 Note that $S_0$ is uniquely defined.
 As discussed in the proof of Lemma~\ref{lem:GP1},
 the rectangle defined by the four points on each side
 in any orientation sufficiently close to $\theta_0$ forms a square.
 We rotate $S_0$ by continuously increasing or decreasing $\theta$ from
 $\theta = \theta_0$, until one of the four points is located at a corner.
 Then, the resulting square has five contact pairs,
 a contradiction.
\end{enumerate}
In either case, we get a contradiction,
so the lemma is proved.
\end{proof}


From now on, we assume that $P$ is in general position as discussed above.
Consider any empty square $S$ of contact type $\ct$.
We call $S$ an \emph{$m$-square} or \emph{$(m,k)$-square}
if $m = |\ct|$ and $k$ is the number of contact points in $\ct$.
A side of $S$ is called \emph{pinned}
if it contains a point in $P$, so it is involved in some contact pair in $\ct$;
or \emph{stapled} if it contains two distinct points in $P$.
If $S$ has a stapled side, then $S$ is called \emph{stapled}.
From the general position assumption,
along with Lemmas~\ref{lem:GP1} and~\ref{lem:GP2},
there are no three or more contact pairs in $\ct$
involving a common side of $S$,
and there is at most one stapled side of $S$.

Since we are interested in $4$-squares,
we classify all possible types of $4$-squares
under the symmetry group of the square.
There are $10$ of them, classified by
the number of contact points and then the distribution of contact points
onto the square sides.
See \figurename~\ref{fig:4sq_type}.

\begin{figure}[tbh]
\begin{center}
\includegraphics[width=.95\textwidth]{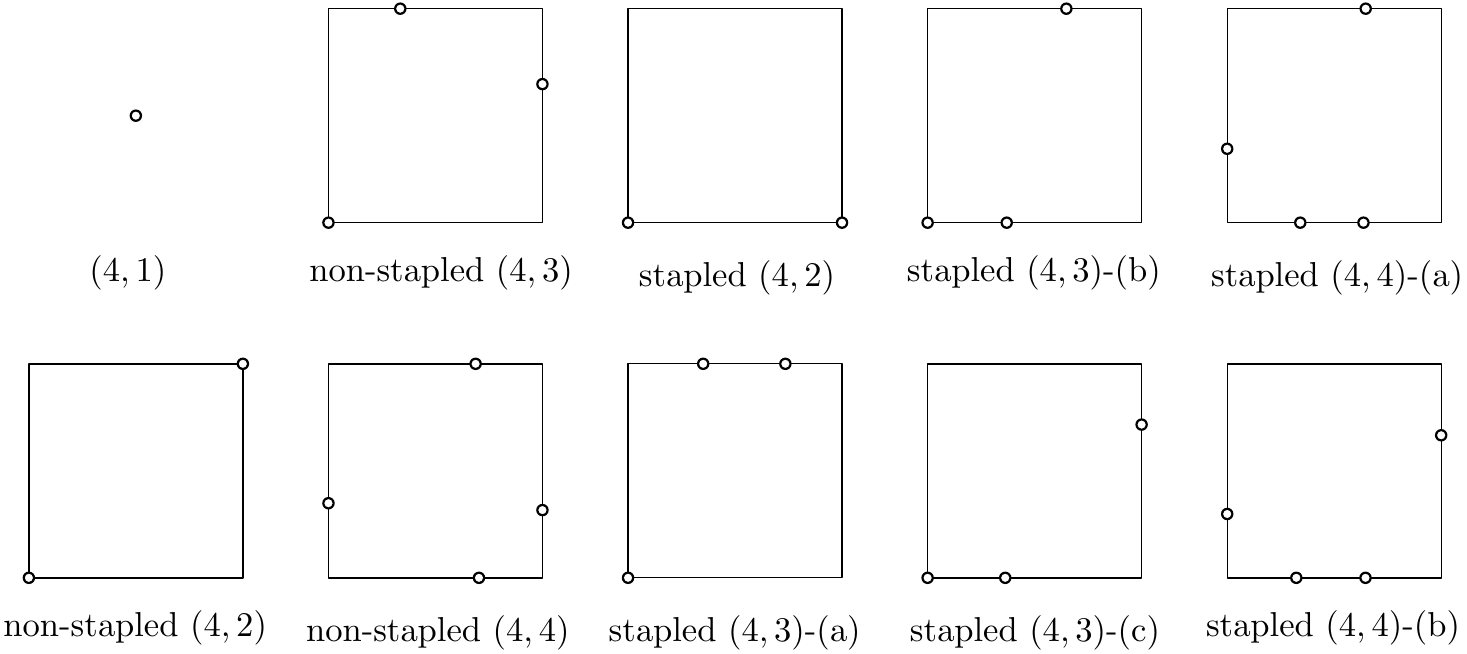}
\end{center}
\caption{Illustration of $10$ types of $4$-squares and the type names.
 Small circles on the boundary of each square depict its contact points in $P$.
 The four left ones are non-stapled types and
 the six right ones are stapled types.
 }
\label{fig:4sq_type}
\end{figure}

Suppose that $S$ is a $(4,k)$-square with contact type $\ct$.
Then, the type of $S$ is determined as follows:
\begin{itemize} \denseitems
 \item \textit{$(4,1)$-type:}
 In case of $k=1$, $S$ is degenerate to a point $p\in P$
 with $\ct = \{p\} \times \{\mt,\mr,\mb,\ml\}$.
 We call any $(4,1)$-square a \emph{trivial} square since its radius is zero
 and its orientation is undefined.

 \item \textit{$(4,2)$-types:}
 In case of $k=2$, two points $p, q\in P$ should lie on two corners of $S$.
 If $S$ is stapled, then $S$ is of \emph{stapled $(4,2)$-type};
 otherwise, $S$ is of \emph{non-stapled $(4,2)$-type}.

 \item \textit{$(4,3)$-types:}
 If $k=3$, then a contact point $p\in P$ should lie on a corner of $S$
 and the other two $q, r \in P$ lie in the interior of some sides of $S$.
 Without loss of generality, assume that $p$ lies on
 the bottom-left corner of $S$, so $(p,\mb), (p,\ml) \in \ct$.

 If $q$ and $r$ lie on distinct sides other than the bottom and left sides,
 then $S$ is of \emph{non-stapled $(4,3)$-type}.
 Otherwise, $S$ is stapled and
 there are three sub-types for stapled $(4,3)$-type.
 If both $q$ and $r$ lie on a common side of $S$,
 then the stapled side of $S$ is neither the bottom side nor the left side
 by the general position assumption;
 in this case, $S$ is said to be of \emph{stapled $(4,3)$-(a)-type}.
 Otherwise, one of $q$ and $r$
 lies on the bottom side or on the left side.
 Without loss of generality, assume that $(r,\mb)\in \ct$.
 Then, there are two cases: either $(q,\mt) \in \ct$ or $(q,\mr)\in\ct$.
 We say that $S$ is of \emph{stapled $(4,3)$-(b)-type} if $(q,\mt)\in\ct$;
 or $S$ is of \emph{stapled $(4,3)$-(c)-type} if $(q,\mr)\in\ct$.

 \item \textit{$(4,4)$-types:}
 If every side of $S$ is pinned, then $S$ is of \emph{non-stapled $(4,4)$-type}.
 Otherwise, if $S$ is stapled, then two contact points $p, q\in P$
 lie on a common side of $S$, say the bottom side,
 so $(p,\mb), (q, \mb)\in\ct$.
 There are two sub-types of stapled $(4,4)$-type:
 $S$ is of \emph{stapled $(4,4)$-(a)-type} if
 the top side of $S$ is pinned by one of the other two contact points;
 otherwise, $S$ is of \emph{stapled $(4,4)$-(b)-type}.
\end{itemize}

Throughout the paper,
we are not interested in trivial $(4,1)$-squares in most cases.
Hereafter, we thus mean by a $4$-square
a nontrivial $4$-square, that is, a $(4,k)$-square with $k \geq 2$,
unless stated otherwise.

\section{Empty Squares and the Voronoi Diagram} \label{sec:sq_VD}

The empty squares among $P$ are closely related to the Voronoi diagram of $P$
under the $L_\infty$ distance.
It is well known that each vertex of the $L_\infty$ diagram corresponds to
an axis-parallel $3$-square and each edge to the locus
of axis-parallel $2$-squares with a common contact type,
unless there is any axis-parallel $4$-square.
Indeed, the diagram explains a complete geometric relation
among all the empty squares in orientation $0\in\OS$.
In this section, we extend this knowledge for
empty squares in arbitrary orientation.
Consequently, we collect several essential properties of empty squares
in terms of the Voronoi diagram,
based on which we will be able to bound the number of $4$-squares
and to present an efficient algorithm that computes all $4$-squares.

\subsection{Definition of Voronoi diagrams}
Let $P$ be a given set of $n$ points in general position
as discussed in Section~\ref{sec:pre}.
For each $\theta \in \OS$,
we define $\VD(\theta)$ to be the $L_\infty$ Voronoi diagram of $P$
with the axes rotated by $\theta$, or
equivalently, the Voronoi diagram of $P$
under the symmetric convex distance function $d_\theta$ based on
a unit square whose orientation is $\theta$.
The \emph{Voronoi region} of $p\in P$ in orientation $\theta$ is
 \[ \VR_p(\theta) := \{ x\in \Plane \mid d_\theta(x, p) < d_\theta(x, q), q\in P \setminus \{p\}\}.\]
For fixed $\theta\in\OS$, the diagram $\VD(\theta)$ is just
a rotated copy of the $L_\infty$ Voronoi diagram of a rotated copy of $P$.

The diagram $\VD(\theta)$ can also be defined in terms of empty squares.
More precisely, we view $\VD(\theta)$ as a plane graph
whose vertices $\hat{V}(\theta)$ and
edges $\hat{E}(\theta)$ are determined as follows:
\begin{itemize} \denseitems
\item The vertex set $\hat{V}(\theta)$ consists of the centers of all
empty squares in orientation $\theta$ with three or four pinned sides,
and a point \emph{at infinity}, denoted by $\hat{\infty}$.
\item An edge is contained in $\hat{E}(\theta)$ if and only if
it is a maximal set of centers of all empty squares in orientation $\theta$
having a common contact type with two pinned sides.
Each edge in $\hat{E}(\theta)$ is either a half-line or a line segment,
called \emph{unbounded} or \emph{bounded}, respectively.
In either case, any endpoint of an edge in $\hat{E}(\theta)$
is a vertex in $\hat{V}(\theta)$,
and each unbounded edge is incident to $\hat{\infty}$.
\end{itemize}
The vertices and edges of $\VD(\theta)$ are well defined
by empty squares with a certain number of pinned sides:
if three or four sides are pinned, then there is such a unique square in a fixed orientation;
if two sides are pinned, then we can slide or grow a corresponding square
whose center traces out an edge.

\begin{figure}[tbh]
\begin{center}
\includegraphics[width=.99\textwidth]{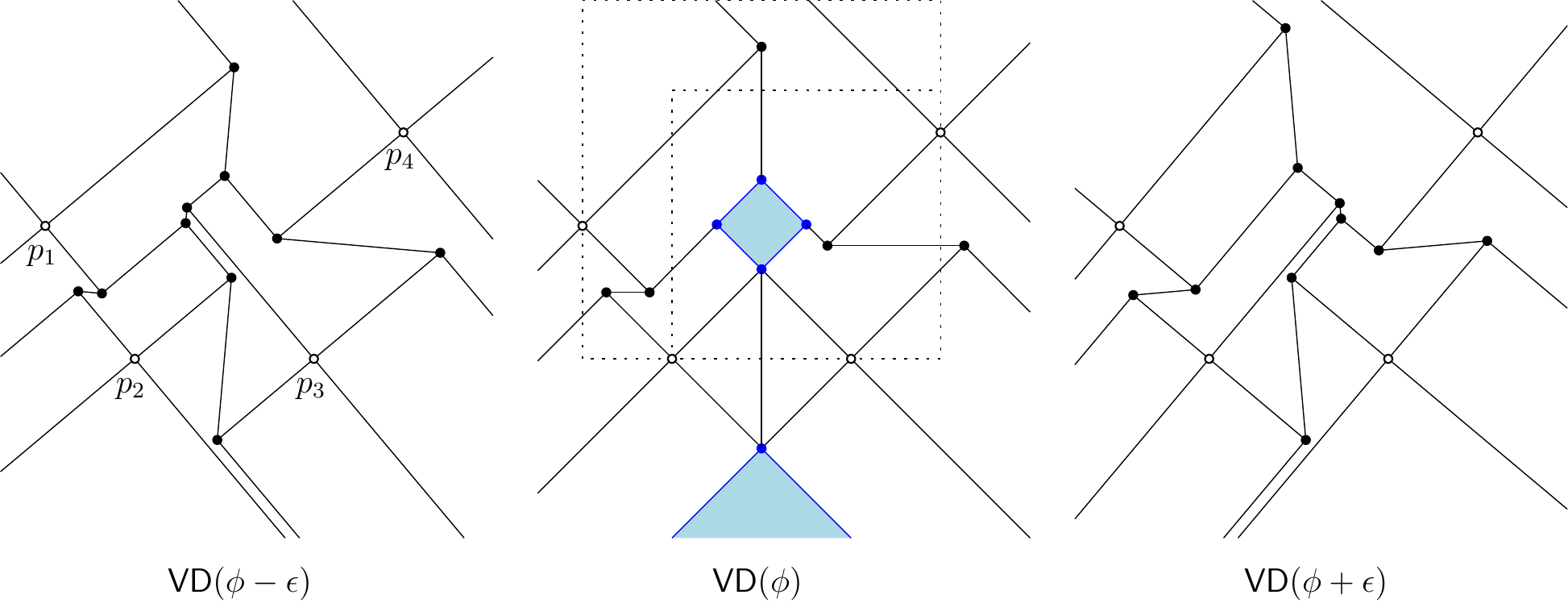}
\end{center}
\caption{Illustration of $\VD(\phi-\epsilon)$, $\VD(\phi)$,
and $\VD(\phi+\epsilon)$ for $P = \{p_1, \ldots, p_4\}$ and some $\phi\in\OS$.
 }
\label{fig:vd}
\end{figure}

See \figurename~\ref{fig:vd}.
Our definition of $\VD(\theta)$ is not very different from any standard one,
except it adds four more edges incident to each $p\in P$
such that each of them corresponds to $2$-squares with one corner
anchored at $p$ (hence, with two contact pairs and two pinned sides).
In this way, each point $p\in P$ is also a vertex in $\hat{V}(\theta)$
since $p$ is the center of a trivial $(4,1)$-square
with its four sides pinned.
The diagram $\VD(\theta)$ as a plane graph divides the plane
into its \emph{faces}.
Each face of $\VD(\theta)$ is the locus of centers
of empty squares having a common contact type with \emph{one pinned side};
hence, for every contact pair $(p,\ma)$,
there exists a unique face of $\VD(\theta)$ consisting of the centers
of $1$-squares with contact type $\{(p,\ma)\}$.
Therefore, the Voronoi region $\VR_p(\theta)$ of each $p\in P$
includes exactly four faces of $\VD(\theta)$.
On the other hand, there may exist some \emph{neutral} faces of $\VD(\theta)$
that do not belong to any Voronoi region $\VR_p(\theta)$,
if it corresponds to $2$-squares with a stapled side
(see the two shaded faces of $\VD(\phi)$ in color lightblue in \figurename~\ref{fig:vd}).
This is obviously a degenerate case
which has been avoided from most discussions about Voronoi diagrams
in the literature.
In this way, our definition of $\VD(\theta)$ completely represents
all cases of point set $P$,
even though there are four equidistant points in $P$ under $d_\theta$
or there are two points in $P$ such that $pq$ is in orientation $\theta$
or $\theta+\pi/2$.

The \emph{combinatorial structure} of $\VD(\theta)$ is represented by
its underlying graph $\VG(\theta) = (V(\theta), E(\theta))$,
called the \emph{Voronoi graph};
conversely, $\VD(\theta)$ is a plane embedding of $\VG(\theta)$.
More precisely, the vertices and edges of $\VG(\theta)$ are
described and identified as follows:
\begin{itemize} \denseitems
 \item Each vertex $v \in V(\theta)$ corresponds to $\hat{v} \in \hat{V}(\theta)$.
 In particular, the vertex at infinity, denoted by $\infty \in V(\theta)$,
 corresponds to $\hat{\infty} \in \hat{V}(\theta)$.
 Each $v\in V(\theta) \setminus \{\infty\}$
 is identified by the contact type $\ct_v$ of the square defining
 $\hat{v} \in \hat{V}(\theta)$.
 For completeness, we define $\ct_\infty := \emptyset$ 
 \item Each edge $e\in E(\theta)$ corresponds to $\hat{e} \in \hat{E}(\theta)$.
 Each edge $e = uv \in E(\theta)$ for $u, v \in V(\theta)$
 is identified by a triple $(\ct_u, \ct_v; \ct_e)$,
 where $\ct_e$ is the contact type of the squares defining
 $\hat{e}\in \hat{E}(\theta)$.
 If $e$ is bounded, then we have $\ct_e = \ct_u \cap \ct_v$.
\end{itemize}
Hence, for $\theta, \theta' \in \OS$,
two vertices $v \in V(\theta)$ and $v' \in V(\theta')$ are the same
if $\ct_v = \ct_{v'}$;
two edges $uv \in E(\theta)$ and $u'v' \in E(\theta')$ are the same
if $(\ct_u, \ct_v; \ct_{uv}) = (\ct_{u'}, \ct_{v'}; \ct_{u'v'})$.
We say that $\VD(\theta)$ and $\VD(\theta')$ are
\emph{combinatorially equivalent} if $\VG(\theta) = \VG(\theta')$.

\subsection{Basic properties}

For each vertex $v\in V(\theta)$, we call $v$ and
its embedding $\hat{v} \in \hat{V}(\theta)$ \emph{regular} if $|\ct_v| = 3$,
that is, its corresponding empty square is a $3$-square.
For each edge $e \in E(\theta)$,
we call $e$ and its embedding $\hat{e} \in \hat{E}(\theta)$
\emph{regular} if $|\ct_{e}| = 2$.

Each edge $e \in E(\theta)$ is called \emph{sliding}
if the two pinned sides in $\ct_{e}$ are parallel,
or \emph{growing}, otherwise.
From the properties of the $L_\infty$ Voronoi diagram,
we observe that any sliding edge is in orientation $\theta$ or $\theta + \pi/2$,
while any growing edge is in orientation $\theta + \pi/4$ or $\theta + 3\pi/4$
(modulo $\pi$).
For $e \in E(\theta)$,
we regard each growing edge to be directed in which
its corresponding square is growing.


\begin{figure}[tbh]
\begin{center}
\includegraphics[width=.99\textwidth]{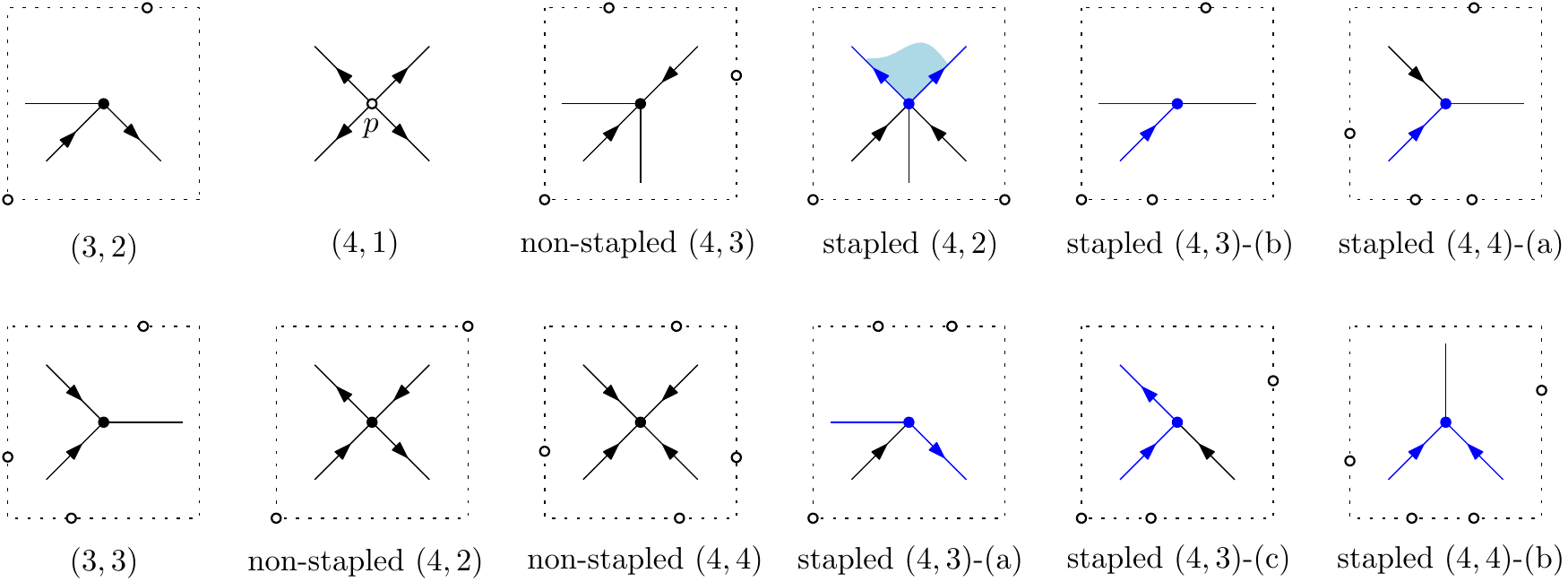}
\end{center}
\caption{Illustration of the $12$ vertex types of $\VD(\theta)$
 labeled with their type names.
 Dotted squares and small circles on them depict
 an empty square corresponding to each vertex type and its contact points.
 Dots and line segments depict vertices and all incident edges to each vertex;
 in black if regular, or in blue, otherwise.
 The arrows on edges depict the direction in which the corresponding square
 grows.
 The shaded area in the stapled $(4,2)$-type depicts a neutral face
 which does not belong to any Voronoi region $\VR_p(\theta)$ for $p\in P$.
 (We keep the above convention in every figure throughout the paper.)
 Note the right six types are stapled, while the left six are not.
 }
\label{fig:vertextype}
\end{figure}

For each vertex $v \in V(\theta)$ with $v\neq \infty$,
the local structure of the diagram $\VD(\theta)$ around $\hat{v}$
is completely determined by its contact type $\ct_v$.
From the possible contact types of $3$- and $4$-squares,
we classify all vertices in $V(\theta) \setminus \{\infty\}$
into $12$ vertex types.
\begin{lemma} \label{lem:vertextype}
 There are $12$ types for the vertices of $\VD(\theta)$
 as illustrated in \figurename~\ref{fig:vertextype}.
 For any vertex $v\in V(\theta)\setminus\{\infty\}$,
 the contact types $\ct_e$ of all edges $e \in E(\theta)$ incident to $v$
 can be obtained only from the contact type $\ct_v$
 without knowing the other incident vertex of $e$.
\end{lemma}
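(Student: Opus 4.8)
The plan is to enumerate the possible contact types of empty $3$-squares and nontrivial $4$-squares and, for each, determine the local picture of $\VD(\theta)$ around the corresponding center. First I would handle the $3$-squares. A $3$-square has a contact type $\ct_v$ with $|\ct_v|=3$, so exactly three pinned sides (one side unpinned) or two pinned sides with one of them stapled. Up to the symmetry group of the square (the dihedral group of order $8$, which acts on the side identifiers $\{\mt,\mr,\mb,\ml\}$), there are only a handful of distributions of three contact pairs onto four sides: a corner point plus one interior point on an opposite side, a corner point plus one interior point on an adjacent side, or two points stapled on one side plus one interior point on another side. This yields the six ``regular'' (non-$4$-square) vertex types. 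For each such $\ct_v$, the incident edges are obtained by deleting one pinned side from $\ct_v$ — equivalently, relaxing one contact pair — and each resulting $2$-pinned-side contact type $\ct_e\subset\ct_v$ is manifestly determined by $\ct_v$ alone, with the sliding/growing classification (and growth direction for growing edges) fixed by whether the two remaining pinned sides are parallel. This is where I would invoke the already-established structure of the $L_\infty$ Voronoi diagram: around a $3$-square center, sliding edges lie in orientation $\theta$ or $\theta+\pi/2$ and growing edges in orientation $\theta+\pi/4$ or $\theta+3\pi/4$, so the cyclic order of incident edges is forced.

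Next I would handle the nontrivial $4$-squares, using the classification already made in Section~\ref{sec:pre}: there are exactly the $(4,2)$-, $(4,3)$-, and $(4,4)$-types, in stapled and non-stapled flavours, as listed in \figurename~\ref{fig:4sq_type}. For a vertex $v$ with $|\ct_v|=4$, each incident edge corresponds to relaxing exactly one of the four contact pairs in $\ct_v$, so $\ct_e=\ct_v\setminus\{(p,\ma)\}$ for the relaxed pair $(p,\ma)$; again this is read off from $\ct_v$. I would check case by case that deleting any one contact pair leaves a valid $2$-pinned-side contact type (no side gets three pinned contacts, and the stapled side, if any, behaves as dictated by Lemmas~\ref{lem:GP1} and~\ref{lem:GP2}), so that each $4$-square vertex has exactly four incident edges whose types — and hence whose sliding/growing status and growth directions — depend only on $\ct_v$. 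Matching the six $4$-square types (non-stapled $(4,2)$, non-stapled $(4,3)$, non-stapled $(4,4)$, stapled $(4,2)$, and the appropriate stapled $(4,3)$/$(4,4)$ sub-types collapsing under symmetry) against the six ``blue'' vertex types in \figurename~\ref{fig:vertextype}, together with the six regular types, gives the claimed total of $12$.

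The main obstacle, and the step that needs the most care, is verifying that the listing of distributions of contact pairs onto the square's sides is genuinely exhaustive up to the dihedral symmetry and that no two ostensibly different distributions are actually equivalent — i.e., the bookkeeping that the count is exactly $12$ and not more or fewer. A second subtlety is making precise the claim that the incident-edge contact types are obtained ``without knowing the other incident vertex of $e$'': for a \emph{bounded} edge $e=uv$ one has $\ct_e=\ct_u\cap\ct_v$, so one must argue that $\ct_u\cap\ct_v$ coincides with $\ct_v$ minus one contact pair purely from $\ct_v$, and for an \emph{unbounded} edge (incident to $\infty$, with $\ct_\infty=\emptyset$) one must separately confirm $\ct_e$ is still determined locally; here I would appeal to the fact that $\ct_e$ records exactly the two pinned sides of the sliding/growing $2$-square family, which is a function of the starting $3$- or $4$-square's contact type. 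Once the enumeration and this local-determination argument are in place, the lemma follows by inspection of \figurename~\ref{fig:vertextype}.
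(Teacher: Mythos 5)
Your proposal miscounts the vertex types in a way that reveals a deeper misunderstanding of the construction. You claim there are ``six regular (non-$4$-square) vertex types,'' but there are only two: $(3,2)$-type and $(3,3)$-type. The distributions you list for $3$-squares, such as ``a corner point plus one interior point on an adjacent side'' and ``two points stapled on one side plus one interior point on another side,'' describe stapled $3$-squares — but a stapled $3$-square has only \emph{two} pinned sides (the stapled side counts once), so by the definition of $\hat V(\theta)$ (which requires three or four pinned sides) its center is not a vertex of $\VD(\theta)$ at all. The $6{+}6$ split you see in \figurename~\ref{fig:vertextype} is stapled versus non-stapled, not regular versus $4$-square; the $12$ types actually decompose as $2$ regular types plus $10$ non-regular types (the ten $4$-square types from Section~\ref{sec:pre}, including the trivial $(4,1)$-type). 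In particular the stapled $(4,3)$-(a,b,c)- and $(4,4)$-(a,b)-sub-types are \emph{not} collapsed by the dihedral symmetry — they are genuinely distinct and are all listed separately.

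Your treatment of the incident edges is also incorrect. You assert that for a $4$-square vertex $v$ each incident edge $e$ has $\ct_e = \ct_v\setminus\{(p,\ma)\}$ (one contact pair relaxed), yielding exactly four incident edges. Neither half of this is right. For a non-stapled $4$-square, the four incident edges have $|\ct_e|=2$ (two contact pairs removed), and those four out of $\binom{4}{2}=6$ two-element subsets of $\ct_v$ are selected by whether a locally continuous transformation of the square exists — an emptiness and geometric-feasibility check that your argument does not perform. For stapled $4$-squares, the degree is not four: the stapled $(4,2)$-type vertex has five incident edges (three regular with $|\ct_e|=2$ and two non-regular with $|\ct_e|=3$), while every stapled $(4,3)$- or $(4,4)$-type vertex has only three incident edges. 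The paper handles this by explicitly enumerating, for each of the $12$ contact types, which subsets of $\ct_v$ correspond to an actual local deformation; that case-by-case work is the content of the lemma, and your sketch replaces it with a claim (``relax one pair, get four edges'') that is false for most of the cases.
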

\begin{proof}
Consider any vertex $v\in V(\theta)$ with $v\neq \infty$
and its contact type $\ct_v$.
Let $S$ be the square with contact type $\ct_v$ in orientation $\theta$
whose center is $\hat{v} \in \hat{V}(\theta)$.
If $v$ is regular, then its contact type $\ct_v$
consists of three contact pairs and two or three contact points.
Hence, $S$ is either a $(3,2)$-square or a $(3,3)$-square.
Respectively, $v$ is of $(3,2)$-type or $(3,3)$-type.

If $v$ is non-regular, then $S$ is a $4$-square.
According to the type of $S$ as classified in \figurename~\ref{fig:4sq_type},
the type of $v$ is distinguished.
Thus, $v$ is of one of the $10$ types if it is non-regular.

For each of these $12$ cases,
we can enumerate the contact types $\ct_e$ of all incident edges $e$
by locally transforming the corresponding square $S$.
In general, for each vertex $v$ and its contact type $\ct_v$,
we enumerate all possible subsets $\ct$ of $\ct_v$, whose number is constant,
and check if there exists a locally continuous transformation
of squares with contact type $\ct$ from $S$.
Below, we show a concrete analysis for every vertex type.
This analysis is indeed lengthy and tedious, while it is worth
giving all the details for complete exposure.
\begin{itemize}
 \item Suppose that $v$ is of $(3,2)$-type, and assume that
 $\ct_v = \{(p, \mb), (p,\ml), (q,\mt)\}$ for some $p,q\in P$.
 So, $p$ lies at the bottom-left corner and $q$ lies in the relative interior
 of the top side of $S$.
 See the first case in \figurename~\ref{fig:vertextype}.
 In this case, we have three incident edges as follows.
 First, if we slide $S$ leftwards, then we lose a contact pair,
 namely, $(p, \ml)$, with its center tracing out a sliding edge 
 in $\hat{E}(\theta)$ by definition.
 Second, if we shrink $S$ keeping two contact pairs
 $(p,\mb)$ and $(p,\ml)$, then its center traces out a growing edge
 towards $p$, whose growing direction is towards $v$.
 Third, if we grow $S$ keeping two contact pairs
 $(p,\ml)$ and $(q, \mt)$, then its center traces out a growing edge
 towards the bottom-right corner of $S$.
 There are no more incident edge, and the three incident edges
 are all regular since their contact type consists of two contact pairs.
 \item If $v$ is of $(3,3)$-type, then
 we assume that $\ct_v = \{(p,\ml),(q,\mb),(r,\mt)\}$ for $p,q,r\in P$.
 See the $(3,3)$-type in \figurename~\ref{fig:vertextype}.
 As done above, we check every local transformation of $S$
 by sliding, growing, or shrinking it.
 We then observe that there are three edges $e_1,e_2,e_3$
 incident to $v$ with contact types:
 $\ct_{e_1} = \{(q,\mb),(r,\mt)\}$,
 $\ct_{e_2} = \{(p,\ml),(r,\mt)\}$, and
 $\ct_{e_3} = \{(p,\ml),(q,\mb)\}$.
 Note that $e_1$ is sliding, while $e_2$ and $e_3$ are growing
 edges whose growing direction is towards $v$.
 All the three edges are regular.
 \item If $v$ is of $(4,1)$-type, then $v$ is a point $p\in P$.
 See the $(4,1)$-type in \figurename~\ref{fig:vertextype}.
 In this case, there are four edges corresponding to
 the $(2,1)$-squares having $p$ at one corner.
 All the four edges are growing, directed outwards from $p$.
 \item If $v$ is of non-stapled $(4,2)$-type, then
 we assume that $\ct_v = \{(p,\ml),(p,\mb),(q,\mt),(q,\mr)\}$ for $p,q\in P$.
 See the non-stapled $(4,2)$-type in \figurename~\ref{fig:vertextype}.
 As done above, we check every local transformation of $S$
 by sliding, growing, or shrinking it.
 We then observe that there are four edges $e_1,e_2,e_3,e_4$
 incident to $v$ with contact types:
 $\ct_{e_1} = \{(p,\mb),(q,\mr)\}$,
 $\ct_{e_2} = \{(p,\ml),(q,\mt)\}$, 
 $\ct_{e_3} = \{(p,\mb),(p,\ml)\}$, and
 $\ct_{e_4} = \{(q,\mt),(q,\mr)\}$.
 Note that all of the four edges are growing such that
 $e_1$ and $e_2$ are directed outwards from $v$,
 while $e_3$ and $e_4$ are directed towards $v$.
 Since $S$ is empty, $e_3$ is also incident to $p$ and $e_4$ is to $q$.
 All the three edges are regular.
 \item If $v$ is of non-stapled $(4,3)$-type, then
 we assume that $\ct_v = \{(p,\ml),(p,\mb),(q,\mt),(r,\mr)\}$ for $p,q,r\in P$.
 See the non-stapled $(4,3)$-type in \figurename~\ref{fig:vertextype}.
 In this case,
 we observe that there are four edges $e_1,e_2,e_3,e_4$
 incident to $v$ with contact types:
 $\ct_{e_1} = \{(p,\mb),(q,\mt)\}$,
 $\ct_{e_2} = \{(p,\ml),(q,\mr)\}$, 
 $\ct_{e_3} = \{(p,\mb),(p,\ml)\}$, and
 $\ct_{e_4} = \{(q,\mt),(r,\mr)\}$.
 Note that $e_1$ and $e_2$ are sliding,
 while $e_3$ and $e_4$ are growing, directed towards $v$.
 Since $S$ is empty, $e_3$ is also incident to $p$.
 All the three edges are regular.
 \item If $v$ is of non-stapled $(4,4)$-type, then
 we assume that $\ct_v = \{(p_1,\ml),(p_2,\mb),(p_3,\mt),(p_4,\mr)\}$ 
 for distinct $p_1,p_2,p_3,p_4\in P$.
 See the non-stapled $(4,4)$-type in \figurename~\ref{fig:vertextype}.
 In this case,
 we observe that there are four edges $e_1,e_2,e_3,e_4$
 incident to $v$ with contact types:
 $\ct_{e_1} = \{(p_1,\ml),(p_2,\mb)\}$,
 $\ct_{e_2} = \{(p_2,\mb),(p_3,\mr)\}$, 
 $\ct_{e_3} = \{(p_3,\mr),(p_4,\mt)\}$, and
 $\ct_{e_4} = \{(p_4,\mt),(p_1,\ml)\}$.
 Note that each of the four edges is growing, directed towards $v$.
 All the three edges are regular.
 \item If $v$ is of stapled $(4,2)$-type, then we assume that
 $\ct_v = \{(p,\mb), (p,\ml), (q,\mr), (q,\mb)\}$ for $p,q\in P$,
 as for the stapled $(4,2)$-type shown in \figurename~\ref{fig:vertextype}.
 Checking all possibilities,
 we have five edges incident to $v$ with contact types:
 $\ct_{e_1}=\{(p,\ml), (q,\mr)\}$,
 $\ct_{e_2}=\{(p,\mb), (p,\ml)\}$,
 $\ct_{e_3}=\{(q,\mr), (q,\mb)\}$,
 $\ct_{e_4}=\{(p,\mb), (p,\ml), (q,\mb)\}$, and
 $\ct_{e_5}=\{(p,\mb), (q,\mr), (q,\mb)\}$.
 Note that $e_1$ is sliding and the other four are growing.
 Edges $e_2$ and $e_3$ are directed towards $v$ while
 $e_4$ and $e_5$ are directed outwards from $v$.
 The first three edges are regular while the last two are non-regular.
 \item If $v$ is of stapled $(4,3)$-(a)-type, then we assume that
 $\ct_v = \{(p,\mb), (p,\ml), (q,\mt), (r,\mt)\}$ for $p,q,r\in P$,
 as for the stapled $(4,3)$-(a)-type shown in \figurename~\ref{fig:vertextype}.
 In this case,
 there are three edges $e_1,e_2, e_3$ incident to $v$ with contact types:
 $\ct_{e_1}=\{(p,\mb), (p,\ml)\}$,
 $\ct_{e_2}=\{(p,\ml), (q,\mt), (r,\mt)\}$, and
 $\ct_{e_3}=\{(p,\mb), (q,\mt), (r,\mt)\}$.
 Note that $e_1$ is a regular and growing edge, directed towards $v$,
 $e_2$ is a non-regular growing edge, directed outwards from $v$,
 and $e_3$ is a non-regular sliding edge.
 \item If $v$ is of stapled $(4,3)$-(b)-type, then we assume that
 $\ct_v = \{(p,\mb), (p,\ml), (q,\mb), (r,\mt)\}$ for $p,q,r\in P$,
 as for the stapled $(4,3)$-(b)-type shown in \figurename~\ref{fig:vertextype}.
 In this case,
 there are three edges $e_1,e_2, e_3$ incident to $v$ with contact types:
 $\ct_{e_1}=\{(q,\mb), (r,\mt)\}$,
 $\ct_{e_2}=\{(p,\mb), (q,\mb), (r,\mt)\}$, and
 $\ct_{e_3}=\{(p,\ml), (p,\mb), (q,\mb)\}$.
 Note that $e_1$ is a regular and sliding edge,
 $e_2$ is a non-regular sliding edge, and
 $e_3$ is a non-regular growing edge, directed towards $v$.
 \item If $v$ is of stapled $(4,3)$-(c)-type, then we assume that
 $\ct_v = \{(p,\mb), (p,\ml), (q,\mb), (r,\mr)\}$ for $p,q,r\in P$.
 See the stapled $(4,3)$-(b)-type of \figurename~\ref{fig:vertextype}.
 In this case,
 there are three edges $e_1,e_2, e_3$ incident to $v$ with contact types:
 $\ct_{e_1}=\{(q,\mb), (r,\mr)\}$,
 $\ct_{e_2}=\{(p,\mb), (q,\mb), (r,\mr)\}$, and
 $\ct_{e_3}=\{(p,\ml), (p,\mb), (q,\mb)\}$.
 Note that all the three edges are growing.
 Edge $e_1$ is regular, while the other two are non-regular.
 Edge $e_2$ is directed outwards from $v$, while
 the other two are directed towards $v$.
 \item If $v$ is of stapled $(4,4)$-(a)-type, then we assume that
 $\ct_v = \{(p,\mb), (q,\mb), (r_1,\ml), (r_2,\mt)\}$ for $p,q,r_1,r_2\in P$,
 See the stapled $(4,4)$-(a)-type of \figurename~\ref{fig:vertextype}.
 In this case,
 there are three edges $e_1,e_2, e_3$ incident to $v$ with contact types:
 $\ct_{e_1}=\{(r_1,\ml), (r_2,\mt)\}$,
 $\ct_{e_2}=\{(p,\mb), (q,\mb), (r_1,\ml)\}$, and
 $\ct_{e_3}=\{(p,\mb), (q,\mb), (r_2,\mt)\}$.
 Note that $e_1$ is a regular growing edge, directed towards $v$,
 $e_2$ is a non-regular growing edge, directed towards $v$, and
 $e_3$ is a non-regular sliding edge.
 \item If $v$ is of stapled $(4,4)$-(b)-type, then we assume that
 $\ct_v = \{(p,\mb), (q,\mb), (r_1,\ml), (r_2,\mr)\}$ for $p,q,r_1,r_2\in P$,
 See the stapled $(4,4)$-(b)-type of \figurename~\ref{fig:vertextype}.
 In this case,
 there are three edges $e_1,e_2, e_3$ incident to $v$ with contact types:
 $\ct_{e_1}=\{(r_1,\ml), (r_2,\mr)\}$,
 $\ct_{e_2}=\{(p,\mb), (q,\mb), (r_1,\ml)\}$, and
 $\ct_{e_3}=\{(p,\mb), (q,\mb), (r_2,\mr)\}$.
 Note that $e_1$ is a regular sliding edge, while
 $e_2$ and $e_3$ are non-regular growing edges, directed towards $v$.
\end{itemize}
Consequently, we show that a local configuration of each vertex
of $\VD(\theta)$ can be obtained in a local way.
The details about the edges incident to a vertex of each type,
we have obtained above,
are illustrated in \figurename~\ref{fig:vertextype}.
\end{proof}

Consider all squares with contact type $\ct_e$
defining $\hat{e} \in \hat{E}(\theta)$.
If $e$ is sliding, then all these squares have the same radius;
if $e$ is growing, then their radius grows along $\hat{e}$.
The following lemma is an immediate observation.
\begin{lemma} \label{lem:edge_vertex}
 Let $e = uv \in E(\theta)$ be any bounded edge for any $\theta \in \OS$,
 and $S_u$ and $S_v$ be the empty squares in $\theta$
 with contact types $\ct_u$ and $\ct_v$, respectively.
 Then, the union $R$ of all squares in $\theta$ with contact type $\ct_e$
 is equal to $S_u \cup S_v$.
 More specifically, if $e$ is sliding, then
 $R = S_u \cup S_v$ forms a rectangle;
 if $e$ is growing, then one of $S_u$ and $S_v$ completely
 contains the other, so $R$ is a square.
\end{lemma}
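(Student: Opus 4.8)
The plan is to parametrize the one-parameter family of empty squares that defines $\hat e$ and then to split into the sliding and growing cases. By the definition of $\hat E(\theta)$, $\hat e$ is the (maximal) set of centers of empty squares in orientation $\theta$ whose contact type contains $\ct_e$, and since $e$ is bounded, $\hat e=\seg{\hat u\hat v}$ is a line segment with endpoints $\hat u,\hat v\in\hat V(\theta)$. For $x\in\hat e$ write $S_x$ for the corresponding empty square (so $S_{\hat u}=S_u$, $S_{\hat v}=S_v$, and $R=\bigcup_{x\in\hat e}S_x$), and rotate the axes so that $\theta=0$. The inclusion $S_u\cup S_v\subseteq R$ is immediate, so in both cases the real content is the reverse inclusion, together with the shape claim.

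For the sliding case I would relabel so that the two parallel pinned sides of $\ct_e$ are the left and right sides, pinned by points $p,q\in P$ with $p$ to the left of $q$; then every $S_x$ has its left side on $\{x_1=p_1\}$ and its right side on $\{x_1=q_1\}$, so the $S_x$ are vertical translates of a single axis-parallel square of side length $w:=q_1-p_1>0$. Hence $\hat e$ is a vertical segment, say from $\hat u=(c,a)$ to $\hat v=(c,a')$ with $a\le a'$, each $S_x=[p_1,q_1]\times[x_2-w/2,x_2+w/2]$, and therefore $R=[p_1,q_1]\times[a-w/2,a'+w/2]$ is a rectangle. To upgrade $S_u\cup S_v\subseteq R$ to equality I would observe that $p$ lies on the left side of both $S_u$ and $S_v$, so $p_2\in[a-w/2,a+w/2]\cap[a'-w/2,a'+w/2]$, which forces $a'-a\le w$; the $y$-intervals of $S_u$ and $S_v$ then overlap and their union is all of $[a-w/2,a'+w/2]$, giving $R=S_u\cup S_v$.

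For the growing case I would relabel so that the two (orthogonal) pinned sides are the bottom side, pinned by $q\in P$, and the left side, pinned by $p\in P$ (possibly $p=q$); then every $S_x$ has its bottom side on $\{x_2=q_2\}$ and its left side on $\{x_1=p_1\}$, so its bottom-left corner is the fixed point $(p_1,q_2)$ and $S_x=[p_1,p_1+2r(x)]\times[q_2,q_2+2r(x)]$ for its radius $r(x)$. Thus the family $\{S_x\}_{x\in\hat e}$ is linearly ordered by inclusion according to $r(x)$, and since $r$ is affine in the position of $x$ along the diagonal segment $\hat e$ (equivalently, $r$ grows monotonically along $\hat e$, as recorded just above the lemma), its maximum is attained at an endpoint, say $\hat v$; hence $S_u\subseteq S_v$ and $R=\bigcup_x S_x=S_v=S_u\cup S_v$, a square. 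The other orthogonal-pinned-side patterns (e.g.\ bottom and right) are identical up to a change of the fixed corner, and the non-regular edges with $|\ct_e|=3$ still have exactly two pinned sides, so they need no separate treatment.

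The main obstacle is the claim $R=S_u\cup S_v$ rather than just $S_u\cup S_v\subseteq R$ in the sliding case: one has to exclude a gap between the two extreme squares, which is precisely the inequality $a'-a\le w$, and this in turn is exactly the statement that a common contact point of $\ct_e$ (here $p$) must sit on the corresponding side of every square in the family, the two extremes included. Everything else is a direct computation with axis-parallel squares once the axes have been rotated to orientation $0$.
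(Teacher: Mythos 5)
Your proof is correct, but it takes a genuinely different route from the paper: the authors do not prove this lemma themselves, they simply cite Lemmas 8.1--8.5 of Boissonnat et al.\ (a general theory of Voronoi diagrams under convex distance functions) as establishing a more general statement, and declare the proof omitted. Your argument is instead a self-contained, elementary computation: rotate to orientation $0$, observe that a sliding edge fixes two parallel supporting lines so the family is a one-parameter translation of a fixed square, and that a growing edge fixes one corner so the family is nested; then the union is controlled by the two extreme squares. The one nontrivial step you correctly isolate is that, in the sliding case, the two extreme squares must overlap, and your mechanism (a pinned contact point $p$ of $\ct_e$ lies on the corresponding side of \emph{both} $S_u$ and $S_v$, forcing $|a'-a|\le w$) is exactly the right reason. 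The paper's citation-based approach buys brevity and generality (it applies to any smooth or polygonal convex distance); your approach buys transparency and independence from external machinery, and makes explicit the geometric fact that the lemma ultimately rests on --- that a single shared contact pair bounds how far the extreme squares of an edge can drift apart. One small point worth tightening: ``$r$ is affine in the position of $x$ along $\hat e$'' deserves a one-line justification (the center of a square with fixed bottom-left corner $(p_1,q_2)$ and radius $r$ is $(p_1+r,q_2+r)$, which traces a line of slope $1$ in $r$), but this is cosmetic, not a gap.
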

Boissonnat et al.\@ have proved a generalized version of the above
Lemma~\ref{lem:edge_vertex},
see Lemmas 8.1--8.5 in~\cite{bsty-vdhdcpdf-98},
so we omit its proof.

\begin{figure}[tbh]
\begin{center}
\includegraphics[width=.65\textwidth]{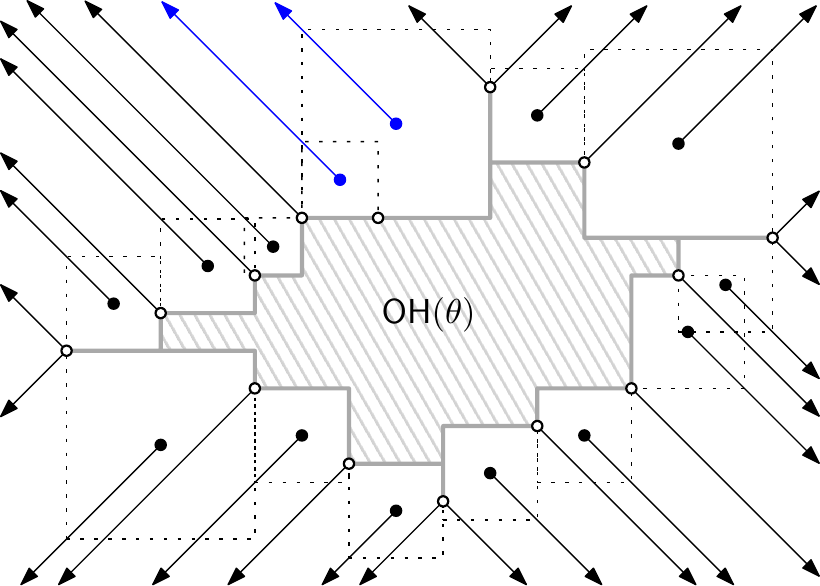}
\end{center}
\caption{Illustration to Lemma~\ref{lem:unbounded_och}.
 The orthogonal convex hull $\OCH(\theta)$ of $P$ (shaded region)
 and the four staircases (gray thick lines) describe
 all unbounded edges (arrows) and their incident vertices of $\VD(\theta)$
 (small circles and dots).
 }
\label{fig:och_unbounded}
\end{figure}

The above lemma indeed extends to the case of unbounded edges.
Consider any unbounded edge $e\in E(\theta)$ and the union $R$ of all squares
as declared in Lemma~\ref{lem:edge_vertex}.
Observe that $R$ forms an empty unbounded quadrant rotated by $\theta$
and
the empty quadrant $R$ has two or three contact points in $P$.
This tells us a relation between unbounded edges and
the orthogonal convex hull of $P$.
The \emph{orthogonal convex hull} of point set $P$ is defined to be
the minimal subset of $\Plane$ such that any vertical or horizontal line
intersects it in at most one connected component.
It is known that the orthogonal convex hull is obtained by
subtracting all empty quadrants (of four directions) from the whole plane $\Plane$
and its boundary is represented by four monotone chains,
called the \emph{staircases}.
For $\theta \in \OS$, let $\OCH(\theta)$ denote
the orthogonal convex hull of $P$ with the axes rotated by $\theta$.
See \figurename~\ref{fig:och_unbounded} and
Bae et al.~\cite{blacc-cmarchls-09} for more details on $\OCH(\theta)$,
including the precise definition of $\OCH(\theta)$ and the staircases.
\begin{lemma} \label{lem:unbounded_och}
 For any $\theta\in\OS$,
 all the unbounded edges and their incident vertices of $\VD(\theta)$
 are explicitly described by $\OCH(\theta)$, in the sense that
 if $v \in V(\theta)\setminus\{\infty\}$ is
 a vertex incident to an unbounded edge,
 then either
 \begin{enumerate}[(i)] \denseitems
  \item we have $\hat{v} = p\in P$ and $p$ coincides with 
  a vertex of $\OCH(\theta)$, or
  \item its contact points in $\ct_v$ appear consecutively
 in a staircase of $\OCH(\theta)$.
 \end{enumerate}
\end{lemma}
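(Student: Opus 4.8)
I would prove Lemma~\ref{lem:unbounded_och} by analyzing the union $R$ of all empty squares with contact type $\ct_e$ attached to an unbounded edge $e\in E(\theta)$, just as was done for bounded edges. The starting point is the observation (already noted in the excerpt, immediately after Lemma~\ref{lem:edge_vertex}) that for an unbounded edge $e$, the region $R$ is an empty \emph{unbounded quadrant} rotated by $\theta$, i.e.\ a translated and $\theta$-rotated copy of one of the four closed quadrants $\{(x,y):x\le 0,y\le 0\}$, etc. The key structural fact I would invoke is the characterization of $\OCH(\theta)$ quoted from Bae et al.~\cite{blacc-cmarchls-09}: $\Plane\setminus\OCH(\theta)$ is exactly the union of all $\theta$-rotated empty quadrants of the four directions, and the boundary $\bd\OCH(\theta)$ decomposes into four monotone staircases, each a maximal monotone chain through points of $P$.

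First I would set up the correspondence between unbounded edges and maximal empty quadrants. Each $R$ (being an empty $\theta$-rotated quadrant with contact type $\ct_e$) is contained in a \emph{maximal} such empty quadrant $Q$ in the same direction; since $R$ accounts for \emph{all} squares in $\theta$ with contact type $\ct_e$ and these grow without bound along $\hat e$, in fact $R=Q$ is already maximal, and the two pinned sides of $\ct_e$ lie on the two bounding half-lines of $Q$. So $Q$ has either two contact points (one on each bounding half-line, possibly the same point if it sits at the apex — but the apex of a maximal empty quadrant is a reflex-type point on a staircase) or, in the degenerate stapled case, three contact points (two on one half-line). The contact points of $\ct_e$ are then points of $P$ lying on $\bd Q$, which by the quadrant characterization of $\OCH(\theta)$ lie on $\bd\OCH(\theta)$, hence on one of the four staircases.

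Next I would pass from the edge $e$ to its incident vertex $v\neq\infty$. Here I use Lemma~\ref{lem:vertextype}: the contact type $\ct_v$ determines the local picture, and among the $12$ vertex types the ones incident to an unbounded edge are exactly those whose list of incident edges (computed in the proof of Lemma~\ref{lem:vertextype}) contains an unbounded one — namely the $(4,1)$-type (case~(i), where $\hat v=p\in P$), and the regular and stapled types whose corresponding square, when grown in the appropriate direction, sweeps out a maximal empty quadrant. For case~(i) I must argue that $p$ is a vertex of $\OCH(\theta)$: the four outward growing edges at $p$ are unbounded iff the corresponding maximal empty quadrants in all four directions have apex at $p$, which is precisely the condition that $p$ be one of the four ``extreme-in-a-staircase'' reflex vertices, i.e.\ a vertex of $\OCH(\theta)$ (at least one of the four growing edges is unbounded, and one checks that the contact points of $p$'s empty square together with $p$ sit consecutively on a staircase). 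For case~(ii) I would show that $\ct_v$'s contact points all lie on $\bd Q$ for the maximal empty quadrant $Q$ swept by growing $v$'s square, so they are consecutive points of a single staircase — ``consecutive'' because any point of $P$ strictly between two of them on the staircase would lie inside $Q$ or inside one of the defining squares of $e$, contradicting emptiness.

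**Main obstacle.** The routine parts are the quadrant/staircase bookkeeping; the delicate part is handling the \emph{degenerate} (stapled) vertex types and the $(4,1)$ apex case uniformly, making sure the notion ``its contact points appear consecutively in a staircase'' is exactly right when a quadrant's bounding half-line is in orientation $\theta$ or $\theta+\pi/2$ and carries two points of $P$ — i.e.\ tying the stapled side of $S_v$ to a horizontal/vertical edge of the staircase rather than to a ``step corner.'' I expect to spend most of the proof verifying, type by type from \figurename~\ref{fig:vertextype}, which vertex types can host an unbounded incident edge and matching each to the corresponding feature (staircase vertex, or consecutive run of staircase points), using the general position assumption to rule out extraneous coincidences.
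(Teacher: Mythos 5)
Your approach mirrors the paper's own proof: both pass from an unbounded edge to the corresponding empty $\theta$-rotated quadrant $R$, use the vertex-type enumeration of Lemma~\ref{lem:vertextype} to restrict to the six types that can carry an outward unbounded growing edge, treat the $(4,1)$-type (where $\hat v=p\in P$ and $p$ is a vertex of $\OCH(\theta)$) as a separate case, and for the remaining types deduce that the contact points of $\ct_v$ coincide with those of $R$ and hence appear consecutively on a staircase of $\OCH(\theta)$. Your small elaborations---noting that $R$ is itself a maximal empty quadrant, and spelling out the consecutiveness argument that the paper delegates to~\cite{blacc-cmarchls-09}---are sound but do not constitute a different route.
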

\begin{proof}
First, suppose that $v$ is incident to an unbounded edge $e\in E(\theta)$.
Since $e$ is growing, its orientation is either $\pi/4$ or $3\pi/4$.
Assume that the orientation of $e$ is $3\pi/4$ and
the corresponding empty quadrant $R$ for $e$ is unbounded upwards and leftwards.
This means that
$\ct_e$ contains two contact pairs $(p, \mb)$ and $(q, \mr)$
for some $p, q\in P$.
Note that $e$ is growing and directed outwards from $v$.
There are only six possible vertex types for $v$
having such an incident edge
by Lemma~\ref{lem:vertextype} and \figurename~\ref{fig:vertextype};
namely,
$(3,2)$-type, $(4,1)$-type, stapled and non-stapled $(4,2)$-types,
stapled $(4,3)$-(a)-type, and stapled $(4,3)$-(c)-type.
In either case, there cannot be a contact point
in the relative interior of the left side and the top side
of the corresponding square by the emptiness of $R$.
Hence, the set of contact points of $\ct_v$ is the same
as that of $R$.
See \figurename~\ref{fig:och_unbounded}.

If $v$ is of $(4,1)$-type,
then $\hat{v} = p\in P$, $\ct_e = \{(p,\mb),(p,\mr)\}$,
and no other point in $P$ lies on the boundary on $R$.
Hence, we can slide $R$ slightly rightwards or downwards,
keeping it empty.
This implies that $p$ coincides with a vertex of $\OCH(\theta)$.

Otherwise, the number of contact points in $\ct_v$ is
two or three.
Since $R$ is an empty quadrant bounded by some points in $P$,
the contact points of $R$, or equivalently,
that of $\ct_v$, appear consecutively
along a staircase of $\OCH(\theta)$~\cite{blacc-cmarchls-09}.
In particular, if $v = p\in P$, then
$p$ lies on the boundary of $\OCH(\theta)$.
%
%
%
\end{proof}


Another implication of Lemma~\ref{lem:vertextype}
is that the degree of every vertex, except $\infty\in V(\theta)$,
is at least three and at most five.
This implies the linear complexity of $\VD(\theta)$
for any $\theta\in\OS$.
\begin{lemma} \label{lem:VD_complexity}
 For any $\theta \in \OS$,
 the number of vertices, edges, and faces of $\VD(\theta)$ is $\Theta(n)$.
\end{lemma}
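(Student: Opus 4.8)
The plan is to combine the degree bound just derived from Lemma~\ref{lem:vertextype} with Euler's formula for plane graphs. First I would observe that $\VD(\theta)$, viewed as a plane graph with the distinguished vertex $\infty$, is connected: every unbounded edge is incident to $\hat\infty$, and every bounded edge lies on a path to some vertex incident to an unbounded edge, since the Voronoi regions partition the plane and each region's boundary is reached from infinity. So $\VG(\theta)$ is a connected plane graph and Euler's formula $|V|-|E|+|F|=2$ applies, where $V=V(\theta)$, $E=E(\theta)$, and $F$ is the set of faces.

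Next I would bound the degrees. By Lemma~\ref{lem:vertextype}, every vertex $v\neq\infty$ has degree at least $3$ and at most $5$ (three for the regular and several stapled types, four or five for the non-regular types). The vertex $\infty$ has degree equal to the number of unbounded edges, which by Lemma~\ref{lem:unbounded_och} is described by the four staircases of $\OCH(\theta)$; in particular it is at most $O(n)$, and it is at least $4$ (one unbounded edge per quadrant direction). Writing $d(v)$ for the degree of $v$, the handshake identity gives $2|E| = \sum_{v\in V} d(v)$. Since $3(|V|-1) \le \sum_{v\neq\infty} d(v) \le 5(|V|-1)$ and $d(\infty)\le |V|$ (a crude bound suffices), we get $3(|V|-1) \le 2|E|$, hence $|E| \ge \tfrac{3}{2}(|V|-1)$; plugging into Euler's formula yields $|F| = 2 - |V| + |E| \ge 2 - |V| + \tfrac{3}{2}(|V|-1)$, so $|F| = \Omega(|V|)$. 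For the upper direction, $2|E| = \sum_v d(v) \le 5(|V|-1) + d(\infty)$, and it remains to control $d(\infty)$ separately.

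The main obstacle, and the only nonroutine point, is pinning down $|V|$ in terms of $n$: we must show $|V(\theta)| = \Theta(n)$. The upper bound $|V| = O(n)$ follows from the classical fact that the $L_\infty$ Voronoi diagram of $n$ points has linear complexity (the diagram $\VD(\theta)$ is, for fixed $\theta$, just a rotated copy of a standard $L_\infty$ diagram plus the $O(n)$ extra edges and the $n$ point-vertices introduced in our definition), so the number of Voronoi-region faces is $O(n)$, and since each of the other ($3$- and $4$-square) vertices has bounded degree and each edge is shared by two faces, $|V|,|E| = O(n)$. For the lower bound, each point $p\in P$ contributes a $(4,1)$-type vertex $\hat v = p$ to $V(\theta)$, so $|V| \ge n$; together these give $|V| = \Theta(n)$. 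Finally I would feed $|V| = \Theta(n)$ back through the handshake/Euler estimates above: $|E| = \Theta(|V|) = \Theta(n)$ from $\tfrac{3}{2}(|V|-1) \le |E| \le \tfrac{1}{2}(5(|V|-1) + d(\infty)) = O(n)$ using $d(\infty) = O(n)$ from Lemma~\ref{lem:unbounded_och}, and then $|F| = 2 - |V| + |E| = \Theta(n)$. This completes the proof that the numbers of vertices, edges, and faces of $\VD(\theta)$ are all $\Theta(n)$.
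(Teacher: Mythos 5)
Your proof is correct but reaches $\Theta(n)$ by a different route than the paper. The paper bounds the \emph{number of faces} first, entirely by counting: there are at least $4n$ faces because each of the $4n$ contact pairs $(p,\ma)$ has its own face of $1$-square centers, and there are at most $n$ additional neutral faces because a neutral face requires a pair $p,q$ with $pq$ in orientation $\theta$ or $\theta+\pi/2$, and the general position assumption limits these to at most $n/2$ pairs contributing at most two neutral faces each. This gives faces $=\Theta(n)$ with no reference to any classical complexity bound, and then the $3$--$5$ degree bound from Lemma~\ref{lem:vertextype} plus Euler's formula pins down vertices and edges. You instead bound the \emph{number of vertices} first: $|V|\ge n$ because each $p\in P$ is a $(4,1)$-type vertex, and $|V|=O(n)$ by citing the classical linear complexity of the standard $L_\infty$ Voronoi diagram together with the observation that the paper's augmented diagram only adds $O(n)$ extra vertices and edges; you then feed through Euler and the degree bounds to get edges and faces. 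Both routes are sound. The paper's version is fully self-contained (the face count falls directly out of the definitions and the general position assumption, which matters because the diagram here is a nonstandard augmentation that also handles degenerate neutral faces), whereas yours is shorter but leans on an external result whose usual statement assumes a cleaner general position than the situations the paper's $\VD(\theta)$ is designed to accommodate; to make your appeal airtight you would need a sentence explaining why the neutral faces and the added point-vertices and anchored edges do not blow up the standard bound, which is essentially the counting the paper does anyway.
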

\begin{proof}
The number of faces in $\VD(\theta)$ is at least $4n$,
since every contact pair $(p,\ma)$ for $p\in P$ and $\ma\in\{\mt,\mr,\mb,\ml\}$
has a corresponding face.
On the other hand, there can be at most $n$ neutral faces in $\VD(\theta)$.
Any neutral face of $\VD(\theta)$, if exists,
consists of the centers of $2$-squares in orientation $\theta$
with a common contact type $\ct$ such that $\ct = \{(p,\ma),(q,\ma)\}$
for some $p, q\in P$ with $p\neq q$ and some $\ma\in\{\mt,\mr,\mb,\ml\}$.
Hence, there should exist a pair of points $p,q\in P$ such that
segment $pq$ is in orientation $\theta$ or $\theta+\pi/2$.
Note that such a pair of points $p,q$ can be involved in
at most two neutral faces as shown in the stapled $(4,2)$-type in
\figurename~\ref{fig:vertextype}.
By our general position assumption,
each point in $P$ can be involved in at most one such pair,
so
there can be at most $n/2$ such pairs.
Hence, the number of faces in $\VD(\theta)$ is $\Theta(n)$.

By Lemma~\ref{lem:vertextype},
the degree of every vertex in $V(\theta) \setminus \{\infty\}$
is between three and five.
This, together with the Euler's formula about planar graphs,
implies that the number of vertices and edges is also $\Theta(n)$.
\end{proof}

\subsection{Combinatorial changes of $\VD(\theta)$ and $4$-squares}

Let $\Sq_4$ be the set of all nontrivial $4$-squares among $P$.
By our general position assumption on $P$ and Lemma~\ref{lem:GP2},
we know that $\Sq_4$ is finite.
Let $s_4 := |\Sq_4|$ be the number of $4$-squares among $P$.
For any orientation $\theta \in \OS$,
we call $\theta$ \emph{regular}
if there is no nontrivial $4$-square in orientation $\theta$,
or \emph{degenerate}, otherwise.
Since there are only finitely many, exactly $s_4$, $4$-squares,
all orientations $\theta \in \OS$ but at most $s_4$ of them are regular.

\begin{figure}[tbh]
\begin{center}
\includegraphics[width=.9\textwidth]{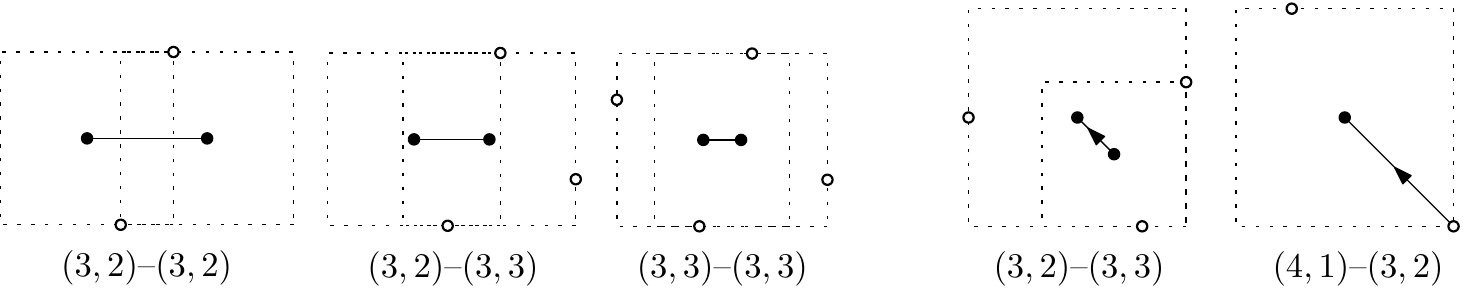}
\end{center}
\caption{Illustration of five types of bounded edges of $\VD(\theta)$
for a regular orientation $\theta\in \OS$,
depending on the types of incident vertices.
The first three are sliding and the last two are growing.
 }
\label{fig:edgetype}
\end{figure}

In a regular orientation $\theta$,
the diagram $\VD(\theta)$ has the following properties.
\begin{lemma} \label{lem:regularVD}
 For any regular orientation $\theta \in \OS$,
 every vertex in $V(\theta)$, except those in $P$ and $\infty$,
 is regular, and every edge in $E(\theta)$ is regular.
 There are five types of bounded edges, as shown in
 \figurename~\ref{fig:edgetype},
 and two types of unbounded edges.
\end{lemma}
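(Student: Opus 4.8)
The plan is to derive each claimed property of $\VD(\theta)$ for a regular orientation $\theta$ directly from the classification of vertex types in Lemma~\ref{lem:vertextype} and the definition of regularity. First I would recall that $\theta$ being regular means there is no nontrivial $4$-square in orientation $\theta$. Every non-regular vertex $v \in V(\theta)\setminus\{\infty\}$, by definition, has $|\ct_v| \geq 4$; but since $P$ is in general position, $|\ct_v| = 4$, and the corresponding empty square is a nontrivial $4$-square (the $(4,1)$-case being exactly the trivial square centered at a point of $P$). Hence in a regular orientation the only vertices with $|\ct_v|\neq 3$ are the $n$ trivial $(4,1)$-vertices sitting at the points of $P$ and the vertex $\infty$; all others are regular $(3,2)$- or $(3,3)$-vertices. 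This establishes the first sentence.

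Next I would argue every bounded edge is regular. An edge $e = uv$ with $\ct_e$ of size $\geq 3$ would, by Lemma~\ref{lem:edge_vertex} and the discussion of stapled squares in Section~\ref{sec:pre}, correspond to a family of $2$-squares sharing a stapled side, and its endpoints $u,v$ would then be non-regular vertices of one of the stapled types in \figurename~\ref{fig:vertextype} — each of which carries a nontrivial $4$-square, contradicting regularity. (One must also note that a $(4,1)$-vertex $p\in P$ is incident only to the four growing edges of type $(2,1)$, all of which have $|\ct_e| = 2$, so edges incident to points of $P$ are regular too.) Thus every edge of $E(\theta)$ is regular, i.e. $|\ct_e| = 2$.

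For the enumeration of edge types, I would observe that a bounded edge $e = uv$ is regular with both endpoints drawn from the set of vertex types that can appear in a regular orientation: $(3,2)$-type, $(3,3)$-type, or $(4,1)$-type. Since an edge is classified as sliding or growing by whether the two pinned sides in $\ct_e$ are parallel, and a $(4,1)$-vertex is incident only to growing edges directed outward, one checks the possible unordered pairs of endpoint types compatible with a common $\ct_e$: this yields exactly three sliding configurations (both endpoints $(3,2)$, or $(3,2)$ with $(3,3)$, or both $(3,3)$ — matching the sliding edge $e_1$ patterns in the $(3,2)$- and $(3,3)$-type analyses of Lemma~\ref{lem:vertextype}) and two growing configurations (a growing edge joining two regular $(3,\cdot)$-vertices, or a growing edge joining a regular vertex to a $(4,1)$-vertex $p\in P$, which is precisely the short growing spur terminating at a point of $P$). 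For the unbounded edges, Lemma~\ref{lem:unbounded_och} already tells us each is a growing half-line whose near endpoint is either a $(4,1)$-vertex at a vertex of $\OCH(\theta)$ or a regular vertex whose contact points lie consecutively on a staircase; this gives the two unbounded types. The main obstacle is purely bookkeeping: making sure the case analysis of endpoint-type pairs is exhaustive and that the resulting pictures coincide with the five cases drawn in \figurename~\ref{fig:edgetype}; I would phrase this by invoking the incidence data already tabulated in Lemma~\ref{lem:vertextype} rather than re-deriving it, so that the proof reduces to a short combination of that lemma with the definition of regularity.
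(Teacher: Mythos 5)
Your proof is correct and takes essentially the same route as the paper: derive regularity of the vertices from the absence of $4$-squares, then deduce regularity of the edges from the absence of stapled vertices via Lemma~\ref{lem:vertextype}, and finally enumerate the five bounded edge types by checking which pairs of vertex types (including $(4,1)$-vertices at points of $P$) can share a common contact type. The only difference is cosmetic — you spell out the contradiction for a would-be non-regular edge a bit more explicitly, whereas the paper compresses that step into a one-line citation of Lemma~\ref{lem:vertextype}.
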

\begin{proof}
If $\theta$ is regular, then there is no $4$-square in orientation $\theta$.
So, all vertices $v \in V(\theta) \setminus (P \cup \{\infty\})$ are regular.
Since there is no vertex $v$ with stapled contact type,
every edge in $E(\theta)$ is also regular by Lemma~\ref{lem:vertextype}.
This proves the first statement of the lemma.

There are two types of regular vertices, namely, $(3,2)$-type and $(3,3)$-type,
as shown in \figurename~\ref{fig:vertextype}.
There are four different types of edges between two regular vertices:
$(3,2)$--$(3,2)$ sliding, $(3,2)$--$(3,3)$ sliding, $(3,2)$--$(3,3)$ growing,
and $(3,3)$--$(3,3)$ sliding.
If an edge is incident to a point $p\in P \subset V(\theta)$,
then it is growing and the other endpoint of it is always a vertex
of $(3,2)$-type.
This is the fifth type for bounded edges in $E(\theta)$.
Hence, there are five types for bounded edges
as shown in \figurename~\ref{fig:edgetype}.
For unbounded edges, only two types are possible for their incident
vertices, namely, $(3,2)$-type and $(4,1)$-type,
by Lemma~\ref{lem:vertextype} and \figurename~\ref{fig:vertextype}.
\end{proof}

%
%

Consider any contact type $\ct$ with three contact pairs
and three pinned sides.
For any $\theta \in \OS$,
let $S_\ct(\theta)$ be the square in orientation $\theta$
such that $\ct$ is a subset of its contact type, regardless of its emptiness.
Note that $S_\ct(\theta)$ cannot be well defined for all $\theta \in \OS$,
whereas it is well defined in one closed interval of $\OS$
or is not defined for all $\theta \in \OS$.
For each $\theta$ for which $S_\ct(\theta)$ is well defined,
we call $\theta$ \emph{valid for $\ct$} if $S_\ct(\theta)$ is empty
and its contact type is exactly $\ct$.
Note that if $\theta$ is valid for $\ct$, then
there exists a vertex $v\in V(\theta)$ with $\ct_v = \ct$.
\begin{lemma} \label{lem:valid_interval}
 For any contact type $\ct$ with three contact pairs and three pinned sides,
 the set of all valid orientations for $\ct$ forms
 zero or more open intervals $I \subset \OS$ such that
 $S_\ct(\phi)$ is a $4$-square for any endpoint $\phi$ of $I$.
\end{lemma}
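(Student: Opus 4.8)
The plan is to analyze the function $\theta \mapsto S_\ct(\theta)$ on the closed interval $J \subseteq \OS$ where it is well defined, and to show that validity is an open condition whose failure is controlled by the two defining inequalities (emptiness and ``contact type exactly $\ct$''), each of which is governed by a finite set of sinusoidal functions. First I would recall that $\ct$ consists of three contact pairs on three distinct sides of the square; fixing which point lies on which side, the center and radius of $S_\ct(\theta)$ are rational (indeed smooth) functions of $\theta$ wherever $S_\ct(\theta)$ exists, since three non-redundant contact constraints plus the orientation pin down the remaining three degrees of freedom. Hence $S_\ct$ varies continuously on $J$, and the radius of $S_\ct(\theta)$ is a sinusoidal-type expression in $\theta$ (as in the computation in the proof of Lemma~\ref{lem:GP1}).

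Next I would characterize when $\theta \in J$ fails to be valid. There are exactly two ways: either $S_\ct(\theta)$ is non-empty (some $p \in P$ lies strictly inside it), or $S_\ct(\theta)$ is empty but has a strictly larger contact type, i.e.\ some additional point of $P$ lies on its boundary. The second event says precisely that $S_\ct(\theta)$ has at least four contact pairs; by the general position assumption it then has exactly four, so $S_\ct(\theta)$ is a $4$-square. The first event is ``open'': for each point $p \in P$ not among the contact points of $\ct$, the signed $L_\infty$-distance (with axes rotated by $\theta$) from $p$ to the center of $S_\ct(\theta)$ minus the radius is a continuous function of $\theta$, and $S_\ct(\theta)$ is empty iff all these finitely many functions are nonnegative. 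So the set of $\theta \in J$ with $S_\ct(\theta)$ empty is a closed subset of $J$; call it $F$. Validity holds exactly at those $\theta \in F$ at which no extra point lies on the boundary, i.e.\ at the interior-of-$F$ points where additionally none of the finitely many ``$p$ on the boundary of $S_\ct(\theta)$'' equations holds.

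From here the structure follows. The boundary of $F$ in $J$, together with the endpoints of $J$ itself and the finitely many $\theta$ where an extra point lands on the boundary, is a finite set $D \subseteq J$; at each point of $D$ the square $S_\ct(\theta)$ either acquires a fourth contact pair (an extra point on the boundary, or — at an endpoint of $J$ — a contact point reaching a corner, which also yields a fourth contact pair) or has just lost emptiness by a point crossing its boundary, which again means a fourth point is on the boundary. In every case $S_\ct(\phi)$ is a $4$-square for $\phi \in D$. The set of valid orientations is then $\mathrm{int}(F) \setminus (\text{finite set})$, which is a finite union of open intervals, and each endpoint $\phi$ of each such interval lies in $D$, hence $S_\ct(\phi)$ is a $4$-square. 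This gives the claimed conclusion. I would also note the degenerate possibility that $F$ has empty interior or is empty, in which case there are zero valid intervals, consistent with the statement.

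The main obstacle I anticipate is the bookkeeping at the endpoints of $J$: one must verify that when $S_\ct(\theta)$ ceases to be well defined as $\theta$ approaches an endpoint of $J$, the limiting square (if the limit exists and is a bounded square) genuinely has a fourth contact pair — typically because a contact point that was in the relative interior of a side slides to a corner, converting two pinned sides' worth of a single point into an extra pair — rather than degenerating in some other way; and one must rule out, via the general position assumption and Lemma~\ref{lem:GP2}, that infinitely many orientations are simultaneously ``borderline''. Handling the interplay of the two failure modes (emptiness loss versus contact-type enlargement) occurring at the same $\theta$ is routine once one observes both force a $4$-square, so the endpoint analysis is the only place requiring care.
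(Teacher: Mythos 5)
Your proof is correct and takes essentially the same route as the paper: both identify that validity is an open condition and that at the boundary of a valid interval a fourth contact pair must appear, either because a contact point of $\ct$ reaches a corner (making $S_\ct$ cease to exist) or because another point of $P$ arrives on the boundary. The paper phrases this as a sweep from a valid orientation to the first invalid one, while you phrase it as a set-theoretic decomposition of the validity set inside the domain $J$, but the underlying case analysis is the same.
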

\begin{proof}
If there is no valid orientation for $\ct$, then we are done.
Suppose that some $\theta_0\in\OS$ is valid for $\ct$.
Then, the square $S_\ct(\theta_0)$ is of contact type $\ct$ and is empty.
Consider the rotation of the square in one direction,
that is, $S_\ct(\theta)$ as $\theta$ continuously increases
from $\theta = \theta_0$.
It is obvious that $S_\ct(\theta)$ is still empty and has contact type $\ct$
for any $\theta$ sufficiently close to $\theta_0$,
so $\theta$ locally near $\theta_0$ is valid for $\ct$.
We continue increasing $\theta$ until we reach some $\phi\in\OS$
that is not valid for $\ct$.
This is one of the following cases:
a contact point in $\ct$ reaches a corner of $S_\ct(\phi)$
or another point in $P$ is hit by the boundary of $S_\ct(\phi)$.
In either case, it gains one more contact pair and thus
$S_\ct(\phi)$ is a $4$-square.
Note that $\phi$ is not valid for $\ct$.

It is symmetric to rotate $S_\ct(\theta)$ in the opposite direction
by decreasing $\theta$.
Thus, any valid orientation $\theta_0$ for $\ct$ lies in between
two orientations that are not valid for $\ct$
such that any $\theta$ in between them is valid for $\ct$.
This implies that the set of all valid orientations for $\ct$
indeed forms zero or more open intervals $I$ in $\OS$.
For any endpoint $\phi$ of $I$, $S_\ct(\phi)$ is a $4$-square
as observed above.
Therefore, the lemma is shown.
\end{proof}
We call each of these open intervals described in Lemma~\ref{lem:valid_interval}
a \emph{valid interval} for $\ct$.
Lemma~\ref{lem:valid_interval} also implies that
any endpoint of a valid interval for $\ct$ is a degenerate orientation.

\begin{figure}[htbp]
\begin{center}
\includegraphics[width=.80\textwidth]{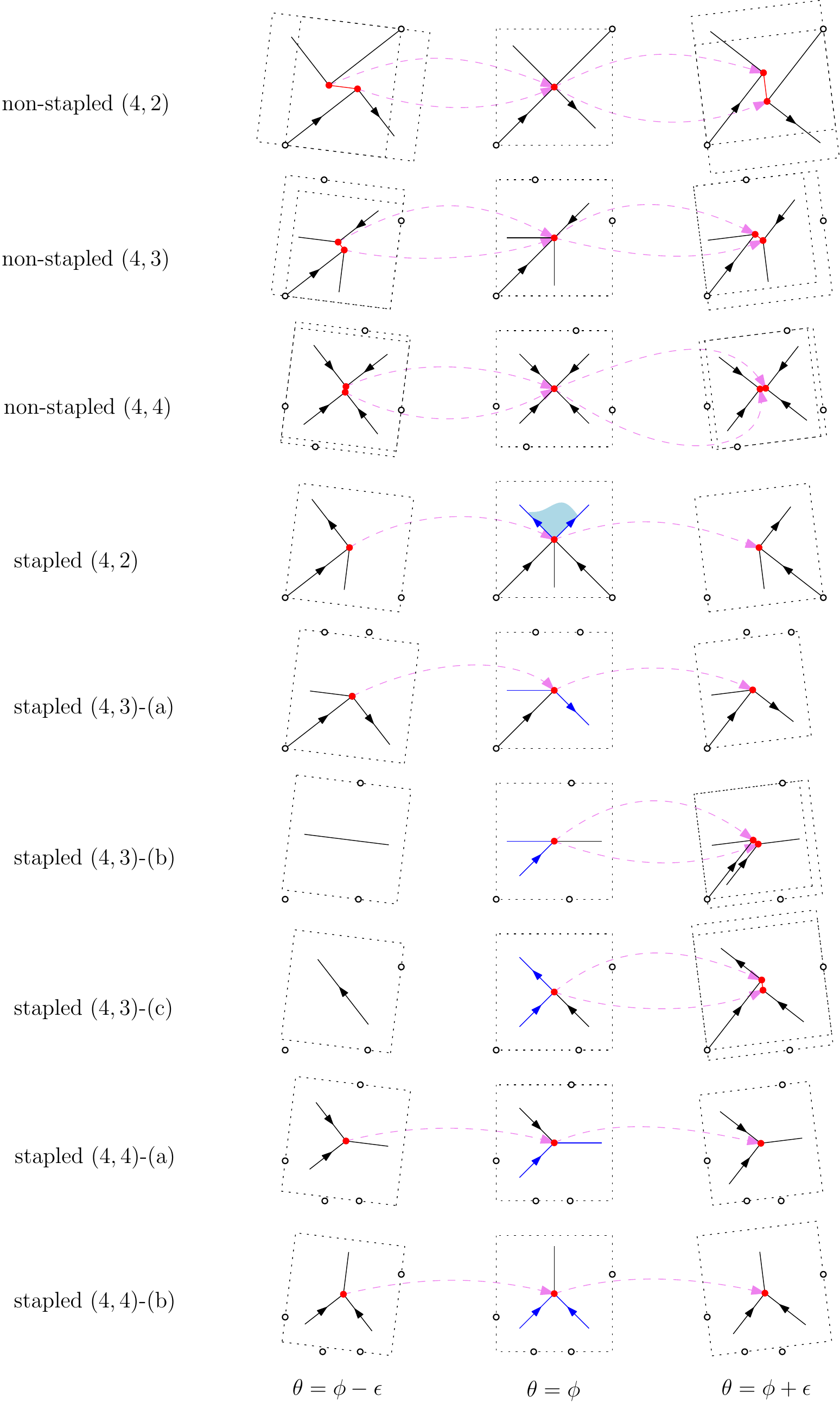}
\end{center}
\caption{Transitions between regular ($3$-squares) and non-regular vertices
 ($4$-squares).
 }
\label{fig:transition}
\end{figure}

For any degenerate orientation $\phi \in \OS$,
let $\Sq_4(\phi) \subseteq \Sq_4$ be the set of $4$-squares
whose orientation is $\phi$.
Note that $\Sq_4(\phi)$ is nonempty and consists of at most $O(n)$ squares
by Lemma~\ref{lem:VD_complexity}.
We then observe the following.
\begin{lemma} \label{lem:4sq-3sq}
 Let $S\in\Sq_4(\phi)$ be any $4$-square in orientation $\phi$
 and $\ct$ be its contact type.
 Then, there are exactly two or four contact types $\ct'$
 with three contact pairs and three pinned sides
 such that $S_{\ct'}(\phi) = S$ and
 either $\phi-\epsilon$ or $\phi+\epsilon$ is valid for $\ct'$
 for any arbitrarily small $\epsilon>0$.
 Specifically, the number of such contact types $\ct'$
 is four if $S$ is non-stapled, or two
 if $S$ is stapled.
\end{lemma}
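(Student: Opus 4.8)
The plan is to argue combinatorially, type by type, using the classification of $4$-square types from \figurename~\ref{fig:4sq_type} together with the vertex-type analysis of Lemma~\ref{lem:vertextype}. The key point is that a contact type $\ct'$ with three contact pairs and three pinned sides satisfies $S_{\ct'}(\phi)=S$ precisely when $\ct'\subset\ct$ and the three sides named in $\ct'$ are all distinct (so they truly pin three sides and determine the square). So I would first reduce the lemma to a purely set-theoretic count: how many three-element subsets $\ct'$ of the four-element set $\ct$ have the property that their three contact pairs lie on three distinct sides of $S$? Then, for each such candidate $\ct'$, I must check the second requirement, namely that $\phi$ is an \emph{endpoint} of a valid interval for $\ct'$, i.e.\ that $S_{\ct'}(\theta)$ is empty with contact type exactly $\ct'$ for $\theta=\phi-\epsilon$ or $\theta=\phi+\epsilon$. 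By Lemma~\ref{lem:valid_interval} this is equivalent to $\phi$ being a boundary point of the set of valid orientations for $\ct'$, which holds exactly when the local transformation (rotation of $S$ about $\ct'$) moves into emptiness on at least one side.

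First I would treat the non-stapled case. If $S$ is non-stapled of $(4,4)$-type, its four contact pairs involve four distinct points on four distinct sides $\{\mt,\mr,\mb,\ml\}$; dropping any one of the four pairs leaves three pairs on three distinct sides, so there are exactly $\binom{4}{3}=4$ candidate subsets $\ct'$. For the $(4,3)$ and $(4,2)$ non-stapled types one checks directly from \figurename~\ref{fig:4sq_type} that again exactly four of the three-element subsets place their pairs on three distinct sides (e.g.\ in the non-stapled $(4,2)$-type with $\ct=\{(p,\ml),(p,\mb),(q,\mt),(q,\mr)\}$, the subsets that keep three distinct sides are those omitting one of the four pairs, and all four work because no two kept pairs share a point \emph{and} a side simultaneously — actually here one must be careful, since $(p,\ml),(p,\mb)$ share the point $p$; I will verify that this is still fine because the three sides are nonetheless distinct). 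Then I would verify the validity-endpoint condition: for each of these four $\ct'$, rotating $S$ about the three contacts of $\ct'$ frees the fourth contact, and since $S$ is empty and the contact in question was the only obstruction, $S_{\ct'}(\theta)$ stays empty on one side of $\phi$. I would justify this using Lemma~\ref{lem:4sq-3sq}'s neighbor, Lemma~\ref{lem:valid_interval}, and the emptiness of $S$: the fourth point can only re-enter the square by rotating back, so one of $\phi\pm\epsilon$ is valid.

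Next, the stapled case. If $S$ is stapled, it has a stapled side, say $\mb$, carrying two distinct points $p,q$, so $(p,\mb),(q,\mb)\in\ct$; the remaining two pairs involve one or two further sides. A three-element subset $\ct'$ with three \emph{distinct} pinned sides must keep exactly one of $(p,\mb),(q,\mb)$ — keeping both gives only the bottom side plus one more, which is at most two distinct sides among the kept three pairs, so such a subset fails. Hence $\ct'$ is obtained by deleting one of $(p,\mb),(q,\mb)$, giving at most two candidates. One then checks, using the detailed edge/transformation analysis already carried out in Lemma~\ref{lem:vertextype} for the six stapled vertex types, that both of these two candidates are genuinely valid-interval endpoints: deleting $(p,\mb)$ (resp.\ $(q,\mb)$) and rotating frees that point off the bottom side into the exterior on one side of $\phi$, matching one of the ``sliding'' non-regular incident edges depicted in \figurename~\ref{fig:vertextype}. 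This yields exactly two, proving the ``stapled $\Rightarrow 2$'' half.

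I expect the main obstacle to be the bookkeeping in the stapled case, specifically ruling out that a stapled $4$-square could spuriously admit a third valid $\ct'$ and confirming that both of the two candidates really are \emph{endpoints} of valid intervals rather than, say, isolated valid orientations or non-valid ones. This requires carefully pairing each candidate $\ct'$ with one of the non-regular growing/sliding edges incident to the corresponding vertex $v$ in \figurename~\ref{fig:vertextype} and checking the orientation of rotation; the transitions are exactly those pictured in \figurename~\ref{fig:transition}, so the verification amounts to reading off, for each of the six stapled vertex types, which two incident ``$3$-square'' edges correspond to rotating the square in orientation. The non-stapled case is comparatively routine once the distinct-sides reduction is in place. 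Throughout, I would lean on Lemma~\ref{lem:GP1}, Lemma~\ref{lem:GP2}, and the general position assumption to guarantee that no degenerate coincidence (three collinear contact points, orthogonal/parallel hull edges, etc.) creates extra or fewer candidates than the generic count.
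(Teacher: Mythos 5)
Your proposal is correct and follows essentially the same approach as the paper: both proceed by a case analysis over the nine nontrivial $4$-square types (matching Figures~\ref{fig:4sq_type} and~\ref{fig:transition}), enumerating the three-element subsets $\ct' \subset \ct$ and checking which ones yield a valid orientation on one side of $\phi$. Your extra organizational step — first filtering to subsets with three \emph{distinct} pinned sides, which immediately yields the count of $4$ (non-stapled) versus $2$ (stapled) before any rotation argument — is a clean way to make explicit what the paper handles implicitly when it discards subsets with fewer than three pinned sides; the substance of the remaining verification (reading off, per type, which of $\phi\pm\epsilon$ is valid for each surviving $\ct'$) is exactly what the paper carries out.
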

\begin{proof}
If $S_{\ct'}(\phi) = S$, then the contact type of $S_{\ct'}(\phi)$
is exactly $\ct$ and it includes the three contact pairs in $\ct'$,
so we have $\ct' \subset \ct$.
Since $|\ct| = 4$, the number of such $\ct'$ is at most four.

In order to prove the lemma,
we handle every possible types for $S$ separately.
There are nine possible types for $S$ as shown in \figurename~\ref{fig:4sq_type},
except the trivial $(4,1)$ type.
For each case, we observe that there are
exactly two or four contact types $\ct'$ with the stated property
by a local transformation from $S$.
See \figurename~\ref{fig:transition}.
In the following, $\epsilon$ denotes any arbitrarily small positive real number.

First, suppose that $S\in\Sq_4(\phi)$ is a non-stapled $4$-square.
Then, every side of $S$ is pinned by its contact points,
so we can write $\ct = \{(p_1,\mt), (p_2,\mr), (p_3,\mb), (p_4,\ml)\}$
for $p_1,p_2,p_3,p_4 \in P$.
There are three different non-stapled types for $S$.
\begin{itemize} \denseitems
 \item If $S$ is a $(4,2)$-square,
 then either (1) $p_1 = p_2$ and $p_3 = p_4$, or (2) $p_1=p_4$ and $p_2=p_3$.
 Without loss of generalization, we assume the former case.
 The other case is symmetric.
 Consider all four subsets of $\ct$ with three contact pairs, namely,
 $\ct_u = \{(p_1,\mt), (p_1,\mr), (p_3,\mb)\}$,
 $\ct_v = \{(p_1,\mt), (p_3,\ml), (p_3,\mb)\}$,
 $\ct_{u'} = \{(p_1,\mt), (p_1,\mr), (p_3,\ml)\}$, and
 $\ct_{v'} = \{(p_1,\mr), (p_3,\ml), (p_3,\mb)\}$.
 We then observe that $\phi-\epsilon$ is valid for $\ct_u$ and $\ct_v$
 and $\phi+\epsilon$ is valid for $\ct_{u'}$ and $\ct_{v'}$.
 Hence, $u,v\in V(\phi-\epsilon)$ and $u', v' \in V(\phi+\epsilon)$.
 See the first row of \figurename~\ref{fig:transition}.
 Also, there is a sliding edge $e \in E(\phi-\epsilon)$ between $u$ and $v$ with
 $\ct_e = \ct_u \cap \ct_v = \{(p_1,\mt), (p_3,\mb)\}$;
 a sliding edge $e' \in E(\phi+\epsilon)$ between $u'$ and $v'$
 with $\ct_{e'} = \ct_{u'}\cap \ct_{v'} = \{(p_1,\mr), (p_3,\ml)\}$.
 \item If $S$ is a $(4,3)$-square,
 then we have four cases:
 (1) $p_1=p_2$, (2) $p_2=p_3$, (3) $p_3=p_4$, or (4) $p_4=p_1$,
 depending on which corner of $S$ has a contact point.
 Assume that $p_3 = p_4$, as the other cases are symmetric.
 Consider all four subsets of $\ct$ with three contact pairs, namely,
 $\ct_u = \{(p_1,\mt), (p_2,\mr), (p_3,\mb)\}$,
 $\ct_v = \{(p_2,\mr), (p_3,\mb), (p_3,\ml)\}$,
 $\ct_{u'} = \{(p_1,\mt), (p_2,\mr), (p_3,\ml)\}$, and
 $\ct_{v'} = \{(p_1,\mt), (p_3,\mb), (p_3,\ml)\}$.
 We then observe that $\phi-\epsilon$ is valid for $\ct_u$ and $\ct_v$
 and $\phi+\epsilon$ is valid for $\ct_{u'}$ and $\ct_{v'}$.
 See the second row of \figurename~\ref{fig:transition}.
 Also, there is a growing edge $e \in E(\phi-\epsilon)$ between $u$ and $v$ with
 $\ct_e = \ct_u \cap \ct_v = \{(p_2,\mr), (p_3,\mb)\}$;
 there is a growing edge $e' \in E(\phi+\epsilon)$ between $u'$ and $v'$
 with $\ct_{e'} = \ct_{u'}\cap \ct_{v'} = \{(p_1,\mt), (p_3,\ml)\}$.
 \item If $S$ is a $(4,4)$-square,
 then $p_1, \ldots, p_4$ are distinct.
 As above, we check all four subsets of $\ct$ with three contact pairs,
 and observe that $\phi-\epsilon$ is valid for two of them
 and $\phi+\epsilon$ is valid for the other two.
 See the third row of \figurename~\ref{fig:transition}.
\end{itemize}
Consequently, if $S$ is non-stapled,
then all the four subsets $\ct'\subset \ct$ of three contact pairs
satisfy the stated condition that 
$S_{\ct'}(\phi) = S$ and either $\phi-\epsilon$ or $\phi+\epsilon$
is valid for $\ct'$.
More precisely, since each such $\ct'$ determines a regular vertex,
two regular vertices in $V(\phi-\epsilon)$ are indeed merged into
the non-regular vertex corresponding to $S$ in $V(\phi)$
and then it splits into two other regular vertices in $V(\phi+\epsilon)$.
This indeed describes an \emph{edge flip} of the Voronoi diagram $\VD(\theta)$
as $\theta$ goes through $\phi$.

Next, we suppose that $S\in\Sq_4(\phi)$ is a stapled $(4,2)$-square,
and assume that its contact type $\ct = \{(p,\ml),(p,\mb),(q,\mb),(q,\mr)\}$
for $p,q\in P$.
See the fourth row of \figurename~\ref{fig:transition}.
The other cases are symmetric.
Checking all possible subsets of $\ct$ with three elements,
we observe that $\phi-\epsilon$ is valid only for 
$\ct_v=\{(p,\ml),(p,\mb),(q,\mr)\}$
and $\phi+\epsilon$ is valid only for 
$\ct_{v'}=\{(p,\ml),(q,\ml),(q,\mr)\}$.

Third, we suppose that $S$ is a stapled $(4,3)$-square.
As seen in \figurename~\ref{fig:4sq_type},
exactly one contact point of $S$ lies on a corner of $S$.
Without loss of generality, assume that $p\in P$ lies on
the bottom-left corner of $S$,
so $(p,\mb)$ and $(p,\ml)$ belong to its contact type $\ct$.
Let $q, r\in P$ be the other two contact points of $S$.
\begin{itemize} \denseitems
 \item If $S$ is of stapled $(4,3)$-(a)-type, then
 we assume that both $q$ and $r$ lie on the top side of $S$
 and $q$ is to the left of $r$ in orientation $\phi$.
 So, we have $\ct = \{(p,\mb), (p,\ml), (q,\mt), (r,\mt)\}$.
 See the fifth row of \figurename~\ref{fig:transition}.
 Similarly, checking all possible subsets of $\ct$ with three elements,
 we observe that $\phi-\epsilon$ is valid only for 
 $\ct_v=\{(p,\ml),(p,\mb),(q,\mt)\}$
 and $\phi+\epsilon$ is valid only for 
 $\ct_{v'}=\{(p,\ml),(p,\mb),(r,\mt)\}$.
 \item If $S$ is of stapled $(4,3)$-(b)-type, then
 we assume that $q$ lies on the bottom side of $S$
 and $r$ lies on the top side of $S$.
 So, we have $\ct = \{(p,\mb), (p,\ml), (q,\mb), (r,\mt)\}$.
 See the sixth row of \figurename~\ref{fig:transition}.
 In this case,
 we observe that $\phi+\epsilon$ is valid only for 
 $\ct_{u'}=\{(p,\ml),(q,\mb),(r,\mt)\}$ and
 $\ct_{v'}=\{(p,\ml),(p,\mb),(r,\mt)\}$,
 and that $\phi-\epsilon$ is valid for none of the four subsets
 of $\ct$ with three elements.
 Note that $u', v'\in V(\phi+\epsilon)$ and there is a growing edge
 $e'\in E(\phi+\epsilon)$ such that 
 $\ct_{e'} = \ct_{u'}\cap \ct_{v'} = \{(p,\ml),(r,\mt)\}$.
 Also, notice that the above case is one of the other symmetric cases
 of the stapled $(4,3)$-(b)-type,
 hence we may also have the reversed situation:
 $\phi-\epsilon$ is valid for two contact types that are subsets of $\ct$
 and $\phi+\epsilon$ is not for any of them.
 \item If $S$ is of stapled $(4,3)$-(c)-type, then
 we assume that $q$ lies on the bottom side of $S$
 and $r$ lies on the right side of $S$.
 So, we have $\ct = \{(p,\mb), (p,\ml), (q,\mb), (r,\mr)\}$.
 See the seventh row of \figurename~\ref{fig:transition}.
 In this case,
 we observe that $\phi+\epsilon$ is valid only for 
 $\ct_{u'}=\{(p,\ml),(q,\mb),(r,\mr)\}$ and
 $\ct_{v'}=\{(p,\ml),(p,\mb),(r,\mr)\}$,
 and that $\phi-\epsilon$ is valid for none of the four subsets
 of $\ct$ with three elements.
 Similarly to the stapled $(4,3)$-(c)-type,
 $u', v'\in V(\phi+\epsilon)$ and there is a growing edge
 $e'\in E(\phi+\epsilon)$ such that 
 $\ct_{e'} = \ct_{u'}\cap \ct_{v'} = \{(p,\ml),(r,\mr)\}$.
 Also, in a symmetric configuration, we may have the reversed situation:
 $\phi-\epsilon$ is valid for two contact types that are subsets of $\ct$
 and $\phi+\epsilon$ is not for any of them.
\end{itemize}
Hence, if $S$ is of a stapled $(4,3)$-square, then
there are exactly two subsets $\ct'\subset\ct$ of three contact pairs
such that either $\phi-\epsilon$ or $\phi+\epsilon$ is valid for $\ct'$
and $S_{\ct'}(\phi) = S$.

Finally, suppose that $S$ is a stapled $(4,4)$-square.
There are two contact points $p,q\in P$ of $S$ that lie on a common side of $S$.
Without loss of generality, assume that 
$p$ and $q$ lie on the bottom side of $S$, and that
$p$ is to the left of $q$ in orientation $\phi$.
So, $(p,\mb)$ and $(q,\mb)$ belong to its contact type $\ct$.
We have two cases.
\begin{itemize} \denseitems
 \item If $S$ is of stapled $(4,4)$-(a)-type, 
 then a third contact point $r_1\in P$ lies on the top side of $S$,
 and the last contact point $r_2\in P$ lies
 either on the left or right side of $S$.
 Assume that $r_2$ lies on the left side of $S$,
 so $\ct = \{(p,\mb),(q,\mb),(r_1,\mt),(r_2,\ml)\}$.
 See the eighth row of \figurename~\ref{fig:transition}.
 The other case is symmetric.
 As done above, consider all all possible subsets of $\ct$ with three elements
 and observe that
 $\phi-\epsilon$ is valid only for
 $\ct_v=\{(q,\mb),(r_1,\mt), (r_2,\ml)\}$ and
 $\phi+\epsilon$ is valid only for
 $\ct_v=\{(p,\mb),(r_1,\mt), (r_2,\ml)\}$.
 \item If $S$ is of stapled $(4,4)$-(b)-type, 
 then each of the other two contact points $r_1, r_2\in P$
 lies on the left and the right side of $S$, respectively.
 So, we have $\ct = \{(p,\mb),(q,\mb),(r_1,\mr),(r_2,\ml)\}$.
 See the ninth row of \figurename~\ref{fig:transition}.
 We then observe that
 $\phi-\epsilon$ is valid only for
 $\ct_v=\{(q,\mb),(r_1,\mr), (r_2,\ml)\}$ and
 $\phi+\epsilon$ is valid only for
 $\ct_v=\{(p,\mb),(r_1,\mr), (r_2,\ml)\}$.
\end{itemize}
In either case, there are exactly two subsets $\ct'\subset\ct$ 
of three contact pairs
such that either $\phi-\epsilon$ or $\phi+\epsilon$ is valid for $\ct'$
and $S_{\ct'}(\phi) = S$.

Summarizing, if $S$ is a stapled $4$-square,
then exactly two subsets $\ct'\in\ct$ satisfy the stated condition 
of the lemma.
So, the lemma is shown.
\end{proof}
\figurename~\ref{fig:transition} illustrates transitions
of a non-regular vertex, which corresponds to a $4$-square in $\Sq_4(\phi)$,
from and to regular vertices which correspond to $3$-squares,
locally at $\theta=\phi$,
so almost describes our proof for Lemma~\ref{lem:4sq-3sq}.
Observe that any non-stapled $4$-square is relevant to an \emph{edge flip}
of $\VD(\theta)$, and that
only stapled $(4,3)$-(b)- or (c)-type squares show a bit different behavior;
such a non-regular vertex suddenly appears on a regular edge
and splits to two regular vertices,
or, in the opposite way,
two regular vertices are merged into such a non-regular one
and soon disappear.

The above discussions provide us a thorough view on the relation
among the $3$-squares and the $4$-squares.
Consider the graph $\GSq$ whose vertex set is the set $\Sq_4$ of all $4$-squares
and whose edge set consists of edges $(S, S')$ for $S, S'\in\Sq_4$
such that there is a valid interval $I=(\phi,\phi')$ for some
contact type $\ct$ such that $S_\ct(\phi) = S$ and $S_{\ct}(\phi') = S'$.
The graph $\GSq$ is well defined by Lemma~\ref{lem:valid_interval}
and the degree of every vertex in $\GSq$ is
exactly two or four, depending on its type, by Lemma~\ref{lem:4sq-3sq}.
The very natural embedding of $\GSq$ is on the three-dimensional
space $\Plane\times\OS$
in such a way that each vertex $S\in\Sq_4$ is put on a point $(c, \phi)$
where $c$ and $\phi$ denote the center and the orientation of $S$,
and each edge is drawn by the locus of the corresponding regular vertex
in $\VD(\theta)$ for $\theta$ lying in its valid interval.

\section{Number of $4$-Squares} \label{sec:4sq}

In this section, we prove asymptotically tight upper and lower bounds
on the number of $4$-squares and $(4,k)$-squares for each $2\leq k \leq 4$.
Following are two main theorems.
\begin{theorem} \label{thm:4sq}
 The number of $4$-squares among $n$ points in general position is
 always between $\Omega(n)$ and $O(n^2)$.
 These lower and upper bounds are asymptotically tight.
\end{theorem}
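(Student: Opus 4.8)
The plan is to prove the two bounds separately, then exhibit point configurations realizing each asymptotically.

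\medskip

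\noindent\textbf{Upper bound $O(n^2)$.}
I would bound $s_4 = |\Sq_4|$ by a charging argument against pairs of points in $P$. Every $4$-square $S$ has a contact type $\ct$ with four contact pairs; pick two of its contact pairs $(p,\ma_p)$ and $(q,\ma_q)$ with $p\neq q$ (possible since $S$ is nontrivial, so it has at least two distinct contact points). I would argue that for each fixed unordered pair $\{p,q\}$ and each fixed choice of side-identifiers $(\ma_p,\ma_q)$, only $O(1)$ nontrivial $4$-squares can have both $(p,\ma_p)$ and $(q,\ma_q)$ among their contact pairs. Indeed, fixing two contact pairs imposes two constraints on the four-dimensional space of squares (center, radius, orientation), leaving a one-parameter family; within that family the squares with contact type $\ct \supseteq \{(p,\ma_p),(q,\ma_q)\}$ are constrained, and by the general position assumption and the analysis behind Lemma~\ref{lem:GP1} (which rules out infinitely many $4$-squares sharing a contact type) one can show that only a bounded number of orientations in that family yield an empty square that actually gains two more contact pairs. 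Since there are $O(n^2)$ pairs $\{p,q\}$ and a constant number of side-identifier combinations, this gives $s_4 = O(n^2)$, and hence also $s,s^* = O(n^2)$.

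\medskip

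\noindent\textbf{Lower bound $\Omega(n)$.}
Here I would use the Voronoi-diagram machinery developed in Section~\ref{sec:sq_VD}. Consider the graph $\GSq$ on vertex set $\Sq_4$ just defined, together with the edges corresponding to valid intervals. By Lemma~\ref{lem:valid_interval} every valid interval is bounded by degenerate orientations, i.e.\ by $4$-squares, and by Lemma~\ref{lem:4sq-3sq} every vertex of $\GSq$ has degree two or four — in particular, degree at least two. Now fix any regular orientation $\theta$ (all but finitely many are regular, by the count after Lemma~\ref{lem:regularVD}); by Lemma~\ref{lem:VD_complexity} the diagram $\VD(\theta)$ has $\Theta(n)$ regular vertices, hence $\Theta(n)$ valid intervals cover the point $\theta$. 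Each such interval is an edge of $\GSq$ with two endpoints in $\Sq_4$. Since $\GSq$ has $\Theta(n)$ edges and maximum degree $4$, it must have $\Omega(n)$ vertices, i.e.\ $s_4 = \Omega(n)$. (One subtlety: a single $3$-square contact type $\ct$ could in principle have several valid intervals, but each valid interval still contributes at least one distinct endpoint-square, and the degree bound controls multiplicity; I would phrase the counting in terms of (contact type, valid interval) pairs incident to $\theta$.) This also yields $s^* = \Omega(n)$.

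\medskip

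\noindent\textbf{Tightness.}
For the upper bound I would construct an $n$-point set with $\Theta(n^2)$ empty $4$-squares — e.g.\ two clusters of $n/2$ nearly-collinear points placed so that for $\Theta(n)$ choices from one cluster and $\Theta(n)$ from the other there is an empty square pinned by the two, a configuration modeled on the standard $\Theta(n^2)$ lower-bound constructions for maximal empty rectangles (cf.\ Naamad et al.~\cite{nlh-merp-84}), adapted to squares in arbitrary orientation and perturbed into general position. For the lower bound, a set in convex and sufficiently "spread" position (e.g.\ points on a circle or a slightly perturbed regular $n$-gon) should admit only $O(n)$ empty $4$-squares, since each arises from a bounded-degree structure over orientations as above; here I would verify directly that $\VD(\theta)$ changes only $O(n)$ times total as $\theta$ sweeps $[0,\pi/2)$.

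\medskip

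\noindent\textbf{Main obstacle.}
The hard part will be the $O(1)$-per-pair claim in the upper bound: I need to show that after fixing two contact pairs, the one-parameter family of empty squares that \emph{gains two further contact pairs} is finite with an absolute bound, not merely finite. This requires a careful case analysis over which sides the two extra contact points land on (mirroring the ten $4$-square types) and a monotonicity/sinusoidal argument — the width and height of the rotating rectangle pinned by the two fixed points vary sinusoidally in $\theta$ (as in the proof of Lemma~\ref{lem:GP1}), so the squareness equation has boundedly many solutions per branch, and the additional emptiness/contact constraints only cut this down. Getting the constant and the degenerate sub-cases right, while invoking the general position assumption correctly, is where the real work lies.
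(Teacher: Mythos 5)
Your lower-bound argument via the graph $\GSq$ is a genuinely different route from the paper's, and it works: the paper (Lemma~\ref{lem:43sq_lower}) instead gives an explicit construction — for each $p$, take its Euclidean nearest neighbor $q$ and observe the diagonal square on $pq$ is an empty $(4,2)$-square, then rotate to force a $(4,3)$-square. Both arguments are correct; yours is slicker but relies on the full Voronoi machinery (and you should be a bit more careful to say that the $\Theta(n)$ valid intervals through a fixed regular $\theta$ each have two $4$-square endpoints by Lemma~\ref{lem:valid_interval}, and that a $4$-square can be an endpoint of at most four valid intervals by Lemma~\ref{lem:4sq-3sq}, which is exactly what gives $\Omega(n)$ vertices).

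The upper bound is where the real problem is. Your key claim — that for a fixed unordered pair $\{p,q\}$ and fixed side identifiers $(\ma_p,\ma_q)$ there are only $O(1)$ empty $4$-squares with both $(p,\ma_p)$ and $(q,\ma_q)$ in their contact type — is not justified, and I do not believe it is true. Consider the sliding case $(p,\mt),(q,\mb)$: this leaves a two-parameter family (orientation and a slide position), and the empty region in that parameter space is bounded by piecewise curves whose breakpoints correspond to $4$-squares. Nothing in the general position assumption, nor in Lemma~\ref{lem:GP1} (which only rules out infinitely many $4$-squares sharing the \emph{same} four contact pairs), bounds the number of such breakpoints by an absolute constant; a priori there could be $\Theta(n)$ of them for a single bad pair, giving only $O(n^3)$. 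The paper never proves, and never needs, a per-pair $O(1)$ bound. Instead it proceeds type by type: $\Sigma_{4,2}(n)\le\binom n2$ is free; non-stapled $(4,3)$- and stapled $(4,3)$-(a)-squares are bounded by $O(n)$ \emph{per corner point} $p$ via a Davenport--Schinzel lower-envelope argument (Lemma~\ref{lem:43_upper_F}), which gives an average of $O(1)$ per pair but not a worst-case per-pair bound; stapled $(4,3)$-(b--c) are charged to adjacent stapled $(4,2)$-vertices in $\VD(\phi)$ (Lemma~\ref{lem:43bc-42}); stapled $(4,4)$-squares are charged to stapled $(4,3)$-squares (Lemma~\ref{lem:44-43}); and non-stapled $(4,4)$-squares are charged, via the graph $\GSq$ and Lemma~\ref{lem:44-other}, to the $4$-squares of other types. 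Your proposal compresses all of this into one unproved claim; this is not a detail to fill in but the heart of the theorem.

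There is also an outright error in the tightness discussion. You propose points on a circle or a perturbed regular $n$-gon as the example with only $O(n)$ many $4$-squares, but that is essentially the paper's $\Omega(n^2)$ example (Lemma~\ref{lem:4sq_upper_example} places the $p_i^\pm$ on a unit circle). Points on a convex curve produce \emph{many} $4$-squares, not few. The actual $O(n)$-example (Lemma~\ref{lem:4sq_lower_example}) places the points alternately on two nearly-flat circular arcs hugging a line, spaced so far apart along the line relative to the arcs' sagitta that no $4$-square can touch two points that are three or more indices apart; this kills $(4,4)$-squares entirely and leaves only $O(n)$ of the others.
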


\begin{theorem} \label{thm:44sq}
 Among $n$ points in general position,
 the number of empty squares whose boundary contains four points of $P$
 is between $0$ and $O(n^2)$.
 These lower and upper bounds are asymptotically tight.
\end{theorem}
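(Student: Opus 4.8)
The plan is to treat the upper bound and the two tightness assertions separately. For the upper bound $s^* = O(n^2)$ — that is, the number of empty squares whose boundary contains four points of $P$ is $O(n^2)$ — I would first observe that every such square is a $(4,k)$-square for some $k\in\{2,3,4\}$, since four points on the boundary in our general position yield exactly four contact pairs only when no point sits at a corner, but in any case the number of contact \emph{points} is at most the number of contact \emph{pairs}, so the square is a nontrivial $4$-square in the sense of Section~\ref{sec:pre}. Hence $s^* \leq s_4$, and Theorem~\ref{thm:4sq} already gives $s_4 = O(n^2)$. (The careful bookkeeping that $s \le s^*$ in the notation of the introduction, and that both are dominated by $s_4$, needs a line or two: a square with four contact points has either four contact pairs, or fewer, but a square with four contact pairs has at most four contact points, so the relevant quantity for counting is $s_4$, already controlled.) Thus the upper bound in Theorem~\ref{thm:44sq} is an immediate corollary of the upper bound in Theorem~\ref{thm:4sq}, and I would present it that way rather than redo the argument.

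For tightness of the upper bound $\Theta(n^2)$, I would exhibit a point configuration realizing $\Omega(n^2)$ empty squares with four points on the boundary. The natural candidate is a slight perturbation of two clusters of $\Theta(n)$ points placed so that one cluster lies near a horizontal segment and the other near a vertical segment, arranged so that for $\Omega(n)$ choices of "left/right" point and $\Omega(n)$ choices of "top/bottom" point there is an empty square with one point on each of the four sides; essentially the same construction that forces $\Omega(n^2)$ $4$-holes for $k=4$ (cf.\ B\'ar\'any--F\"uredi) can be calibrated so that each of the $\Omega(n^2)$ relevant $4$-tuples supports an empty square in some orientation. I would take the construction (or a close variant) from the proof of the $O(n^2)$-tightness in Theorem~\ref{thm:4sq} and check that the squares produced there actually have four points on the boundary, not merely four contact pairs; if the Theorem~\ref{thm:4sq} construction already uses non-stapled $(4,4)$-squares, this is automatic.

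For the lower bound $0$, the claim is only that it \emph{can} be zero, i.e.\ there exists a point set in general position with no empty square having four points on its boundary. This is the part that is genuinely specific to this theorem (in contrast to Theorem~\ref{thm:4sq}, whose lower bound is $\Omega(n)$), and I expect it to be the main obstacle. The strategy I would use: place the points on a strictly convex curve of very low "flatness," e.g.\ points on a circular arc of tiny angular extent but with carefully chosen spacing, or points on a parabola $\{(t, \varepsilon t^2)\}$ with $\varepsilon$ small and the $t$-values chosen generically. On such a set, any empty square with four boundary points would, by Lemma~\ref{lem:GP1}-type reasoning combined with convex position, force the four points into a configuration with an orthogonality/equal-diagonal relation (case (3) of Lemma~\ref{lem:GP1}), or force three of them onto one side of the square — and for points on a sufficiently flat convex curve one can rule out all four-point subsets admitting a circumscribing square, because a square circumscribing four points of a convex curve needs the curve to "turn" by close to the full range of directions spanned by the square's sides, which a flat arc cannot do. Concretely I would argue: if $a,b,c,d\in P$ in convex order lie on the boundary of a square $S$, then going around $\partial S$ the tangent/support directions at $a,b,c,d$ must rotate by a total of $2\pi$, forcing the four points to be spread across the curve so that some two consecutive ones span an arc on which the curve's direction changes by $\geq \pi/2 - o(1)$; choosing the curve flat enough (total direction change $< \pi/2$) makes this impossible. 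The delicate point, and the one I would spend the most care on, is handling squares in arbitrary orientation uniformly — the square's orientation is a free parameter, so I must show the obstruction holds for \emph{every} $\theta\in\OS$, which is where flatness of the arc (rather than a bound that degrades with $\theta$) is essential. Alternatively, a cleaner route avoiding the "turning" heuristic is to use Lemma~\ref{lem:GP1}'s trichotomy directly: a generic flat convex arc has no three collinear points, no two pairs of its points with orthogonal equal-length "diagonals," and one checks by an explicit angle computation that no four of its points can be inscribed in a common square at all; I would likely write the proof in this second form since each of the three conditions is concrete and checkable.
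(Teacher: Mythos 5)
Your upper bound is formally a corollary of Theorem~\ref{thm:4sq}, but this is circular given how the paper is actually organized: the quantity you need, $\Sigma_{4,4}(n) = O(n^2)$, is precisely what Lemma~\ref{lem:44-43} (stapled $(4,4)$-squares charge to stapled $(4,3)$-squares via an incident growing edge of $\VD(\phi)$) and Lemma~\ref{lem:44-other} (in $\GSq$ every non-stapled $(4,4)$-square is adjacent to a non-$(4,4)$-type $4$-square, and $\GSq$ has bounded degree) are established to prove, and only then is Theorem~\ref{thm:4sq}'s upper bound obtained by summing $\Sigma_{4,k}(n)$ over $k$. So the $(4,4)$ bound must be argued on its own, not deferred. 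Your $\Omega(n^2)$ tightness sketch is fine in spirit; Lemma~\ref{lem:4sq_upper_example} realizes it with intersection points of a fan of near-diametral lines with a circle.

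The serious flaw is in your lower-bound construction. Points on a single flat strictly convex curve do \emph{not} give $s_{4,4}=0$: they give $s_{4,4}=\Omega(n)$, via stapled $(4,4)$-squares. Take $p_i=(i,\varepsilon i^2)$ and any four consecutive points $p_k,\dots,p_{k+3}$, and let $\theta$ be the orientation of $p_{k+1}p_{k+2}$. In the frame rotated by $-\theta$ one computes $y'_{p_i}=\varepsilon\,i\,(i-2k-3)$, so $y'_{p_{k+1}}=y'_{p_{k+2}}=-\varepsilon(k+1)(k+2) < y'_{p_k}=y'_{p_{k+3}}=-\varepsilon\,k(k+3)$, while $x'_{p_k}<x'_{p_{k+1}}<x'_{p_{k+2}}<x'_{p_{k+3}}$. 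The square in orientation $\theta$ with bottom side through $p_{k+1},p_{k+2}$, left side through $p_k$, and right side through $p_{k+3}$ therefore has contact type $\{(p_k,\ml),(p_{k+1},\mb),(p_{k+2},\mb),(p_{k+3},\mr)\}$ and is empty, since every other $p_j$ has $x'$ outside $[x'_{p_k},x'_{p_{k+3}}]$. It is a stapled $(4,4)$-square. Your ``turning'' heuristic fails because the square's sides are \emph{not} support lines of the convex curve --- the curve is free to cross the left and right sides --- so the square boundary's $2\pi$ of turning transfers no lower bound to the arc's turning between consecutive contact points.

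What is actually required is the non-convexity the paper deliberately builds in. In Lemma~\ref{lem:4sq_lower_example} the $p_i$ alternate between two mirror-image near-flat arcs $A^+$ and $A^-$ with large horizontal gap $L$. The alternation ensures that for any $j\ge i+3$, the intermediate points $p_{i+1},p_{i+2}$ lie on opposite sides of the axis, so an empty square containing both $p_i$ and $p_j$ on its boundary must have two distinct parallel sides each excluding one of them; those sides are then at distance $<L+\epsilon$, while the side length must exceed $|p_ip_j|/\sqrt{2}\ge 3L/\sqrt{2}$, a contradiction. On a single convex arc, both intermediate points lie on the same side of the chord $p_ip_j$, this sandwich never forms, and --- as the computation above shows --- the squares you would need to exclude genuinely appear.
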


For any positive integers $m$ and $k \leq m$,
let $s_m(P)$ and $s_{m,k}(P)$ be the number of $m$-squares and $(m,k)$-squares,
respectively, among $P$.
We then define
\begin{align*}
  \sigma_m(n) &:= \min_{|P|=n} s_m(P),
  & \sigma_{m,k}(n) &:= \min_{|P|=n} s_{m,k}(P), \\
  \Sigma_m(n) &:= \max_{|P|=n} s_m(P), \qquad \text{ and}
  & \Sigma_{m,k}(n) &:=  \max_{|P|=n} s_{m,k}(P).
\end{align*}

Note that $\sigma_m(n) = \Sigma_m(n) = 0$ for any $m > 4$
by our general position assumption, and
$\sigma_m(n)$ and $\Sigma_m(n)$ are unbounded for $m < 4$.
Some immediate bounds are:
$s_m(P) = \sum_{1\leq k \leq m} s_{m,k}(P)$,
$\Sigma_m(n) \leq \sum_{1\leq k \leq m} \Sigma_{m,k}(n)$,
and $\sigma_m(n) \geq \sum_{1\leq k \leq m} \sigma_{m,k}(n)$.

In the following, we show the asymptotically tight bounds on these quantities
for $m=4$.
It is obvious that
$\sigma_{4,1}(n) = \Sigma_{4,1}(n) = n$ and
$\Sigma_{4,k}(n) \leq {n \choose k}$ for $2\leq k \leq 4$.
So, we have $\sigma_4(n) = \Omega(n)$ and $\Sigma_4(n) = O(n^4)$.
In this paper, we are interested  only in their asymptotic bounds,
while, however, finding the exact constants hidden in the analysis would be
another interesting combinatorial problem.
%

\subsection{Upper bounds}
Here, we prove the upper bounds $\Sigma_{4,k}(n)$ for $2\leq k\leq 4$
and $\Sigma_4(n)$ are quadratic in $n$.
We first show that there exist a family of point sets $P$ with $n = |P|$
having this many $4$-squares.

\begin{figure}[tbh]
\begin{center}
\includegraphics[width=.3\textwidth]{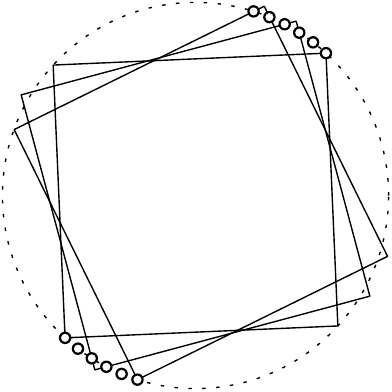}
\end{center}
\caption{Illustration of a point set $P_n$ with $\Omega(n^2)$ many $4$-squares.}
\label{fig:4sq_upper_example}
\end{figure}

\begin{lemma}\label{lem:4sq_upper_example}
 For any $n \geq 4$, there exists a set $P_n$ of $n$ points
 such that
 $s_{4,k}(P) = \Omega(n^2)$ for each $2\leq k\leq 4$
 and thus $s_4(P) = \Omega(n^2)$.
\end{lemma}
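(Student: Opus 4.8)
The plan is to exhibit an explicit point configuration of size $n$ that forces $\Omega(n^2)$ empty squares of each relevant type. The natural construction, suggested by \figurename~\ref{fig:4sq_upper_example}, is to split $P_n$ into two clusters of roughly $n/2$ points each: a ``left'' cluster $L$ placed along a short nearly vertical arc (slightly convex, bulging left) and a ``right'' cluster $R$ placed along a short nearly vertical arc bulging right, with the two arcs separated by a wide horizontal gap and with all points in strictly convex position so that the general position assumption of Section~\ref{sec:pre} holds after a generic perturbation. The key idea is that for \emph{every} choice of one point $p\in L$ and one point $q\in R$, there is an empty square (in some orientation close to $\pi/2$, i.e.\ with nearly vertical sides) that has $p$ on its left side and $q$ on its right side; since the clusters are thin and convex, such a square can be shrunk vertically until its top and bottom sides each pick up a contact, producing a $4$-square. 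This yields $\Omega(n/2 \cdot n/2) = \Omega(n^2)$ squares.

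First I would fix coordinates: put $L = \{(-1 + \delta_i, t_i) : i\}$ and $R = \{(1 - \delta'_j, t'_j): j\}$, where the $t_i, t'_j$ are $\Theta(1/n)$-spaced heights in a tiny interval around $0$ and the $\delta_i, \delta'_j = \Theta(1/n^2)$ are chosen to make each cluster strictly convex (bulging outward). For a pair $(p,q)$ with $p=(x_p,y_p)\in L$, $q=(x_q,y_q)\in R$, consider axis-parallel squares (orientation $0$ after relabeling, or more robustly orientation near $\pi/2$) whose left side lies on the vertical line $x=x_p$ and right side on $x=x_q$; the side length is then fixed at $x_q - x_p \approx 2$, which is far larger than the vertical extent of either cluster. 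Such a square, translated vertically, sweeps a horizontal slab of height $\approx 2$ that contains $p$ and $q$ on its vertical sides and is empty of all other points of $P$ precisely because the gap is wide and both clusters are thin — no third point can lie strictly inside. Then I would argue that among these vertically translatable empty squares there is a unique position (or $O(1)$ positions) where the top side and bottom side each gain a contact: slide up until the top side hits the topmost relevant point and simultaneously check the bottom; since the total cluster height is $\Theta(1/n) \ll 2$, the top side is constrained by one cluster's extreme point and the bottom side by the other, and by a slight vertical adjustment one obtains a genuine $4$-square with contact points $p, q$, plus one point on top and one on the bottom. Counting over all $(p,q)$ gives $\Omega(n^2)$ distinct $4$-squares, and distinctness follows because the pair $(p,q)$ of left/right contact points is recovered from the square.

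To get all three of $s_{4,2}, s_{4,3}, s_{4,4}$ simultaneously at $\Omega(n^2)$, I would refine the construction so that for a constant fraction of pairs $(p,q)$ the resulting $4$-square is of the $(4,4)$ type (four distinct contact points — generic when $p,q$ are interior to their clusters and the top/bottom contacts come from two further distinct points), and then degrade controlled sub-families to $(4,3)$ and $(4,2)$: e.g.\ by additionally placing two points so that a linear-in-$n$ subfamily of squares has its top or bottom contact coincide with a corner (yielding $(4,3)$), and arranging a small gadget of $\Theta(n)$ pairs whose squares have both $p$ and a top point at corners (yielding $(4,2)$). Alternatively — and more cleanly — one can observe that each of the $\Omega(n^2)$ orientations in the valid-interval graph $\GSq$ of Section~\ref{sec:sq_VD} sweeps through squares of type $(4,4)$ generically and passes through $(4,3)$ and $(4,2)$ at special moments, but since we want a \emph{single} point set realizing all three lower bounds, the explicit gadget approach is safer; I would present the $(4,4)$ bound in full and then indicate the perturbations giving $(4,3)$ and $(4,2)$.

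The main obstacle I anticipate is \textbf{emptiness together with genuine $4$-contact}: I must guarantee that the slab-square for a pair $(p,q)$ contains \emph{no} other point of $P$ in its interior while still touching four points on its boundary, and that this holds for $\Omega(n^2)$ distinct pairs rather than collapsing to $O(n)$ squares. This is controlled by the separation of scales — horizontal gap $\Theta(1)$ versus cluster width $\Theta(1/n^2)$ versus inter-point vertical spacing $\Theta(1/n)$ — so that a square wide enough to span the gap is automatically far too wide to ``see'' a third point of the opposite cluster from the side, and a careful choice of the convex arcs ensures exactly one top and one bottom contact. The second obstacle is bookkeeping under the general position assumption: the raw arithmetic construction is degenerate (many collinear or cocircular configurations), so I would finish by invoking a standard perturbation argument to move $P_n$ into general position while preserving all $\Omega(n^2)$ squares, using that each such square is \emph{strictly} empty (positive clearance from all non-contact points) and hence survives a sufficiently small perturbation, with its $4$ contacts persisting as $4$ contacts.
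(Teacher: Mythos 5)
Your construction fails for a concrete scaling reason, and the failure is fatal rather than a detail to be patched.

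You place the two clusters $L$ and $R$ on near-vertical arcs separated by a horizontal gap of $\Theta(1)$, and for a pair $(p,q)$ with $p\in L$, $q\in R$ you take an (almost) axis-parallel square whose \emph{left and right sides} pass through $p$ and $q$. That choice forces the side length to be $x_q - x_p \approx 2$. Now there are two immediate problems. First, \emph{emptiness}: the interior of such a square is the open box $(x_p,x_q)\times(c-1,c+1)$, whose vertical extent ($\approx 2$) dwarfs the vertical extent of either cluster ($\Theta(1/n)$ in your own estimate). So the interior contains \emph{every} point of $L$ with $x$-coordinate $> x_p$ and \emph{every} point of $R$ with $x$-coordinate $< x_q$, regardless of the convexity of the arcs. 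Emptiness therefore forces $p$ to be extremal in $L$ and $q$ extremal in $R$, leaving only $O(1)$ admissible pairs — not $\Omega(n^2)$. Second, even setting emptiness aside, you cannot get a fourth contact: the top and bottom sides are $\approx 2$ apart, while all of $P_n$ lives inside a strip of height $\Theta(1/n)$, so at most one of those two sides can ever touch a point. (You also say you would ``shrink the square vertically'' — a square cannot be shrunk in one direction only; the only free motion once $p$ and $q$ pin both vertical sides is vertical translation, which does not help.)

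The paper's construction avoids exactly this trap by pinning the two chosen points at \emph{opposite corners} of the square, not on opposite sides. Concretely, $p^+_i$ and $p^-_j$ are placed on a unit circle as the two intersections with the lines $y=(1+i\delta)x$, producing two tight near-antipodal arcs of $n/2$ points each. With $p^+_i$ at the top-right corner and $p^-_j$ at the bottom-left corner, the square is completely determined (contact type $\{(p^+_i,\mt),(p^+_i,\mr),(p^-_j,\mb),(p^-_j,\ml)\}$), its orientation varies slightly with $(i,j)$, and the circular placement guarantees emptiness for \emph{every} pair $(i,j)$. That yields $\Omega(n^2)$ non-stapled $(4,2)$-squares outright. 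The $(4,3)$ and $(4,4)$ bounds then follow cleanly: keep three of the four contact pairs and rotate continuously until the boundary hits a third point (giving a $(4,3)$-square with contact points $p^+_{i+1},p^+_i,p^-_j$), then drop a contact pair and continue rotating until it hits a fourth point (giving a $(4,4)$-square). Your proposed ``gadget'' approach to $(4,3)$ and $(4,2)$ is unnecessary once the corner-based construction is in place, and in its present side-based form the proposal does not produce the claimed $\Omega(n^2)$ family at all. To repair it you would have to switch to pinning $p$ and $q$ at opposite corners (removing the free translation and the oversized interior) and then argue emptiness from the circular arrangement — at which point you have essentially rederived the paper's proof.
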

\begin{proof}
Let $n \geq 4$ be an integer.
We first describe how to construct a point set $P_n$.
We assume that $n$ is even.
Let $\epsilon \in \Real$ be a sufficiently small positive real number,
and $\delta := 2\epsilon / n $.
Consider a unit circle $C$ centered at the origin $o\in \Plane$.
For each $i \in \{1, \ldots, n/2\}$,
let $p^+_i$ and $p^-_i$ be the two intersection points between $C$ and
the line $\{y = (1+i\delta)x\}$ such that $p^+_i$ is the one with positive coordinates 
and $p^-_i$ is the other.
Let $P^+ := \{p^+_1, \ldots, p^+_{n/2}\}$ and $P^-:=\{p^-_1,\ldots, p^-_{n/2}\}$.
Our point set $P_n$ is defined to be $P_n := P^+ \cup P^-$.
See Figure~\ref{fig:4sq_upper_example}.

Pick any $1\leq i, j \leq n/2-1$.
It is not difficult to see that there exists a non-stapled $(4,2)$-square $S_{ij}$
with contact type $\{(p^+_i,\mt),(p^+_i,\mr),(p^-_j,\mb),(p^-_j,\ml)\}$.
This proves that $s_{4,2}(P_n) = \Omega(n^2)$.

Let $\phi\in\OS$ be the orientation of $S_{ij}$.
We consider the square
$S_\ct(\theta)$ where $\ct = \{(p^+_i,\mr),(p^-_j,\mb),(p^-_j,\ml)\}$
for $\theta \geq \phi$.
Note that $S_\ct(\phi) = S_{ij}$.
We increase $\theta$ from $\phi$ until it hits another point $p\in P_n$ at $\phi' > \phi$.
Observe that $p = p^+_{i+1}$ by our construction of $P_n$.
Hence, $S_\ct(\phi')$ is a non-stapled $(4,3)$-square
with contact type $\{(p^+_{i+1},\mt),(p^+_i,\mr),(p^-_j,\mb),(p^-_j,\ml)\}$.
This proves that $s_{4,3}(P_n) = \Omega(n^2)$.

Finally, consider the square $S_{\ct'}(\theta)$
where $\ct' = \{(p^+_{i+1},\mt),(p^+_i,\mr),(p^-_j,\ml)\}$.
We again increase $\theta$ from $\phi'$ until $S_{\ct'}(\theta)$ hits 
a fourth point $p'\in P_n$ at $\phi'' > \phi'$.
Observe that $p' = p^-_{j+1}$ by our construction of $P_n$.
Hence, $S_{\ct'}(\phi'')$ is a non-stapled $(4,4)$-square
with contact type $\{(p^+_{i+1},\mt),(p^+_i,\mr),(p^-_{j+1},\mb),(p^-_j,\ml)\}$.
This proves that $s_{4,4}(P_n) = \Omega(n^2)$.

Therefore, we have $s_4(P_n) = \sum_{1\leq k \leq 4}$ and 
$s_{4,k}(P_n) = \Omega(n^2)$,
so the lemma is proven.
%
\end{proof}

This already proves that $\Sigma_{4,2}(n) = \Theta(n^2)$.
In the following, we prove the matching upper bounds of
$\Sigma_{4,3}(n)$ and $\Sigma_{4,4}(n)$.

\paragraph*{Upper bound of $\Sigma_{4,3}(n)$}
Any $(4,3)$-square is of one of the four types:
non-stapled $(4,3)$ and stapled $(4,3)$-(a--c) types,
see \figurename~\ref{fig:4sq_type}.
We bound the number of $(4,3)$-squares of each type, separately.

First, consider the stapled $(4,3)$-(b--c) types.
Note that if $S$ is in this case with two contact points $p,q\in P$
on its stapled side,
then one of $p$ and $q$ also lie on a corner of $S$.
\begin{lemma} \label{lem:43bc-42}
 Let $S \in \Sq_4(\phi)$ be a square of stapled $(4,3)$-(b)- or $(4,3)$-(c)-type
 with contact type $\ct$,
 and $v\in V(\phi)$ be the vertex with $\ct_v = \ct$.
 Then, there exists a stapled $(4,2)$-square $S'\in\Sq_4(\phi)$
 with contact type $\ct'$
 such that $v$ is adjacent to $v'\in V(\phi)$ with $\ct_{v'} = \ct'$
 in $\VG(\phi)$.
\end{lemma}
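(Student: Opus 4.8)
The plan is to exploit Lemma~\ref{lem:vertextype}, which tells us that all edges incident to a vertex $v$ of stapled $(4,3)$-(b)- or $(4,3)$-(c)-type are determined solely by $\ct_v$. Recall from the proof of that lemma that if $v$ is of stapled $(4,3)$-(b)-type, say with $\ct_v = \{(p,\mb),(p,\ml),(q,\mb),(r,\mt)\}$, then one of its incident edges is $e_3 \in E(\phi)$ with $\ct_{e_3} = \{(p,\ml),(p,\mb),(q,\mb)\}$, which is a non-regular growing edge directed towards $v$; analogously, for the stapled $(4,3)$-(c)-type with $\ct_v = \{(p,\mb),(p,\ml),(q,\mb),(r,\mr)\}$ we again have an incident edge $e_3$ with $\ct_{e_3} = \{(p,\ml),(p,\mb),(q,\mb)\}$. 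So first I would fix this edge $e = vv'$ of $\VG(\phi)$, with $\ct_e = \{(p,\ml),(p,\mb),(q,\mb)\}$, and argue that its other endpoint $v'$ is the vertex we are looking for.

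Next I would identify $v'$ and its contact type. Since $\ct_e$ already has three pinned sides (bottom, which is stapled by $p$ and $q$, and left, pinned by $p$), and since $e$ is a \emph{bounded} edge — its squares are nested and growing towards $v$, but growing away from $v$ the square must eventually meet a new point or corner because $P$ is finite and the square would otherwise grow without bound, contradicting emptiness unless $e$ were unbounded, which is excluded by Lemma~\ref{lem:unbounded_och} since a stapled side cannot occur on a staircase of $\OCH(\phi)$ under the general position assumption — the endpoint $v'$ exists in $V(\phi)$ and corresponds to a square $S'$ obtained from the squares along $e$ by shrinking until a fourth contact pair is gained. By Lemma~\ref{lem:edge_vertex}, the union of all squares with contact type $\ct_e$ is exactly $S_v \cup S_{v'}$, and since $e$ is growing, $S_{v'} \subset S_v = S$. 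The new contact pair in $\ct_{v'}$ must involve the top or right side (the bottom and left are already accounted for), and because the bottom side of $S'$ still carries both $p$ and $q$, the vertex $v'$ has a stapled side; checking the twelve vertex types of \figurename~\ref{fig:vertextype}, the only type with a stapled bottom side and exactly two distinct contact points total is the stapled $(4,2)$-type. Hence $S' := S_{v'}$ is a stapled $(4,2)$-square, its orientation is $\phi$ (the orientation does not change along an edge of a fixed $\VD(\phi)$), so $S' \in \Sq_4(\phi)$, and $v'$ with $\ct_{v'} = \ct'$ is adjacent to $v$ in $\VG(\phi)$ via $e$.

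The main obstacle I anticipate is verifying that the edge $e$ is genuinely bounded and that its far endpoint $v'$ is a vertex of $\VD(\phi)$ (rather than, say, $e$ running off to $\hat\infty$, which would force $v'=\infty$ and break the conclusion). This is where one must use Lemma~\ref{lem:unbounded_och}: an unbounded edge's square-union is an empty quadrant whose contact points lie consecutively on a staircase of $\OCH(\phi)$, and a stapled side — two distinct points $p,q$ on one side of the square with $pq$ in orientation $\phi$ — is exactly the degenerate collinear configuration ruled out (for the staircase) by the general position assumption and Lemma~\ref{lem:GP1}(1). So $e$ with its stapled contact type $\ct_e$ cannot be unbounded, hence $v' \neq \infty$, and $v'$ is a genuine vertex of $V(\phi)$. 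Everything else is a routine case check reading off \figurename~\ref{fig:vertextype}, with the (b)- and (c)-subcases handled symmetrically.
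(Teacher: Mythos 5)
Your overall strategy coincides with the paper's: identify the growing edge $e_3$ incident to $v$ with contact type $\{(p,\ml),(p,\mb),(q,\mb)\}$, shrink $S$ along $\hat e_3$ to its other endpoint $v'$, and use Lemma~\ref{lem:edge_vertex} to conclude that $S_{\ct_{v'}}(\phi)\subset S$ and hence that $v'$ is the stapled $(4,2)$-vertex with contact type $\{(p,\mb),(p,\ml),(q,\mb),(q,\mr)\}$. That is exactly the paper's argument.

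There is, however, a genuine flaw in your boundedness argument for $e_3$. You write that, moving away from $v$, the square is ``growing'' and ``would otherwise grow without bound''; this has the direction backwards. By Lemma~\ref{lem:vertextype}, $e_3$ is a \emph{growing edge directed towards} $v$, so moving away from $v$ the square \emph{shrinks}. The reason $e_3$ is bounded is simply that every unbounded growing edge is directed \emph{outward} from its finite endpoint (see the proof of Lemma~\ref{lem:unbounded_och}), whereas $e_3$ is directed inward towards $v$; a shrinking square cannot converge to an empty quadrant. Your alternative justification --- that a stapled side cannot occur on a staircase of $\OCH(\phi)$, so Lemma~\ref{lem:unbounded_och} rules out unboundedness --- is not correct: two consecutive staircase points $p,q$ with $pq$ in orientation $\phi$ are exactly what an outer align event is, and the general position assumption does not forbid this. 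So that route would not close the gap.

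A second, smaller gap: you assert that $v'$ has ``exactly two distinct contact points total'' without deriving it. The derivation, which the paper gives, is that $S_{\ct_{v'}}(\phi)\subset S$ and $S$ is empty, so contact points of $S_{\ct_{v'}}(\phi)$ lie among $\{p,q,p'\}$; and since $p'$ lies on the top or right side of $S$ strictly outside the smaller square $S_{\ct_{v'}}(\phi)$, the only candidates are $p$ and $q$. Only with this in hand does the case check against \figurename~\ref{fig:vertextype} single out the stapled $(4,2)$-type. With these two repairs your proof matches the paper's.
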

\begin{proof}
We shrink the square $S$ towards the corner on which a contact point lies.
Without loss of generality, we assume that $p\in P$ lies on the bottom-left corner
of $S$, and another contact point $q\in P$ lies on the bottom side of $S$.
We thus have $\ct = \{(p,\mb),(p,\ml),(q,\mb), (p',\ma)\}$
for some $p'\in P \setminus\{p,q\}$ and  some $\ma \in \{\mt, \mr\}$.
Note that if $\ma = \mt$, then $S$ is of stapled $(4,3)$-(b)-type;
if $\ma = \mr$, then $S$ is of stapled $(4,3)$-(b)-type.
See \figurename~\ref{fig:vertextype}.

Consider the largest empty square $S(c)$ centered at $c\in\Plane$
in orientation $\phi$.
Notice that there is a growing edge $e =vp \in E(\phi)$ between $v$ and $p$
with $\ct_e = \{(p,\mb),(p,\ml),(q,\mb)\}$ and its growing direction
is towards $v$.
We shrink $S$ by considering $S(c)$
as $c$ continuously moves from $\hat{v}$ along the edge $\hat{e}$.
Note that $S = S(\hat{v})$ and the contact type of $S(c)$ is $\ct_e$
if $c$ lies in the relative interior of $\hat{e}$.
We stop $c$ when $S(c)$ gains a fourth contact pair, so $c$ reaches
the other vertex $v'\in V(\phi)$ incident to $e$.
By Lemma~\ref{lem:edge_vertex},
we have $S(\hat{v'}) \subset S=S(\hat{v})$,
so the new contact pair does not involve points other than $p$ and $q$.
Hence, $\ct_{v'} = \{(p,\mb),(p,\ml),(q,\mb),(q,\mr)\}$
and thus $S(\hat{v'})$ is a stapled $(4,2)$-square.
\end{proof}
From Lemma~\ref{lem:vertextype}, we know that each vertex $v'\in V(\phi)$
of stapled $(4,2)$-type has two growing edges whose growing direction
is outwards from $v'$.
Hence, for each vertex of stapled $(4,2)$-type,
there can be at most two adjacent vertices whose corresponding square
is larger and has three contact points.
This, together with Lemma~\ref{lem:43bc-42}, implies that
the number of stapled $(4,3)$-(b)- and (c)-type squares
is at most twice the number of stapled $(4,2)$-squares,
which is $O(n^2)$.

Next, we consider the other two types of $(4,3)$-squares:
non-stapled $(4,3)$-type and stapled $(4,3)$-(a)-type.
Consider any $(4,3)$-square $S\in\Sq_4(\phi)$
whose type is one of the above two.
Without loss of generality, assume that
a contact point $p\in P$ lies on the bottom-left corner of $S$.
Then, regardless of its specific type,
the other two contact points $q_1,q_2\in P$ lie on either
the top or the right side of $S$.
So, $\ct = \{(p,\mb),(p,\ml),(q_1,\ma_1),(q_2,\ma_2)\}$
for $\ma_1,\ma_2\in\{\mt,\mr\}$.
See \figurename~\ref{fig:4sq_type}.
Notice that $q_1$ and $q_2$ are two equidistantly closest points from $p$
under the distance function $d_\phi$
among those points in the quadrant with apex $p$ in orientation $\phi$.

In the following, we count the number of those pairs $(q_1, q_2)$
with the discussed property for each fixed $p\in P$.
This bounds the number of those $(4,3)$-squares with $p$
on the bottom-left corner.
More precisely,
let $\ell_\theta$ be the upward half-line from $p$
in orientation $\theta\in [0,\pi)$,
and $\cone_\theta$ be the cone with apex $p$ and angle span $\pi/4$
defined by two half-lines $\ell_\theta$ and $\ell_{\theta+\pi/4}$.
Then, for any $\theta \in \OS$,
consider the two subsets
\[Q_\mr(\theta):= P\setminus\{p\} \cap \cone_\theta \quad \text{ and }
\quad Q_\mt(\theta):= P\setminus\{p\}\cap \cone_{\theta+\pi/4}.\]
For any $q \in P\setminus\{p\}$, define a function $f_q \colon \OS \to \Real$
to be
\[ f_q(\theta) := \begin{cases}
          	d_\theta(p,q) & \text{if } q\in Q_\mr(\theta)\cup Q_\mt(\theta) \\
          	\infty & \text{otherwise}
          		  \end{cases}.
\]
Then, our task is reduced to find the complexity of
the lower envelope $F$ of the functions $f_q$ for all $q\in P\setminus\{p\}$.
We divide the quadrant with apex $p$ into two cones $\cone_\theta$ and
$\cone_{\theta+\pi/4}$ since $d_\theta(p,q)$ is represented by
a sinusoidal function in one of the two cones.
Specifically, we have
$d_\theta(p,q) = |pq|\cdot\cos(\alpha-\theta)$
if $q\in Q_\mr(\theta)$;
and $d_\theta(p,q) = |pq|\cdot\sin(\alpha-\theta)$,
where $|pq|$ denotes the Euclidean distance between $p$ and $q$
and $\alpha$ denotes the orientation of segment $pq$.

Hence, each $f_q$ consists of exactly two sinusoidal functions
of the same period $2\pi$.
Since any two such sinusoidal curves can cross at most once in domain
$\OS=[0,\pi/2)$,
The complexity of their lower envelope $F(\theta) = \min_{q} f_q(\theta)$
is at most $O(n\alpha(n))$
by the theory of Davenport--Schinzel sequences~\cite{sa-dsstga-95}.
Fortunately, the domains on which the curves in our hand are defined
have a certain structure
so that it suffices to reduce the complexity bound down to $O(n)$
by a trick similar to that
which has been used in Hershberger~\cite{h-fuenls-89}.
\begin{lemma}\label{lem:43_upper_F}
 The complexity of the lower envelope $F$ of functions $f_q$
 for all $q\in P\setminus\{p\}$ is bounded by $O(n)$.
\end{lemma}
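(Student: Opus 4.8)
The plan is to exploit the special structure of the domains on which the two sinusoidal branches of each $f_q$ are defined, following the Hershberger-style argument. Recall that each $f_q$ is the concatenation of two sinusoidal pieces, one with value $|pq|\cos(\alpha-\theta)$ on the set $\{\theta : q\in Q_\mr(\theta)\}$ and one with value $|pq|\sin(\alpha-\theta)$ on the set $\{\theta : q\in Q_\mt(\theta)\}$, where $\alpha$ is the orientation of $pq$; outside of these the function is $+\infty$. The key observation is that for a fixed point $q$ with $|pq|$ fixed, the condition $q\in Q_\mr(\theta)$ means $\theta$ ranges over an interval of length exactly $\pi/4$ (namely $\alpha-\pi/4 \le \theta \le \alpha$, modulo $\pi/2$), and similarly $q\in Q_\mt(\theta)$ gives an adjacent interval of length $\pi/4$; so each $f_q$ is finite on a single arc of $\OS$ of total length $\pi/2$, i.e. on all of $\OS$ minus a point — but crucially on the $Q_\mr$ arc it is a \emph{decreasing} sinusoid-cosine branch and on the $Q_\mt$ arc an \emph{increasing} sine branch, each monotone on its arc of length $\pi/4$.

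First I would split $\OS$ into the two half-length pieces according to which branch is active, or more carefully: reparametrize so that on the $Q_\mr$-branch we are looking at the lower envelope of a family of translated copies of $\cos$ over subintervals, and on the $Q_\mt$-branch a family of translated copies of $\sin$. The crucial structural fact is that within the $Q_\mr$ region the curves $|pq|\cos(\alpha_q - \theta)$ for different $q$ have a fixed left endpoint behavior: the left endpoint of $q$'s domain is $\theta = \alpha_q - \pi/4$ and at that endpoint $f_q = |pq|\cos(\pi/4) = |pq|/\sqrt2$, while at the right endpoint $\theta=\alpha_q$ we have $f_q = |pq|$. So the branch starts at a lower value and increases; between two such curves there is at most one crossing, and one can argue (as Hershberger does for the analogous one-sided structure) that the sequence of curves appearing on the lower envelope, read in order of increasing $\theta$, never revisits a curve that has once dropped off — giving a Davenport–Schinzel sequence of order $2$ with the ``no $abab$'' pattern enforced by the domain structure, hence linear complexity. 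The same argument applies verbatim to the $Q_\mt$-branch.

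The main obstacle, and the step I expect to need the most care, is justifying the ``no re-appearance'' claim rigorously: I would show that if curve $f_q$ leaves the lower envelope at some $\theta_0$ (overtaken by $f_{q'}$), then because of the monotonicity and the ordering of domain endpoints, $f_q$ cannot dip back below $f_{q'}$ for $\theta>\theta_0$ while both are still defined — this is where the geometry of $\cone_\theta$ and the restriction $|\alpha_q - \alpha_{q'}| < \pi/4$ (forced by both being simultaneously in the same cone at the crossing) is used. Once this is established for each of the two branches separately, the lower envelope $F$ over all of $\OS$ is obtained by taking, at each $\theta$, the minimum of the two branch-envelopes, and its complexity is $O(n)+O(n)+O(1)=O(n)$. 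I would also remark that the constant function $+\infty$ pieces do not interfere since $F(\theta)<\infty$ for every $\theta$ (the largest empty square centered at some point of the quadrant always exists). This completes the bound, and since the number of breakpoints of $F$ bounds the number of distinct equidistant-nearest-pairs $(q_1,q_2)$ — each such pair corresponds to a vertex of $F$ — we conclude that the number of non-stapled $(4,3)$- and stapled $(4,3)$-(a)-type squares with $p$ at the bottom-left corner is $O(n)$; summing over $p\in P$ gives the desired $O(n^2)$.
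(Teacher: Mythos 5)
There is a genuine gap in your proposal: the decomposition you choose --- into a ``cosine family'' (the $Q_\mr$-branches) and a ``sine family'' (the $Q_\mt$-branches) --- does not provide a common anchor, and yet a common anchor in the $\theta$-direction is exactly the ingredient Hershberger's technique needs. You correctly note that the cosine branch for $q$ has domain $[\alpha_q - \pi/4,\alpha_q]$ and that its \emph{value} at the left endpoint is always $|pq|/\sqrt2$; but Hershberger's lemma does not ask for a uniform value at the left endpoints, it asks for the domains to share a common left (or right) \emph{position}. The cosine-branch domains slide with $q$, so the ``no re-appearance'' / ``no $abab$'' claim you rely on is false. Concretely, take two cosine branches with $\delta := \alpha_b - \alpha_a \in (0,\pi/4)$ and pick $|pb|/|pa|$ in the nonempty range
\[
  \frac{1}{\cos\delta} \;<\; \frac{|pb|}{|pa|} \;<\; \sqrt2\,\cos\!\bigl(\tfrac{\pi}{4}-\delta\bigr),
\]
which is nonempty since $\sqrt2\cos(\pi/4-\delta)\cos\delta > 1$ for $0<\delta<\pi/4$. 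Then $f_b$ appears below $f_a$ when $b$'s domain opens at $\alpha_b - \pi/4$, the two cross once inside $(\alpha_b-\pi/4,\alpha_a)$ with $a$ going underneath, and $b$ reappears when $a$'s domain closes at $\alpha_a$: the cosine envelope alone realizes the pattern $a\,b\,a\,b$. So the cosine family already violates Davenport--Schinzel order two, and the best generic bound your decomposition gives is the order-three bound $O(n\alpha(n))$, not $O(n)$.

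The paper groups the curve pieces differently, and that is the missing idea. Since each cone has span $\pi/4$, the domain of every partial sinusoid is an interval of length at most $\pi/4$ inside $\OS = [0,\pi/2)$, and the cone geometry forces each such domain to contain one of the three values $0$, $\pi/4$, or $\pi/2-\epsilon$. Partition the (at most $2n$) pieces into $\Gamma_0$, $\Gamma_{\pi/4}$, $\Gamma_{\pi/2}$ accordingly: all pieces in $\Gamma_0$ start at $\theta=0$, all in $\Gamma_{\pi/2}$ end at $\pi/2$, and all in $\Gamma_{\pi/4}$ pass through $\pi/4$ (so cutting there gives two families with a common endpoint each). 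Each family now satisfies the hypotheses of Hershberger's lemma and its envelope has $O(n)$ complexity; the overall envelope $F$ is the pointwise minimum of three $O(n)$-piece envelopes and remains $O(n)$. If you want to keep your outline, replace ``group by branch'' with ``group by anchor value'' --- the rest of your argument then goes through as intended.
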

\begin{proof}
Consider the graph of $f_q$ drawn in space $\OS\times \Real$.
As discussed above, the graph of $f_q$ consists of
at most two sinusoidal curves.
Let $\Gamma$ be the union of all these curves.
Note that $\Gamma$ consists of at most $2n$ curves,
any two of which cross at most once.
For each $\gamma \in \Gamma$, let $D_\gamma \subset \OS$ be
the domain of the corresponding partial sinusoidal function.

Consider any $\gamma\in\Gamma$.
Note that the length of interval $D_\gamma \subset \OS$ is at most $\pi/4$
by definition.
Thus, this is exactly one of three cases:
either $0\in D_\gamma$, $\pi/4 \in D_\gamma$, or $\pi/2-\epsilon\in D_\gamma$
for any arbitrarily small positive $\epsilon>0$.
Let $\Gamma_0, \Gamma_{\pi/4}, \Gamma_{\pi/2} \subseteq \Gamma$
be the sets of those curves in $\Gamma$
whose domain contains $0$, $\pi/4$, and $\pi/2-\epsilon$, respectively.
Then, the three sets form a disjoint partition of
$\Gamma=\Gamma_0 \cup \Gamma_{\pi/4}\cup \Gamma_{\pi/2}$.

Let $F_\theta$ be the lower envelope of curves in $\Gamma_\theta$
for $\theta \in \{0, \pi/4, \pi/2\}$.
Since all the curves in $\Gamma_0$ start at the same $\theta = 0$
and any two of them cross at most once,
their lower envelope $F_0$ corresponds to
the Davenport--Schinzel sequence of order two,
so the complexity of $F_0$ is bounded by $O(n)$.
The same argument can be found in Hershberger~\cite[Lemma 3.1]{h-fuenls-89}.
Similarly, the complexity of $F_{\pi/4}$ and $F_{\pi/2}$ is also $O(n)$.
In particular, for $F_{\pi/4}$, we cut each curve in $\Gamma_{\pi/4}$
at $\theta = \pi/4$ and separately consider those to the left and to the right
of $\theta = \pi/4$.

Finally, observe that the lower envelope $F$ of functions $f_q$
corresponds to the lower envelope of $F_0$, $F_{\pi/4}$, and $F_{\pi/2}$.
Since there are at most three curves in $F_0 \cup F_{\pi/4} \cup F_{\pi/2}$
defined at every $\theta \in \OS$,
the complexity of $F$ remains $O(n)$.
\end{proof}
This proves that the number of non-stapled $(4,3)$-squares
and stapled $(4,3)$-(a)-squares among $P$
is at most $O(n^2)$.
Combining the above discussions about stapled $(4,3)$-(b--c) squares
and Lemma~\ref{lem:4sq_upper_example},
we conclude that $\Sigma_{4,3}(n) = \Theta(n^2)$.

\paragraph*{Upper bound of $\Sigma_{4,4}(n)$}
Now, we prove the matching upper bound on the number of $(4,4)$-squares.
Any $(4,4)$-square is one of the three types:
non-stapled $(4,4)$ and stapled $(4,4)$-(a--b) types.
See \figurename~\ref{fig:4sq_type}.

\begin{lemma} \label{lem:44-43}
 Let $S \in \Sq_4(\phi)$ be a stapled $(4,4)$-square with contact type $\ct$,
 and $v\in V(\phi)$ be the vertex with $\ct_v = \ct$.
 Then, there exists a stapled $(4,3)$-square $S'\in\Sq_4(\phi)$
 with contact type $\ct'$
 such that $v$ is adjacent to $v'\in V(\phi)$ in $\VG(\phi)$
 with $\ct_{v'} = \ct'$.
\end{lemma}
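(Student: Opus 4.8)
The plan is to imitate the proof of Lemma~\ref{lem:43bc-42}: from the vertex $v$ I would follow a suitable non-regular edge of $\VG(\phi)$ in the direction in which the corresponding square shrinks, and identify its other endpoint as the desired stapled $(4,3)$-vertex.

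First, let $p,q\in P$ be the two points on the stapled side of $S$, with $p$ to the left of $q$ in orientation $\phi$ (so $(p,\mb),(q,\mb)\in\ct$), and let $r_1,r_2\in P$ be the remaining two contact points. Inspecting the two sub-types of stapled $(4,4)$-squares in \figurename~\ref{fig:4sq_type}, one may assume after a mirror reflection that one of $r_1,r_2$, say $r$, lies on the left side $\ml(S)$. By Lemma~\ref{lem:vertextype} (the stapled $(4,4)$-(a) and (b) rows of \figurename~\ref{fig:vertextype}), $v$ then has an incident edge $e=vv'$ that is growing and directed towards $v$, with $\ct_e=\{(p,\mb),(q,\mb),(r,\ml)\}$. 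Every square with contact type $\ct_e$ has its bottom and left sides on the fixed lines through $p,q$ and through $r$, so these squares all share the same bottom-left corner and the family is obtained by growing from that corner. Since the growing direction of $e$ points towards $v$, whereas that of any unbounded edge points away from its finite endpoint (proof of Lemma~\ref{lem:unbounded_och}), the edge $e$ is bounded; hence $v'\in V(\phi)$ is a genuine vertex and Lemma~\ref{lem:edge_vertex} applies.

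Next, I would let $S(c)$ be the largest empty square in orientation $\phi$ centered at $c$ and move $c$ from $\hat v$ along $\hat e$ towards $\hat{v'}$, along which the squares shrink, so that $S(\hat{v'})\subseteq S$ by Lemma~\ref{lem:edge_vertex}. At $v'$ the square gains a fourth contact pair. This pair cannot involve a point of $P$ other than $p,q,r$: such a point would lie on $\partial S(\hat{v'})\subseteq S$ and hence, $S$ being empty, on $\partial S$, so it would be the remaining point $r_2$; but $r_2$ lies in the relative interior of a side of $S$ — by the general position assumption it is not a corner of $S$ — disjoint from the strictly smaller square $S(\hat{v'})$. Also $p$, being left of $q$ on a shrinking bottom side, cannot reach a corner before $q$. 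Thus the new contact pair is either $(q,\mr)$ with $q$ at the bottom-right corner, giving $\ct_{v'}=\{(p,\mb),(q,\mb),(q,\mr),(r,\ml)\}$, or $(r,\mt)$ with $r$ at the top-left corner, giving $\ct_{v'}=\{(p,\mb),(q,\mb),(r,\ml),(r,\mt)\}$. Either way $S(\hat{v'})$ has exactly three contact points, a stapled bottom side, and a contact point at a corner, so it is a stapled $(4,3)$-square adjacent to $v$ through $e$, as required.

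The step I expect to be most delicate is pinning down the edge $e$: one must check, against the vertex types of Lemma~\ref{lem:vertextype} and the ten $(4,3)$-types of \figurename~\ref{fig:4sq_type}, that the reflection normalizing $r$ onto the left side is consistent with both sub-types of stapled $(4,4)$-squares and that the endpoint $v'$ is genuinely of a stapled $(4,3)$-type rather than a configuration forbidden by general position — the same kind of bookkeeping that underlies Lemmas~\ref{lem:vertextype} and~\ref{lem:43bc-42}.
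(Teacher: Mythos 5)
Your proposal is correct and takes essentially the same route as the paper: normalize so that the stapled side is the bottom and a single contact point $r$ is on the left, follow the growing edge $e$ incident to $v$ with $\ct_e=\{(p,\mb),(q,\mb),(r,\ml)\}$ in the shrinking direction, invoke Lemma~\ref{lem:edge_vertex} to conclude $S(\hat{v'})\subset S$, and deduce that the fourth contact pair at $v'$ must be $(q,\mr)$ or $(r,\mt)$, giving a stapled $(4,3)$-square. You also explicitly verify two points that the paper leaves implicit — that $e$ is bounded (so $v'$ is a genuine vertex), and that the new contact point cannot be the remaining contact point of $S$ since it sits in the relative interior of a side of $S$ disjoint from the shrunk square — both of which are correct and tighten the argument slightly.
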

\begin{proof}
There are two possible types for $S$:
stapled $(4,4)$-(a) and stapled $(4,4)$-(b).
Without loss of generality, we assume that the bottom side of $S$ is
stapled by two contact points $p,q\in P$
and the left side is pinned by a third contact point $r\in P$.
We thus have $\ct = \{(p,\mb),(q,\mb),(r,\ml), (p',\ma)\}$
for some $p'\in P \setminus\{p,q,r\}$ and  some $\ma \in \{\mt, \mr\}$.
Note that if $\ma = \mt$, then $S$ is of stapled $(4,4)$-(a)-type;
if $\ma = \mr$, then $S$ is of stapled $(4,4)$-(b)-type.
See \figurename~\ref{fig:vertextype}.

Let $S(c)$ be the maximal empty square centered at $c\in\Plane$
in orientation $\phi$.
Notice that there is a growing edge $e\in E(\phi)$ incident to $v$
with $\ct_e = \{(p,\mb),(q,\mb),(r,\ml)\}$,
regardless of the type of $S$.
Consider the motion of $S(c)$
as $c$ continuously moves from $\hat{v}$ along the edge $\hat{e}$.
Note that $S = S(\hat{v})$ and the contact type of $S(c)$ is $\ct_e$
in the relative interior of $\hat{e}$.
Since $e$ is a growing edge directed towards $v$,
$S(c)$ is shrinking as $c$ moves along $\hat{e}$.
We stop $c$ when $S(c)$ gains a fourth contact pair, so $c$ reaches
the other vertex $v'\in V(\phi)$ incident to $e$.
By Lemma~\ref{lem:edge_vertex},
we have $S(\hat{v'}) \subset S=S(\hat{v})$,
so the new contact pair does not involve points other than $p$ and $q$.
Hence, there are two possibilities:
$\ct_{v'} = \{(p,\mb),(q,\mb),(r,\ml),(r,\mt)\}$
or
$\ct_{v'} = \{(p,\mb),(q,\mb),(r,\ml),(q,\mr)\}$.
In either case, $S(\hat{v'})$ is a stapled $(4,3)$-square,
and thus the lemma follows.
\end{proof}
From Lemma~\ref{lem:vertextype}, we know that each vertex $v'\in V(\phi)$
of stapled $(4,3)$-type has at most one growing edge whose growing direction
is outwards from $v'$.
Hence, for each vertex of stapled $(4,3)$-type,
there can be at most one adjacent vertex whose corresponding square
is larger and has four contact points.
This, together with Lemma~\ref{lem:44-43}, implies that
the number of stapled $(4,4)$-squares
is at most the number of stapled $(4,3)$-squares, which is $O(n^2)$
by Lemma~\ref{lem:43bc-42}.

Lastly, we bound the number of non-stapled $(4,4)$-squares.
Recall that we have so far proved that the number of $4$-squares
whose type is not non-stapled $(4,4)$-type is $O(n^2)$.
\begin{lemma}\label{lem:44-other}
 In the graph $\GSq$ defined in Section~\ref{sec:sq_VD},
 any non-stapled $(4,4)$-square is adjacent to at least one
 $4$-square that is not of non-stapled $(4,4)$-type.
\end{lemma}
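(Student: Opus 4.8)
The plan is to analyze the graph $\GSq$ locally at a non-stapled $(4,4)$-square $S\in\Sq_4(\phi)$ and argue that at least one of its incident edges of $\GSq$ leads, at the other endpoint, to a $4$-square of a different type. By Lemma~\ref{lem:4sq-3sq}, since $S$ is non-stapled, its vertex $v\in V(\phi)$ has degree four in $\GSq$: there are exactly four contact types $\ct'\subset\ct$ with three contact pairs, two of which become valid for $\phi-\epsilon$ and two for $\phi+\epsilon$, corresponding to the four regular vertices that $v$ splits into. Each such regular vertex $v'$ lies on a regular edge $\hat{e}'\in\hat E(\theta)$ (for $\theta$ in the corresponding valid interval) whose other endpoint traces out a valid interval ending at some $4$-square $S'$ of orientation $\phi'\neq\phi$; these four squares $S'$ are precisely the $\GSq$-neighbors of $S$. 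I would suppose, for contradiction, that all four neighbors of $S$ are themselves non-stapled $(4,4)$-squares, and derive a contradiction.

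First I would set up coordinates in orientation $\phi$ with $\ct=\{(p_1,\mt),(p_2,\mr),(p_3,\mb),(p_4,\ml)\}$ for four distinct points, as in the non-stapled $(4,4)$ case of Lemma~\ref{lem:4sq-3sq}. Following the third row of \figurename~\ref{fig:transition}, the four subsets $\ct'$ are obtained by deleting one contact pair; the two valid for $\phi-\epsilon$ are (say) $\{(p_1,\mt),(p_2,\mr),(p_3,\mb)\}$ and $\{(p_1,\mt),(p_3,\mb),(p_4,\ml)\}$, meeting along a growing edge $e$ with $\ct_e=\{(p_1,\mt),(p_3,\mb)\}$, and symmetrically for $\phi+\epsilon$. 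The key geometric observation I would use is Lemma~\ref{lem:edge_vertex}: along the growing edge $\hat e$ corresponding to $\ct_e=\{(p_1,\mt),(p_3,\mb)\}$, the family of empty squares is a nested family whose union is a square; as we follow this edge \emph{away} from $v$ (in the $\phi-\epsilon$ world), the square grows, and the valid interval for each of the two regular contact types $\ct'$ containing $\ct_e$ ends at a $4$-square obtained by adding a \emph{fourth} distinct point or by having one of $p_1,p_2,p_3,p_4$ slide to a corner. The point is to show that this terminal $4$-square cannot again be non-stapled $(4,4)$ for \emph{all four} of the growing directions simultaneously — at least one of the four edges of $\GSq$ at $S$ must terminate in an event where two contact points end up on a common side (a stapled type) or a point becomes redundant (a lower-$k$ type).

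The cleanest route I expect to work is a global/extremal argument rather than a purely local case check: among all non-stapled $(4,4)$-squares, consider a connected component $\GSq_0$ of the subgraph of $\GSq$ induced by non-stapled $(4,4)$-vertices, and suppose it is also a connected component of $\GSq$ itself (which is exactly what the lemma denies). Then along every edge of $\GSq_0$ the same four points $p_1,p_2,p_3,p_4$ persist as the four contact points — deleting a contact pair and re-growing never introduces a new point and never collapses two onto one side — so the entire component corresponds to rotating one rectangle-that-is-momentarily-a-square through a family of orientations with a \emph{fixed} four-point set on its four sides, one per side. But by the sinusoidal-function analysis already carried out in the proof of Lemma~\ref{lem:GP1} (first case), the set of orientations $\theta$ for which $R(\theta)$ — the rectangle with $p_i$ on side $i$ — is a square is the zero set of $|ac|\sin(\theta+\alpha) - |bd|\sin(\theta+\beta+\pi/2)$, hence either finite (generically a single orientation, by the general position assumption ruling out case (3)) or, if the identity $|ac|=|bd|,\ \beta-\alpha=\pi/2$ held, all orientations — excluded by general position. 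So the four-point set contributes exactly \emph{one} non-stapled $(4,4)$-square, meaning $\GSq_0$ is a single isolated vertex of $\GSq$; but $v$ has degree two or four in $\GSq$ by Lemma~\ref{lem:4sq-3sq}, a contradiction. Therefore at least one $\GSq$-neighbor of $S$ is not of non-stapled $(4,4)$-type.

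The main obstacle I anticipate is the bookkeeping in the step "deleting a contact pair and re-growing never introduces a new point" — a priori, when we follow a growing edge out of $v$ in the $\phi\pm\epsilon$ direction, the square grows and could in principle be stopped by a \emph{new} fifth point rather than by one of $p_1,\dots,p_4$ reaching a corner, in which case the terminal $4$-square has a different, possibly still-non-stapled-$(4,4)$ point set. So the extremal argument as stated is not quite airtight; what I would actually prove is the contrapositive-flavored statement that if \emph{all} four neighbors were non-stapled $(4,4)$, then in particular along each of the four incident $\GSq$-edges the point set is preserved (if a new point entered, the neighbor would have to shed one of $p_1,\dots,p_4$, but a non-stapled $(4,4)$-square has all four sides pinned by distinct points, so some $p_i$ would have to have left the boundary — impossible for an edge-flip since the flip preserves the outer union square's tangency structure per Lemma~\ref{lem:edge_vertex}, forcing the two "unflipped" contact pairs, hence their two points, to stay). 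Pinning down this last implication — that the edge-flip transition across a non-stapled $(4,4)$ vertex preserves all four incident points — is the crux; once it is in hand, the sinusoidal argument above closes the proof immediately.
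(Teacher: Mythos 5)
There is a genuine gap, and you have in fact identified it yourself as ``the crux.'' Your plan correctly recognizes that $S$ has degree four in $\GSq$ (by Lemma~\ref{lem:4sq-3sq}) and that the task is to show at least one of the four $\GSq$-edges out of $S$ terminates at a square that is not of non-stapled $(4,4)$-type. The key step you need --- that when you follow the valid interval for, say, $\ct_\ml=\ct\setminus\{(p_\ml,\ml)\}$ away from $\phi$, no new point of $P$ can land in the relative interior of the $\ml$-side of $S_{\ct_\ml}(\theta)$ at the far endpoint --- is asserted but not proved. The parenthetical justification you give (``impossible for an edge-flip since the flip preserves the outer union square's tangency structure per Lemma~\ref{lem:edge_vertex}'') does not work: Lemma~\ref{lem:edge_vertex} is a statement about the union of squares along a single Voronoi edge in a \emph{fixed} orientation, and it says nothing about what happens to contact pairs as $\theta$ rotates from $\phi$ to the opposite endpoint of the valid interval. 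It therefore cannot rule out a fifth point $p'\notin\{p_1,\ldots,p_4\}$ being encountered on the freed side, which would give another non-stapled $(4,4)$-square with a different four-point set and would break the ``fixed four points'' premise of your sinusoidal finiteness argument.

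The idea that closes this hole, and which your proposal is missing, is to look simultaneously at \emph{both} valid intervals that contain $\phi+\epsilon$ (for $\ct_\ml$ and $\ct_\mr$), take the one that ends first, say at $\phi_\ml$, and observe that for every $\theta\in(\phi,\phi_\ml)$ both squares $S_{\ct_\ml}(\theta)$ and $S_{\ct_\mr}(\theta)$ exist, are empty, and share the $\mt$ and $\mb$ contact pairs. Their union $R(\theta)$ is therefore an empty rectangle with $p_\mt,p_\mr,p_\mb,p_\ml$ on its four sides, and the $\ml$-side of $S_{\ct_\ml}(\theta)$ lies in the interior of $R(\theta)$, hence can touch no point of $P$. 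By a limit argument at $\theta\to\phi_\ml$, the new fourth contact pair of $S_{\ct_\ml}(\phi_\ml)$ cannot lie in the relative interior of the $\ml$-side, and $S_{\ct_\ml}(\phi_\ml)$ is not non-stapled $(4,4)$. This is a direct, local argument giving you one specific neighbor, with no need for the connected-component/extremal framing (which, as written, also does not cleanly yield the contrapositive you want: ``all four neighbors of $S$ are non-stapled $(4,4)$'' is weaker than ``the induced component of $S$ is a whole component of $\GSq$,'' so you would additionally need a minimality or induction device to upgrade one to the other). Once you have this local geometric fact, your sinusoidal observation from the proof of Lemma~\ref{lem:GP1} is not even needed.
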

\begin{proof}
Recall that the graph $\GSq$ is defined as follows:
each vertex is a $4$-square in $\Sq_4$
and each edge corresponds to a valid interval for some contact type
between two $4$-squares obtained at its endpoints.

Let $S\in \Sq_4(\phi)$ be any non-stapled $(4,4)$-square
with contact type $\ct$.
Let $p_\mt,p_\mr,p_\mb,p_\ml \in P$ be the contact points of $S$
such that $\ct = \{(p_\mt, \mt), (p_\mr, \mr), (p_\mb,\mb), (p_\ml,\ml)\}$.\
Let $\ct_\ma := \ct \setminus \{(p_\ma,\ma)\}$ for
each $\ma \in \{\mt,\mr,\mb,\ml\}$.
Since $\ct$ is of non-stapled $(4,4)$-type,
$\ct_\ma$ has three pinned sides and either $\phi-\epsilon$ or $\phi+\epsilon$
is valid for $\ct_\ma$ for any arbitrarily small $\epsilon>0$
by Lemma~\ref{lem:4sq-3sq}.

Without loss of generality,
assume that $\phi+\epsilon$ is valid for $\ct_\ml$.
Then, $\phi+\epsilon$ is also valid for $\ct_\mr$,
while $\phi-\epsilon$ is valid for $\ct_\mt$ and $\ct_\mb$.
Thus, the degree of $S$ is four in graph $\GSq$.
(See the non-stapled $(4,4)$-type of \figurename~\ref{fig:transition}.)
As we increase $\theta$ from $\theta = \phi$,
we have two regular vertices in $V(\theta)$ corresponding to
squares $S_{\ct_\ml}(\theta)$ and $S_{\ct_\mr}(\theta)$,
and a sliding edge $e\in E(\theta)$ between them,
in the Voronoi diagram $\VD(\theta)$.

Let $I_\ml$ and $I_\mr$ be the valid intervals for
$\ct_\ml$ and $\ct_\mr$ such that $\phi$ is
one endpoint of both $I_\ml$ and $I_\mr$.
Note that $\phi+\epsilon$ is contained in both $I_\ml$ and $I_\mr$.
Let $\phi_\ml$ and $\phi_\mr$ be the endpoints of $I_\ml$ and $I_\mr$,
respectively, other than $\phi$.
Without loss of generality, assume that $\phi< \phi_\ml <\phi_\mr$.
Then, $S_{\ct_\ml}(\phi_\ml)$ is a $4$-square.

We claim that $S_{\ct_\ml}(\phi_\ml)$ is not of non-stapled $(4,4)$-type.
In order to show the claim,
consider the rectangle $R(\theta):=S_{\ct_\ml}(\theta) \cup S_{\ct_\mr}(\theta)$
for any $\phi < \theta < \phi_\ml$.
Observe that the boundary of $R(\theta)$ contains
the four points $p_\mt,p_\mr,p_\mb,p_\ml$ in its each side
and $R(\theta)$ is empty.
This implies that the left side $\ml(S_{\ct_\ml}(\theta))$ of
$S_{\ct_\ml}(\theta)$ cannot be pinned by any point in $P$
for $\phi < \theta < \phi_\ml$.
By continuity and a limit argument, therefore,
we cannot have a contact point on the relative interior of
the left side of $S_{\ct_\ml}(\phi_\ml)$,
so $S_{\ct_\ml}(\phi_\ml)$ is not a non-stapled $(4,4)$-square.
Thus, the lemma is shown.
\end{proof}
Lemma~\ref{lem:4sq-3sq} states that the degree of
each $4$-square in $\Sq_4$ is of degree at most four,
and hence is adjacent to at most four non-stapled $(4,4)$-squares.
Since the number of $4$-squares that is not of non-stapled $(4,4)$-type
is $O(n^2)$ and each non-stapled $(4,4)$-square is adjacent to
one of them by Lemma~\ref{lem:44-other},
we conclude that the number of non-stapled $(4,4)$-squares
is also $O(n^2)$.
This proves
the upper bound of Theorem~\ref{thm:44sq}.

Since $\Sigma_4(n) \leq \sum_{1\leq k\leq 4}\Sigma_{4,k}(n)$,
we have that $\Sigma_4(n) = O(n^2)$.
By Lemma~\ref{lem:4sq_upper_example}, we have $\Sigma_4(n) = \Omega(n^2)$,
completing our proof for the claimed upper bound of Theorem~\ref{thm:4sq}.

\subsection{Lower bounds}
We then turn into proving the lower bounds.
As discussed above, we already have $\sigma_4(n) = \Omega(n)$,
which matches the claimed lower bound in Theorem~\ref{thm:4sq}.
Here, we show $\Omega(n)$ lower bounds
for $\sigma_{4,2}(n)$ and $\sigma_{4,3}(n)$,
and then construct a point set with $O(n)$ $4$-squares.

\begin{lemma} \label{lem:43sq_lower}
 For any integer $n\geq 3$,
 $\sigma_{4,2}(n) = \Omega(n)$ and $\sigma_{4,3}(n) = \Omega(n)$.
\end{lemma}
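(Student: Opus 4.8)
The plan is to charge each of the two bounds to the $n$ points of $P$ individually, using extremal arguments.

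\smallskip
\noindent\emph{Lower bound on $s_{4,2}$ via nearest neighbours.}
For $p\in P$ let $q_p\in P$ be a Euclidean nearest neighbour of $p$, and let $S_p$ be the unique square having $\seg{pq_p}$ as a diagonal. First I would check that $S_p$ is empty: $p$ is a corner of $S_p$ and $q_p$ is the \emph{unique} farthest point of $S_p$ from $p$, so every point of $S_p$ other than $q_p$ is strictly nearer to $p$ than $q_p$; a point of $P$ in $\intr(S_p)$ would therefore be a strictly nearer neighbour of $p$ than $q_p$, a contradiction. Hence $S_p$ is an empty square with $p$ and $q_p$ at opposite corners, so by the general position assumption (no five contact pairs) it is exactly a non-stapled $(4,2)$-square. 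Since $S_p$ determines the unordered pair $\{p,q_p\}$, the map $p\mapsto S_p$ is at most two-to-one, so $s_{4,2}(P)\ge n/2$, giving $\sigma_{4,2}(n)=\Omega(n)$.

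\smallskip
\noindent\emph{Lower bound on $s_{4,3}$ via the largest-empty-square profile.}
For $p\in P$ and $\psi\in[0,2\pi)$ let $h_p(\psi)$ be the radius of the largest empty square having $p$ at a corner and opening into the apex-$p$ quadrant spanning the directions $[\psi,\psi+\pi/2]$ (with $h_p(\psi)=+\infty$ when that quadrant misses $P$). As in Lemma~\ref{lem:43_upper_F}, $h_p$ is the lower envelope of the partial sinusoids $f_q$, each defined on an arc $D_q$ of length $\pi/2$. A short case analysis shows that at any $\psi_0$ where the support point $q(\psi)=\operatorname{argmin}_q f_q(\psi)$ changes while $h_p$ stays finite on both sides one gets an empty square with $p$ at a corner and a further contact point: a non-stapled or stapled-$(4,3)$-$(a)$-square when the two competing support points both lie strictly inside the quadrant, a stapled $(4,3)$-square when a support point leaves through a bounding ray while another is present, and merely a $(4,2)$-square when a single support point passes through a bounding ray with nothing else nearby. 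Since every $(4,3)$-square has a \emph{unique} corner-point, it suffices to exhibit $\Omega(n)$ points $p$ whose envelope has a change of one of the first two kinds. The plan for this is to feed in the non-stapled $(4,2)$-squares found above: each is a degree-four vertex of the graph $\GSq$ of Section~\ref{sec:sq_VD}, all of whose $\GSq$-neighbours are stapled $(4,2)$-squares or $(4,3)$-squares (never another non-stapled $(4,2)$-square, never a $(4,4)$-square); since stapled $(4,2)$- and $(4,3)$-squares have $\GSq$-degree at most four, a counting argument along $\GSq$ forces $\Omega(n)$ squares of these two kinds, after which one separates out the stapled $(4,2)$-squares by following the chains of valid intervals of $3$-pinned contact types that link them.

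\smallskip
\noindent\emph{Main obstacle.}
The step I expect to be the technical heart is exactly this last separation for $(4,3)$-squares. There are configurations — a few far-apart three-point clusters, or dumbbell-like point sets — in which many points $p$ are corner-points of several $(4,2)$-squares but of \emph{no} $(4,3)$-square, so a purely local per-point argument cannot work; the $\Omega(n)$ $(4,3)$-squares must be extracted from the global transition structure, either of $\GSq$ (tracking how an argmin-change of ``clean ray-exit'' type in one profile is forced to become a crossing elsewhere along a chain of valid intervals) or, via Lemma~\ref{lem:unbounded_och}, of the orthogonal convex hull $\OCH(\theta)$, all while verifying that each produced square has exactly three contact points and is charged $O(1)$ times. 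The finitely many bounded cases $n\le O(1)$ are handled by inspection and are absorbed into the $\Omega$-notation.
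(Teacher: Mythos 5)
Your argument for $\sigma_{4,2}(n)=\Omega(n)$ is correct and is exactly the paper's: nearest-neighbour pairs $\{p,q_p\}$ yield non-stapled $(4,2)$-squares, and the map is at most two-to-one.

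The $\sigma_{4,3}(n)$ part, however, is not a proof: you explicitly flag the decisive step --- separating genuine $(4,3)$-squares from the stapled $(4,2)$-squares that a priori could absorb all the $\GSq$-edges leaving the non-stapled $(4,2)$-vertices --- as an unresolved ``main obstacle,'' and that separation is the entire content of the claim. The paper closes the gap with a direct, local construction which also makes your structural worry moot. Starting from $S_{pq}$ with $p$ at the bottom-left and $q$ at the top-right corner, rotate keeping the contact pairs $(p,\mb),(p,\ml),(q,\mr)$: either a third point is hit before $\theta=\pi/4$, giving a $(4,3)$-square directly, or at $\theta=\pi/4$ one reaches a stapled $(4,2)$-square with $p$ and $q$ on the bottom side, which is then grown along one of its two outward growing edges until a third contact point is forced. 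These two edges are both unbounded only if no point of $P$ lies above the line $pq$, in which case the same construction is repeated rotating in the opposite sense; failure of both directions forces $P=\{p,q\}$. Hence for $n\ge 3$ each nearest-neighbour pair supplies a $(4,3)$-square having $p$ and $q$ among its three contact points, and since a $(4,3)$-square has only three contact points the charging is $O(1)$-to-one, giving $\Omega(n)$ of them. Note in particular that the $(4,3)$-square produced may have $q$ rather than $p$ at its corner; the charge is to the pair $\{p,q\}$, not to a fixed corner-point $p$, so the per-point obstruction you anticipate for the envelope $h_p$ never actually arises.
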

\begin{proof}
For each point $p\in P$, consider the closest point $q$ from $p$
in the Euclidean distance.
So, the disk centered at $p$ with $q$ on its boundary is empty.
Consider the square $S_{pq}$ one of whose diagonal is segment $pq$,
which is empty and is thus a $(4,2)$-square.
This proves that $\sigma_{4,2}(n) = \Omega(n)$.

Without loss of generality, we assume that
the orientation of the square $S_{pq}$ obtained above is $0\in \OS$,
$p$ lies on the bottom-left corner of $S_{pq}$, and $q$ lies on
the top-right corner of $S_{pq}$,
so the contact type of $S_{pq}$ is $\{(p,\mb),(p,\ml),(q,\mt),(q,\mb)\}$.
For $0 \leq \theta \leq \pi/4$,
let $S(\theta)$ be the square in orientation $\theta$
with at least three contact pairs
$(p,\mb)$,$(p,\ml)$, and $(q,\mr)$.
As $\theta$ continuously increase from $0$ to $\pi/4$,
consider the motion of $S(\theta)$.
If one of the sides of $S(\theta)$ hits a third point in $P$ at $\theta=\phi$,
then $S(\phi)$ is a $(4,3)$-square, so we are done.

Suppose this is not the case, so we reach $\theta = \pi/4$.
Then, $S(\pi/4)$ is a stapled $(4,2)$-square with
contact type $\ct=\{(p,\mb),(p,\ml),(q,\mr), (q,\mb)\}$.
Let $v \in V(\pi/4)$ be the vertex corresponding to $S(\pi/4)$.
Now, we grow $S(\pi/4)$ by moving its center along
a growing edge $e$ incident to $v$.
There are two growing edges that are directed outwards from $v$.
See \figurename~\ref{fig:vertextype}.
Unless $e$ is unbounded, the other vertex $v'$ incident to $e$
corresponds to a $(4,3)$-square.

If both growing edges are unbounded, then there is no point in $P$
above the line through $p$ and $q$.
In this case, we repeat the above process in the opposite direction:
Redefine $S(\theta)$ to be the square with at least three contact pairs
$(p,\mb)$,$(p,\ml)$, and $(q,\mt)$ for $0 \geq \theta \geq -\pi/4$,
consider the motion of $S(\theta)$ by
decreasing $\theta$ from $0$ to $-\pi/4$ 
until we have $S(-\pi/4)$,
and then grow $S(-\pi/4)$ as done above.
If this process again fails to find a third contact point and a $(4,3)$-square,
then we have $P$ consists of only two points $p$ and $q$.
(It is obvious that $\sigma_{4,3}(2) = \Sigma_{4,3}(2) = 0$.)
Hence, if $n\geq 3$, there must be at least one $(4,3)$-square such that
$p$ and $q$ are its contact points
for any $p \in P$ and its closest neighbor $q \in P$,
and thus we have $\sigma_{4,3}(n) = \Omega(n)$ for any $n \geq 3$.
\end{proof}
Lemma~\ref{lem:43sq_lower} implies that
there are always $\Omega(n)$ nontrivial $4$-squares among $n$ points.
%
%

\begin{figure}[tbh]
\begin{center}
\includegraphics[width=.6\textwidth]{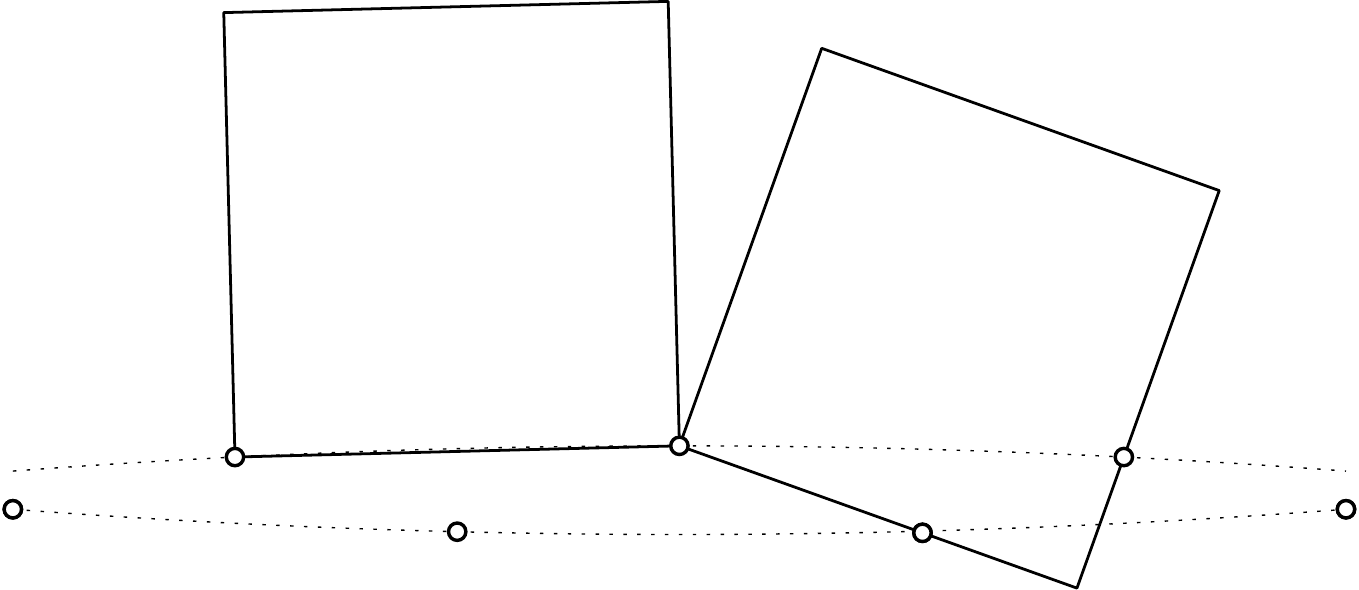}
\end{center}
\caption{Illustration of a point set $P'_n$ with $O(n)$ many $4$-squares.}
\label{fig:4sq_lower_example}
\end{figure}

We finally construct a point set having a small number of $4$-squares.
\begin{lemma} \label{lem:4sq_lower_example}
 For any integer $n \geq 1$,
 there exists a set $P'_n$ of $n$ points such that
 $s_{4,2}(P'_n) = O(n)$, $s_{4,3}(P'_n) = O(n)$, $s_{4,4}(P'_n) = 0$,
 and thus $s_4(P'_n) = O(n)$.
\end{lemma}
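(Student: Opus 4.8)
The plan is to produce an explicit configuration $P'_n$ --- the perturbed interleaved double chain of \figurename~\ref{fig:4sq_lower_example} --- and verify the three bounds on it. Concretely, I would take $p_i := (i, (-1)^i\epsilon_i)$ for $i=1,\dots,n$, where the $\epsilon_i\in(\tfrac\epsilon2,\epsilon)$ are chosen so that the even-indexed points, ordered by index, are in convex position just above the $x$-axis (bulging upward) and the odd-indexed ones likewise just below it (bulging downward), with $\epsilon>0$ sufficiently small; finally I would add a generic perturbation of magnitude $\ll\epsilon$ so that $P'_n$ is in general position in our sense and avoids the degeneracies of Lemma~\ref{lem:GP1}. (For odd $n$ discard a point; for $n\le 3$ the claim is immediate.) Thus $P'_n$ has unit horizontal spacing $1\gg\epsilon$, sits within vertical distance $\epsilon$ of the axis, and its ``cap'' and ``cup'' subsequences are each convex. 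I would record two elementary facts used throughout: \textbf{(a)} any planar region of horizontal extent $O(\epsilon)$ contains at most one point of $P'_n$; and \textbf{(b)} for $i<j$ with $j-i\ge 2$, the point $p_{i+1}$ lies in the interior of the square whose diagonal is $\overline{p_ip_j}$, since $\overline{p_ip_j}$ is nearly horizontal and $p_{i+1}$ is at horizontal distance $\approx 1\gg\epsilon$ from the endpoint $p_i$.

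With these in hand the three counts go as follows. \emph{$s_{4,4}(P'_n)=0$:} suppose some empty square $S$ carried four distinct contact points; they are near the axis and have pairwise horizontal distance $\ge1$. A case analysis on the orientation $\theta$ of $S$ rules this out: when $\theta$ is bounded away from $\pi/4$, projecting the contact points onto the direction perpendicular to a parallel pair of sides of $S$ forces $S$ to have side $O(\epsilon)$, which by \textbf{(a)} cannot carry four points with pairwise horizontal distance $\ge1$; when $\theta$ is close to $\pi/4$, the near-collinearity of the contact points forces them to cluster near just two (diagonally opposite) corners of $S$, so two of them lie within $O(\epsilon)$ of a common corner, again contradicting their distinctness. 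The stapled $(4,4)$-types fall to the same dichotomy, since a stapled side near the vertical direction cannot carry two points of $P'_n$ (they would have nearly equal $x$-coordinates). This also establishes the lower bound of Theorem~\ref{thm:44sq}. \emph{$s_{4,2}(P'_n)=O(n)$:} a non-stapled $(4,2)$-square is the square with diagonal $\overline{p_ip_j}$ for a pair $(i,j)$, and by \textbf{(b)} it is empty only for $j=i\pm1$; a stapled $(4,2)$-square has $\overline{p_ip_j}$ as a side with $p_i,p_j$ on adjacent corners, and it is empty only if no point of $P'_n$ lies strictly between $p_i$ and $p_j$ on the side on which $S$ is erected, which (using that intermediate points alternate across the axis and the convexity of the cap and cup) forces $j-i\le 2$. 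So both types involve only $O(n)$ pairs.

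Finally, \emph{$s_{4,3}(P'_n)=O(n)$} splits by type: stapled $(4,3)$-(b)- and (c)-squares are bounded by $O(s_{4,2}(P'_n))=O(n)$ via Lemma~\ref{lem:43bc-42}, using that by Lemma~\ref{lem:vertextype} each stapled $(4,2)$-vertex has only two growing edges directed outward; for non-stapled $(4,3)$-squares and stapled $(4,3)$-(a)-squares, I would fix the corner point $p=p_m$ and argue, within the lower-envelope framework from the proof of Lemma~\ref{lem:43_upper_F}, that the envelope $F$ of the distance functions $f_q$ has only $O(1)$ breakpoints --- the only points of $P'_n$ that can realize the minimum over a quadrant with apex $p_m$ are the two horizontal neighbours $p_{m-1},p_{m+1}$ (any quadrant containing a farther $p_{m\pm k}$ also contains the nearer $p_{m\pm1}$ at smaller distance, since $\epsilon\ll1$, while quadrants tilted off the axis are disjoint from $P'_n$) --- so summing $O(1)$ over the $n$ choices of $p$ and the four corner roles gives $O(n)$. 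Hence $s_4(P'_n)=s_{4,2}(P'_n)+s_{4,3}(P'_n)+s_{4,4}(P'_n)=O(n)$. The main obstacle is the $s_{4,4}=0$ step, and within it the orientations near $\pi/4$: one must carefully turn ``four near-collinear points with horizontal spacing $\gg\epsilon$ on the sides of an empty square of orientation $\approx\pi/4$'' into a contradiction via a clean corner-clustering estimate; the remaining counts are routine given the structural facts \textbf{(a)}, \textbf{(b)} and the earlier lemmas.
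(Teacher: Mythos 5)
Your construction is essentially the same as the paper's (points at unit horizontal spacing, within $O(\epsilon)$ of the $x$-axis, alternating above and below on two convex caps), but the paper proves a single, stronger structural claim from which all three counts fall out at once by pigeonhole: for $j - i \geq 3$ there is \emph{no} $4$-square of any type having both $p_i$ and $p_j$ as contact points. The proof is short: such a square $S$ would have side length at least $3L/\sqrt{2}$ since $\seg{p_ip_j}\subset S$, yet the two tiny axis-parallel rectangles $[x_{i+1},x_{i+2}]\times[-\epsilon/2,0]$ and $[x_{i+1},x_{i+2}]\times[0,\epsilon/2]$ each contain a point of $P'_n$, forcing a pair of \emph{parallel} sides of $S$ to pass through them, and those sides are then at distance less than $L+\epsilon$, a contradiction. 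This simultaneously gives $s_{4,4}=0$ (any four distinct indices have spread $\geq 3$) and $s_{4,k}=O(n)$ for $k=2,3$ (only $O(n)$ index pairs/triples of spread $\leq 2$, with $O(1)$ squares each). Your separate per-$k$ arguments are considerably more laborious, and your fact \textbf{(b)} (that $p_{i+1}$ sits inside the square with diagonal $\seg{p_ip_j}$) is a strictly weaker statement that only handles non-stapled $(4,2)$-squares.

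More importantly, there is a real gap in your $s_{4,3}$ step. The parenthetical justification ``any quadrant containing a farther $p_{m\pm k}$ also contains the nearer $p_{m\pm 1}$ at smaller distance'' is false: as $\theta$ sweeps, the quadrant boundary rotates, and $p_{m+1}$ can leave the quadrant while $p_{m+3}$ (say) is still inside, precisely because the directions $\alpha_k$ from $p_m$ to the $p_{m+k}$ are distinct. So the lower envelope $F$ is \emph{not} realized solely by $f_{p_{m\pm1}}$, and it does have $\Omega(n)$ breakpoints from arcs starting and ending — not $O(1)$. What could rescue your argument is a different observation, namely that for this point set the $d_\theta$-distances $d_\theta(p_m,p_{m+k})$ are separated by roughly $1\gg\epsilon$ for different $k$, so two arcs $f_{q_1},f_{q_2}$ never \emph{cross}; the only breakpoints are entry/exit events, and only crossings correspond to $(4,3)$-squares with $p_m$ at a corner. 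That reasoning is not what you wrote. Relatedly, your $s_{4,4}=0$ case near $\theta\approx\pi/4$ is explicitly flagged as needing a corner-clustering estimate that you do not supply (and which must track how fast the permitted ``near-corner'' window widens as $\theta$ moves away from $\pi/4$). Both difficulties are sidestepped cleanly by the paper's single separation claim.
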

\begin{figure}[tbh]
	\begin{center}
		\includegraphics[width=.6\textwidth]{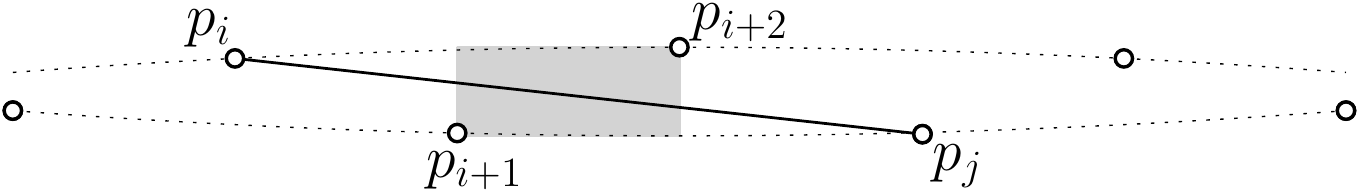}
	\end{center}
	\caption{Proof of Lemma~\ref{lem:4sq_lower_example}}
	\label{fig:4sq_lower_example_proof}
\end{figure}
\begin{proof}
We start with a construction of such a point set $P'_n$ for any $n\geq 1$.
Let $\epsilon\in \Real$ be a sufficiently small positive real number
and $L \in\Real$ be a sufficiently large positive real number.
Consider a big circle $C$ of radius at least $(n+1)L/\epsilon$
and a circular arc $A$ of $C$ with central angle $\epsilon$.
Apply a rigid transformation to $A$, resulting in $A^+$ such that
one endpoint of $A^+$ is located at the origin $o\in\Plane$,
the other lies on the $x$-axis with a positive $x$-coordinate, 
and no point of $A^+$ lies below the $x$-axis.
Let $A^-$ be the mirror image of $A^+$ with respect to the $x$-axis.
Note that $A^+\cup A^-$ lies in between two horizontal lines 
$\{y=-\epsilon/2\}$ and $\{y=\epsilon/2\}$
and the $x$-coordinate of the shared endpoint of $A^+$ and $A^-$ 
other than the origin is at least $nL$ by our construction and simple trigonometry.

For $i=\{1,\ldots, n\}$, let $x_i := iL$.
Let $p_i$ be the point on $A^+ \cup A^-$ such that
$p_i$ is the point with $x$-coordinate $x_i$ on $A^+$ if $i$ is odd,
or on $A^-$ if $i$ is even.
We then define $P'_n := \{p_1, \ldots, p_n\}$.
See Figure~\ref{fig:4sq_lower_example}.

We claim that for any $1\leq i\leq n$ and $i + 2 < j \leq n$,
there is no $4$-square such that both $p_i$ and $p_j$ are its contact points.
This claim immediately implies that $s_{4,4}(P'_n) = 0$
and $s_{4,k}(P'_n) = O(n)$ for $k=2,3$.
In the following, we prove our claim.

Suppose to the contrary that there is a $4$-square $S$
such that both $p_i$ and $p_j$ are its contact points.
Let $\phi\in\OS$ be the orientation of $S$.
Note that the side length of $S$ is at least $3L/\sqrt{2}$ 
since $j-i \geq 3$ and segment $p_ip_j$ is contained in $S$.
Consider the axis-parallel rectangles 
$R^- = [x_{i+1},x_{i+2}]\times [-\epsilon/2, 0]$ 
and $R^+ = [x_{i+1},x_{i+2}]\times [0, \epsilon/2]$.
Note that each of $R^-$ and $R^+$ contains at least one point in $P'_n$
by our construction since $j \geq i+3$.
Since $S$ is empty, a side $\ma(S)$ of $S$ must intersect $R^-$ and
another side $\ma'(S)$ of $S$ must intersect $R^+$, for some $\ma,\ma'\in\{\mt,\mr,\mb,\ml\}$.
We observe that $\ma(S)$ and $\ma'(S)$ are two parallel sides of $S$,
since, otherwise, the boundary of $S$ would not contain $p_i$ or $p_j$.
Hence, the distance between two sides $\ma(S)$ and $\ma'(S)$ is
less than $\sqrt{L^2 + \epsilon^2} < L+\epsilon$,
which is strictly smaller than $3L/\sqrt{2}$, 
the lower bound of the side length of $S$,
leading to a contradiction.
\end{proof}

Consequently, by Lemma~\ref{lem:4sq_lower_example}, we have
$\sigma_{4,2}(n) = \Theta(n)$, $\sigma_{4,3}(n) = \Theta(n)$, and
thus $\sigma_4(n) = \Theta(n)$,
while $\sigma_{4,4}(n) = 0$.
This proves the claimed lower bounds in Theorems~\ref{thm:4sq} and~\ref{thm:44sq}.

\section{Maintaining the $L_\infty$ Voronoi Diagram under Rotation}
\label{sec:alg}

In this section, we present an algorithm
that maintains the combinatorial structure $\VG(\theta)$
of the Voronoi diagram $\VD(\theta)$
while $\theta\in\OS$ continuously increases from $0$ to $\pi/2$.
As observed in Section~\ref{sec:sq_VD},
any combinatorial change of $\VD(\theta)$ corresponds to
a $4$-square, so our algorithm indeed finds
all $4$-squares among the points in $P$.

Lemma~\ref{lem:valid_interval} implies that
for any two consecutive degenerate orientations
$\phi_1, \phi_2 \in \OS$
the vertex set $V(\theta)$ stays the same for all $\phi_1 < \theta < \phi_2$.
By Lemma~\ref{lem:edge_vertex}, a change in the edge set $E(\theta)$
happens when and only when its incident vertices change,
so $\VD(\theta)$ for all $\phi_1 < \theta < \phi_2$ are
combinatorially equivalent.
On the other hand,
if $\phi \in \OS$ is a degenerate orientation,
then $\VD(\phi)$ is not equivalent to $\VD(\theta)$ for
any $\theta\in \OS$ with $\theta\neq\phi$,
since $V(\phi)$ has a non-regular vertex
corresponding to a $4$-square in $\Sq_4(\phi)$
and $V(\theta)$ does not.
Hence, the set $\OS$ of all orientations is divided into
at most $2s_4$ equivalence classes:
open intervals $(\phi_1, \phi_2)$ and singletons $\{\phi\}$
such that $\phi_1$ and $\phi_2$ are any two consecutive degenerate orientations
and $\phi$ is any degenerate orientation.
Summarizing, the combinatorial change of $\VD(\theta)$ happens
at every degenerate orientation $\theta = \phi$ only.

In the following, $\epsilon\in\Real$ denotes any arbitrarily small positive real.
For any degenerate orientation $\phi\in\OS$
and a $4$-square $S\in \Sq_4(\phi)$ with contact type $\ct$,
let $V^-_S \subseteq V(\phi-\epsilon)$ and
$V^+_S \subseteq V(\phi+\epsilon)$ be the sets of regular vertices $v$
in $V(\phi-\epsilon)$ and $V(\phi+\epsilon)$, respectively,
such that $\ct_v \subset \ct$.
For each degenerate orientation $\phi \in \OS$, define
 \[ V^-(\phi) := V(\phi-\epsilon)\setminus V(\phi+\epsilon)
   \quad \text{and} \quad
    V^+(\phi) := V(\phi+\epsilon)\setminus V(\phi-\epsilon)\]
to be the sets of vertices to be deleted and inserted, respectively,
as $\theta$ goes through $\phi$.
Similarly, define
 \[ E^-(\phi) := E(\phi-\epsilon)\setminus E(\phi+\epsilon)
    \quad \text{and} \quad
    E^+(\phi) := E(\phi+\epsilon)\setminus E(\phi-\epsilon).\]
Lemmas~\ref{lem:valid_interval} and~\ref{lem:4sq-3sq} imply the following.
\begin{lemma} \label{lem:change}
 For any degenerate orientation $\phi\in\OS$, the following hold:
 \begin{enumerate}[(i)] \denseitems
  \item $V^-(\phi) = \bigcup_{S\in\Sq_4(\phi)} V^-_S$
    and $V^+(\phi) = \bigcup_{S\in\Sq_4(\phi)} V^+_S$.
  \item $E^-(\phi)$ consists of edges
  incident to a vertex in $V^-(\phi)$ and
  $E^+(\phi)$ consists of edges incident to a vertex in $V^+(\phi)$.
  \item $|V^-(\phi)| + |V^+(\phi)| + |E^-(\phi)| + |E^+(\phi)|
    = \Theta(|\Sq_4(\phi)|)$.
 \end{enumerate}
\end{lemma}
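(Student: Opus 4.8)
The plan is to first pin down the four sets $V^-(\phi),V^+(\phi),E^-(\phi),E^+(\phi)$ exactly, and then to count. Throughout, fix $\epsilon>0$ so small that $\phi$ is the only degenerate orientation in $[\phi-\epsilon,\phi+\epsilon]$; as recorded just before the lemma, $\VG(\theta)$ is then constant on $(\phi-\epsilon,\phi)$ and on $(\phi,\phi+\epsilon)$, and the orientations $\phi\pm\epsilon$ are regular, so by Lemma~\ref{lem:regularVD} every vertex of $V(\phi\pm\epsilon)$ outside $P\cup\{\infty\}$ is regular --- its contact type having three contact pairs and three pinned sides --- and has degree three. I will use repeatedly that, for such $\theta$ and a contact type $\ct'$ of this kind, $\ct'$ is the contact type of a vertex of $V(\theta)$ if and only if $\theta$ is valid for $\ct'$, and that $P\cup\{\infty\}\subseteq V(\theta)$ for every $\theta$, so both $V^-(\phi)$ and $V^+(\phi)$ consist only of regular vertices.

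For part (i) I would prove both inclusions. If $v\in V^-(\phi)$, then $v$ is regular, $\phi-\epsilon$ is valid for $\ct_v$, and $\phi+\epsilon$ is not; since $\phi-\epsilon$ is valid for every small $\epsilon$, the valid interval of $\ct_v$ containing $\phi-\epsilon$ has right endpoint $\phi$, so by Lemma~\ref{lem:valid_interval} the square $S:=S_{\ct_v}(\phi)$ is a $4$-square, hence $S\in\Sq_4(\phi)$, and since $\ct_v$ is a three-element subset of the contact type of $S$ we get $v\in V^-_S$. Conversely, let $v\in V^-_S$ with $S\in\Sq_4(\phi)$ of contact type $\ct$; then $v$ is regular in $V(\phi-\epsilon)$ with $\ct_v\subset\ct$, and since $S$ has orientation $\phi$ with contact type $\ct\supset\ct_v$ it is the unique square in orientation $\phi$ whose contact type contains $\ct_v$, i.e.\ $S_{\ct_v}(\phi)=S$, a $4$-square, so $\phi$ is not valid for $\ct_v$. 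The case analysis in the proof of Lemma~\ref{lem:4sq-3sq} (see \figurename~\ref{fig:transition}) shows that each contact type relevant to $S$ is valid on exactly one of the two sides of $\phi$ --- the rotation crosses, rather than merely grazes, the $4$-square --- hence $\phi+\epsilon$ is not valid for $\ct_v$ and $v\in V^-(\phi)$. The $V^+$ case is symmetric. In passing, this argument shows $V^-_S\cap V^-_{S'}=\emptyset$ for distinct $S,S'\in\Sq_4(\phi)$, because $v\in V^-_S$ forces $S=S_{\ct_v}(\phi)$, and likewise for $V^+$.

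For part (ii), one inclusion is immediate: any edge of $E(\phi-\epsilon)$ incident to a vertex $v\in V^-(\phi)$ cannot belong to $E(\phi+\epsilon)$ since $v\notin V(\phi+\epsilon)$, so it lies in $E^-(\phi)$. For the reverse, take $e=uv\in E^-(\phi)$ and suppose $u,v\notin V^-(\phi)$, so both persist into $V(\phi+\epsilon)$ with unchanged contact types; as an edge is identified by $(\ct_u,\ct_v;\ct_e)$ with $\ct_e=\ct_u\cap\ct_v$ for bounded edges (Lemma~\ref{lem:edge_vertex}), it then suffices to show that $u$ and $v$ remain adjacent across $\phi$. I would establish this by going through the types in Lemma~\ref{lem:4sq-3sq} and \figurename~\ref{fig:transition}: every combinatorial change of $\VD(\theta)$ at $\phi$ is attached to some $S\in\Sq_4(\phi)$, and for a non-stapled $S$, or a stapled $(4,2)$-, $(4,3)$-(a)-, $(4,4)$-(a)- or $(4,4)$-(b)-square, the change is an edge flip (respectively a merge or split of one pair of regular vertices) confined to $V^-_S\cup V^+_S$ and to the edges incident to those vertices, so an edge with both endpoints outside $V^-_S\cup V^+_S$ is untouched. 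The delicate case is a stapled $(4,3)$-(b)- or (c)-square, where $S$ contributes no regular vertex on one side of $\phi$ and two on the other, so the event is not a plain vertex merge but the appearance or disappearance --- at the orientation $\phi$ of the stapled segment --- of a local stapled structure possibly shared among several $4$-squares; here one must read off from \figurename~\ref{fig:transition} that the regular edges of $E(\phi-\epsilon)$ destroyed in the event are exactly those incident to the regular vertices that are absorbed, i.e.\ to vertices of $V^-(\phi)$. Granting this in every case, $e\in E(\phi+\epsilon)$, a contradiction; the statement for $E^+(\phi)$ follows the same way. I expect this stapled $(4,3)$-(b)/(c) bookkeeping to be the main obstacle, precisely because there the transition is not a clean vertex collision.

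For part (iii), part (i) together with the disjointness just noted gives that $V^-(\phi)$ is the disjoint union of the $V^-_S$ and $V^+(\phi)$ the disjoint union of the $V^+_S$ over $S\in\Sq_4(\phi)$, and by Lemma~\ref{lem:4sq-3sq} each $|V^-_S|+|V^+_S|$ equals $4$ when $S$ is non-stapled and $2$ when $S$ is stapled, so
\[ 2\,|\Sq_4(\phi)| \;\le\; |V^-(\phi)|+|V^+(\phi)| \;\le\; 4\,|\Sq_4(\phi)| .\]
By part (ii) and the vertex-degree bound of Lemma~\ref{lem:vertextype} (at most five in general, and three for the regular vertices at $\phi\pm\epsilon$), we have $|E^-(\phi)|\le 3\,|V^-(\phi)|$ and $|E^+(\phi)|\le 3\,|V^+(\phi)|$. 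Adding the four quantities therefore yields a total between $2\,|\Sq_4(\phi)|$ and $16\,|\Sq_4(\phi)|$, which is $\Theta(|\Sq_4(\phi)|)$, proving (iii).
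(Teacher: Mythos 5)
Your proof follows the same approach as the paper, invoking Lemma~\ref{lem:valid_interval} for part (i), the persistence of edges with persisting endpoints for part (ii), and the counts from Lemmas~\ref{lem:4sq-3sq} and~\ref{lem:vertextype} for part (iii); the paper's own proof is in fact terser on the very points you flag as delicate (the converse inclusion in (i), and the edge-persistence claim in (ii) for stapled $(4,3)$-(b)/(c) events), simply asserting them with reference to Lemma~\ref{lem:valid_interval} and Lemma~\ref{lem:4sq-3sq}. Your added observations (disjointness of the $V^\pm_S$ across distinct $S$, and the degree-$3$ bound for regular vertices at regular orientations, sharpening the paper's constant $20$ to $16$) are correct refinements but do not change the argument.
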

\begin{proof}
As discussed above, Lemma~\ref{lem:valid_interval} directly implies that
$V^-(\phi) = \bigcup_{S\in\Sq_4(\phi)} V^-_S$
and $V^+(\phi) = \bigcup_{S\in\Sq_4(\phi)} V^+_S$.

Any edge $e\in E(\phi-\epsilon)$ still appears in $E(\phi)$ and $E(\phi+\epsilon)$
if and only if the two vertices $u, v\in V(\phi-\epsilon)$ incident to $e$
still appear in $V(\phi)$ and $V(\phi+\epsilon)$.
Hence, $e\in E^-(\phi)$ if and only if $e$ is incident to a vertex
in $V^-(\phi)$.
Analogously, $e\in E^+(\phi)$ if and only if $e$ is incident to a vertex
in $V^+(\phi)$.

Finally,
Lemma~\ref{lem:4sq-3sq} states that $|V^-_S| + |V^+_S|$ is either two or four
for any $4$-square $S\in\Sq_4(\phi)$.
Thus, the number of vertices in $V^-(\phi)$ and $V^+(\phi)$
is at most $4\cdot |\Sq_4(\phi)|$.
Since $E^-(\phi)$ and $E^+(\phi)$ consists of edges that are incident to
these vertices,
we have $|E^-(\phi)|+|E^+(\phi)|$ is at most $20\cdot |\Sq_4(\phi)|$
by Lemma~\ref{lem:vertextype}.
Therefore, we have
$|V^-(\phi)| + |V^+(\phi)| + |E^-(\phi)| + |E^+(\phi)|
    = \Theta(|\Sq_4(\phi)|)$.
\end{proof}

%

\subsection{Events}
Thus, every combinatorial change of $\VD(\theta)$
can be specified by finding all degenerate orientations and
all $4$-squares.
For the purpose, our algorithm handles \emph{events},
defined as follows:
\begin{itemize} \denseitems
 \item An \emph{edge event} is a pair $(e, \phi)$
 for a bounded edge $e\in E(\phi - \epsilon)$ and an orientation $\phi\in\OS$
 such that the embedding $\hat{e} \in \hat{E}(\phi -\epsilon)$ of $e$
 is about to collapse into a point in orientation $\phi$.
 As a result, the two vertices $u, v$ incident to $e$
 are merged into one with contact type $\ct_u \cup \ct_v$,
 and there is a unique $4$-square $S\in\Sq_4(\phi)$
 such that $S = S_{\ct_u}(\phi) = S_{\ct_v}(\phi)$ and its contact type
 is $\ct_u \cup \ct_v$.
 We call $S$ the \emph{relevant} square to this edge event.
 \item An \emph{align event} is a triple $(p, q, \phi)$
 for two distinct points $p, q \in P$ and an orientation $\phi \in \OS$
 such that
 the orientation of the line segment $pq$ is either $\phi$ or $\phi + \pi/2$
 and there is a $4$-square in $\Sq_4(\phi)$ one of whose sides
 contains both $p$ and $q$.
 Each such $4$-square in $\Sq_4(\phi)$ that one of its sides contains
 both points $p$ and $q$ is called \emph{relevant} to this align event.
\end{itemize}
We say that an event \emph{occurs} at $\phi \in \OS$
if its associated orientation is $\phi$.
%

We then observe the following lemmas.
\begin{lemma} \label{lem:event_4sq}
 For any degenerate orientation $\phi\in\OS$, an event occurs at $\phi$.
 More precisely,
 every stapled $4$-square in $\Sq_4(\phi)$ is relevant to an align event
 at $\phi$ and
 every non-stapled $4$-square in $\Sq_4(\phi)$
 is relevant to an edge event at $\phi$.
\end{lemma}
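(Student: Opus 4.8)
The plan is to classify every $4$-square $S \in \Sq_4(\phi)$ by whether it is stapled or not, and to exhibit the appropriate event in each case directly from the structural analysis of Section~\ref{sec:sq_VD}. First I would dispose of the non-stapled case. If $S$ is non-stapled, then by Lemma~\ref{lem:4sq-3sq} all four three-element subsets $\ct' \subset \ct$ have three pinned sides, and the transition picture (\figurename~\ref{fig:transition}) shows that two of the corresponding regular vertices live in $V(\phi-\epsilon)$ and two in $V(\phi+\epsilon)$, with a bounded edge $e \in E(\phi-\epsilon)$ joining the former pair whose contact type is $\ct_u \cap \ct_v$ and whose embedding $\hat e$ shrinks to the center of $S$ as $\theta \to \phi^-$. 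This is exactly the definition of an edge event $(e,\phi)$ with relevant square $S$, so every non-stapled $4$-square in $\Sq_4(\phi)$ is relevant to an edge event at $\phi$.

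Next I would handle the stapled case. If $S \in \Sq_4(\phi)$ is stapled, then by the definition of ``stapled'' (and by the general position assumption together with Lemmas~\ref{lem:GP1} and~\ref{lem:GP2}) there is a unique stapled side of $S$ containing two distinct points $p, q \in P$, and the orientation of segment $pq$ equals the orientation of that side, hence is $\phi$ or $\phi + \pi/2$ by the definition of the orientation of a square. Therefore the pair $(p,q)$ together with $\phi$ satisfies the two defining conditions of an align event: the segment $pq$ is aligned with one of the square's side directions at orientation $\phi$, and $S$ itself is a $4$-square in $\Sq_4(\phi)$ one of whose sides contains both $p$ and $q$. Thus $(p,q,\phi)$ is an align event and $S$ is relevant to it. Since every stapled $4$-square with stapled points on the relevant side contributes such a witnessing pair, every stapled $4$-square in $\Sq_4(\phi)$ is relevant to an align event at $\phi$.

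Finally, to close out the first sentence of the lemma, recall that $\Sq_4(\phi)$ is nonempty for every degenerate orientation $\phi$ (this is part of the setup preceding Lemma~\ref{lem:4sq-3sq}), so picking any $S \in \Sq_4(\phi)$ and applying the dichotomy above produces either an edge event or an align event occurring at $\phi$. The only point needing a little care --- and the place I expect the bookkeeping to be most delicate --- is verifying that the object produced in the non-stapled case genuinely fits the \emph{edge event} definition: one must check that the edge $e$ is \emph{bounded} (so that it can collapse to a point rather than escaping to $\hat\infty$), that $\hat e$ collapses precisely in orientation $\phi$ and not merely asymptotically, and that $\ct_u \cup \ct_v = \ct$ so that the relevant square is exactly $S$. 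All three follow from Lemma~\ref{lem:edge_vertex} (the union of the squares along $e$ is $S_u \cup S_v$, forced to be a single square when the endpoints merge) and from the explicit contact-type computations in the proof of Lemma~\ref{lem:4sq-3sq}, but they are worth stating explicitly rather than leaving to the figure. The stapled case, by contrast, is essentially a direct unwinding of definitions and should require no real argument beyond the observation about the orientation of the stapled side.
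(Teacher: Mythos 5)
Your proposal is correct and follows essentially the same route as the paper's proof: you split on whether $S$ is stapled or not, read off the align event from the stapled side in the first case, and invoke Lemma~\ref{lem:4sq-3sq} and the transition picture (\figurename~\ref{fig:transition}) to produce the collapsing edge $e\in E(\phi-\epsilon)$ in the second. The bookkeeping caveats you flag at the end (boundedness of $e$, exact collapse at $\phi$, $\ct_u\cup\ct_v=\ct$) are sound points of care, but they are the same points the paper handles implicitly via Lemma~\ref{lem:4sq-3sq}.
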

\begin{proof}
Let $\phi \in \OS$ be a degenerate orientation and
$S \in \Sq_4(\phi)$ be a $4$-square in orientation $\phi$.
By the contact type $\ct$ of $S$,
$S$ is either stapled or not.

We first suppose that $S$ is stapled,
so there are two points on a common side of $S$.
Assume without loss of generality that the bottom side $\mb(S)$ of $S$ is stapled,
and $p, q \in P$ are the two contact points on $\mb(S)$,
that is, $(p, \mb), (q, \mb) \in \ct$.
Observe that the segment $pq$ is in orientation $\phi$.
Hence, we have an align event $(p,q,\phi)$ that occurs at $\phi$.

Next, suppose that $S$ is non-stapled.
Since $|\ct| = 4$, all the four sides of $S$ are pinned.
We distinguish three cases depending on the number of contact points:
$\ct$ is either $(4,2)$-type, $(4,3)$-type, or $(4,4)$-type.
In this case, we always have an edge event as shown in
the first three rows of \figurename~\ref{fig:transition}.
More precisely,
as observed in Lemma~\ref{lem:4sq-3sq},
all four different contact types $\ct'\subset\ct$ with $|\ct'|=3$
determine two regular vertices in $V(\phi-\epsilon)$ and
the other two in $V(\phi+\epsilon)$,
for a sufficiently small $\epsilon>0$,
since $S$ is non-stapled.
Let $u,v \in V(\phi-\epsilon)$ be the first two regular vertices
with $\ct_u, \ct_v \subset \ct$.
As shown in the proof of Lemma~\ref{lem:4sq-3sq} and \figurename~\ref{fig:transition},
there is a regular edge $e \in E(\phi-\epsilon)$ between $u$ and $v$
such that $\ct_e = \ct_u \cap \ct_v$.
The squares $S_{\ct_u}(\theta)$ and $S_{\ct_v}(\theta)$ corresponding to
$u$ and $v$ converge to $S_{\ct_u}(\phi)$ and $S_{\ct_v}(\phi)$,
respectively, as $\epsilon$ tends to zero,
and we know that $S_{\ct_u}(\phi) = S_{\ct_v}(\phi) = S$.
Hence, there is an edge event $(e, \phi)$.
\end{proof}

\begin{figure}[tbh]
\begin{center}
\includegraphics[width=.99\textwidth]{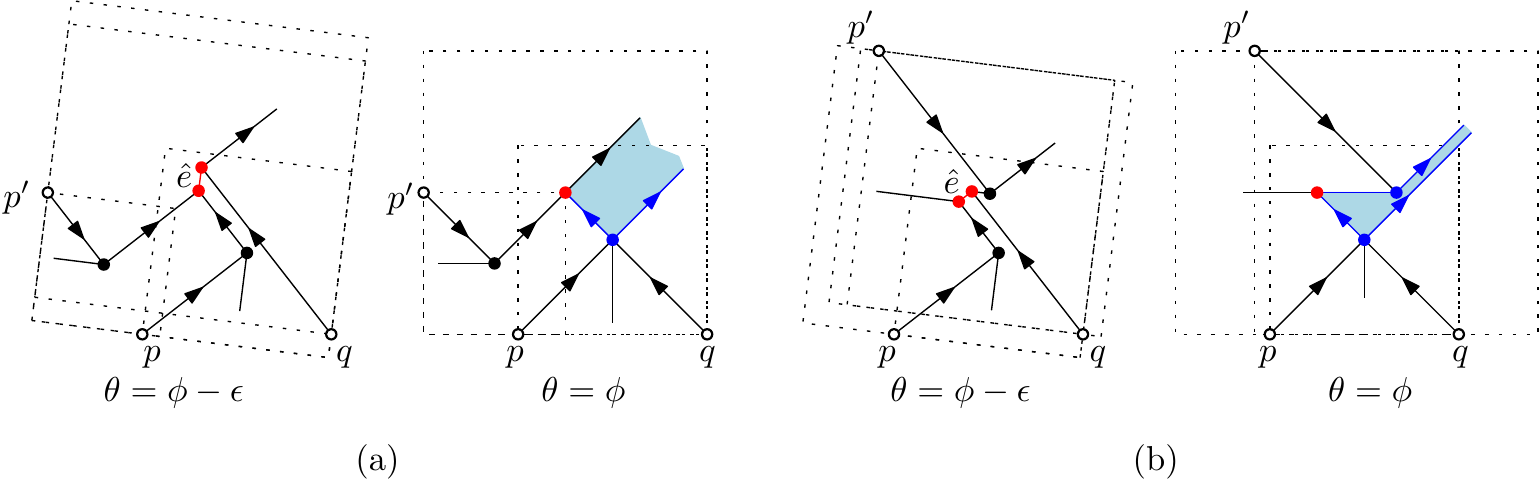}
\end{center}
\caption{Illustration to the proof of Lemma~\ref{lem:inner_alignevent}.
 This figure shows the combinatorial changes around
 the vertex corresponding to a stapled $(4,2)$-square
 and incident edges when
(a) $p'$ lies on the left side of $S'$ and (b) when $p'$ lies on
the top side of $S'$,
at $\theta = \phi-\epsilon$ and $\theta = \phi$.
 The red edge $\hat{e}$ in $\phi-\epsilon$ is collapsed into
 the red vertex in $\phi$.
 }
\label{fig:42align}
\end{figure}

We call an align event $(p,q,\phi)$ \emph{an outer align event}
if both $p$ and $q$ appear consecutively on the boundary of $\OCH(\phi)$
or, otherwise, \emph{an inner align event}.
By Lemma~\ref{lem:unbounded_och},
an outer align event is closely related to the orthogonal convex hull.
If $(p,q,\phi)$ is an outer align event,
then one of $p$ and $q$ was not a vertex of $\OCH(\phi-\epsilon)$
and is about to become a vertex of $\OCH(\phi+\epsilon)$,
or reversely one of the two was a vertex of $\OCH(\phi-\epsilon)$
and is about to disappear from the boundary of $\OCH(\phi+\epsilon)$,
for any arbitrarily small positive $\epsilon>0$.
This implies that
any outer align event corresponds to a combinatorial change of $\OCH(\theta)$
as $\theta$ increases.
Thus, we can precompute all outer align events by using
any existing method that maintains $\OCH(\theta)$ for all $\theta\in\OS$,
such as Alegr\'{i}­a-Galicia et al.~\cite{aosu-obhpps-18}.
\begin{lemma} \label{lem:outer_alignevent}
 There are $O(n)$ outer align events and we can compute
 all outer align events in $O(n\log n)$ time.
\end{lemma}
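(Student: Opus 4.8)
The plan is to connect outer align events bijectively (or nearly so) with combinatorial changes of the orthogonal convex hull $\OCH(\theta)$, and then invoke a known dynamic-maintenance algorithm for $\OCH(\theta)$ to both bound and compute them. First I would recall from Lemma~\ref{lem:unbounded_och} that the contact points of any square relevant to an outer align event appear consecutively along a staircase of $\OCH(\phi)$; combined with the discussion immediately preceding the statement, at an outer align event $(p,q,\phi)$ one of $p,q$ either enters or leaves the boundary of $\OCH(\theta)$ as $\theta$ passes through $\phi$. Conversely, whenever a point of $P$ joins or leaves the boundary of $\OCH(\theta)$, it does so by becoming flush with a neighbouring hull vertex along a staircase, which is exactly an outer align event. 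Thus the outer align events are in one-to-one correspondence (up to a constant factor accounting for the fact that a single critical orientation may involve both the horizontal and vertical staircases, and that a point can be simultaneously on two staircases) with the \emph{combinatorial changes} of $\OCH(\theta)$ over $\theta\in\OS$.

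Next I would appeal to the result of Alegr\'{i}a-Galicia et al.~\cite{aosu-obhpps-18}, cited in the introduction, which maintains the orthogonal convex hull of $P$ over all rotations in $O(n\log n)$ time using $O(n)$ space. Their algorithm in particular processes all combinatorial changes of $\OCH(\theta)$ as $\theta$ ranges over a full rotation, and the total number of such changes is $O(n)$. Since each such change is (up to a constant) an outer align event and vice versa, we immediately get that there are $O(n)$ outer align events, and that running their algorithm and recording the orientation and the pair of points involved at each change yields the complete list of outer align events within the same $O(n\log n)$ time bound. (One should restrict the rotation to the quarter-turn $\OS = [0,\pi/2)$; this only changes constants.)

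The main obstacle I expect is the bookkeeping needed to make the correspondence between "outer align event" and "combinatorial change of $\OCH(\theta)$" precise and tight. Specifically, I must check that: (a) every orientation $\phi$ at which a point of $P$ appears on or disappears from a staircase is indeed the orientation of the segment $pq$ for the appropriate neighbouring hull point $q$, so that it really is an align event $(p,q,\phi)$ with a relevant $4$-square; (b) conversely, every outer align event forces such a hull change — this follows from Lemma~\ref{lem:unbounded_och} together with the observation that a stapled side of the relevant square lies along a staircase edge, so sliding/growing the square to realise the stapled configuration forces a point onto the hull boundary; and (c) the multiplicities match up to a constant, i.e. a single critical orientation of $\OCH(\theta)$ produces only a bounded number of outer align events. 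With these verified, the count $O(n)$ and the construction time $O(n\log n)$ both transfer directly from~\cite{aosu-obhpps-18}, completing the proof.
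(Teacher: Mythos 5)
Your proposal matches the paper's own proof in both structure and substance: establish a (near) bijection between outer align events and combinatorial changes of $\OCH(\theta)$, then invoke the $O(n\log n)$-time, $O(n)$-change maintenance result of Alegr\'{i}a-Galicia et al.~\cite{aosu-obhpps-18}. The paper carries out what you list as verification steps (a)--(c) slightly more concretely (in particular, for the converse direction it explicitly exhibits a stapled $(4,2)$-square whose two contact points are the consecutive hull points $p,q$ and whose interior avoids $\OCH(\phi)$), but the argument and the conclusion are the same.
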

\begin{proof}
Every combinatorial change of $\OCH(\theta)$ happens when two points $p,q\in P$
appear consecutively on one staircase of $\OCH(\theta)$ and
the orientation of segment $pq$ is either $\theta$ or $\theta+\pi/2$.
Bae et al.~\cite{blacc-cmarchls-09} proved that there are
$O(n)$ combinatorial changes of $\OCH(\theta)$ as $\theta \in \OS$ continuously increases,
and showed how to find them in $O(n^2)$ time.
Later, the time complexity was improved to $O(n\log n)$ time
by Alegr\'{i}­a-Galicia et al.~\cite{aosu-obhpps-18}.

The above discussion shows that
each outer align event comes along with a combinatorial change of
the orthogonal convex hull $\OCH(\theta)$.
Conversely, if there is a combinatorial change of $\OCH(\theta)$
at $\theta = \phi$,
then there exist two points $p,q\in P$ lying on its boundary
such that the orientation of segment $pq$ is
either $\phi$ or $\phi+\pi/2$.
Now, observe that there is a stapled $(4,2)$-square $S$
such that both $p$ and $q$ lie on two corners of $S$
and the interior of $S$ does not intersect with $\OCH(\phi)$.
Hence, we have an outer align event $(p,q,\phi)$.
This implies a one-to-one correspondence
between outer align events and combinatorial changes of $\OCH(\theta)$.
Consequently, there are $O(n)$ outer align events and
we can compute all of them in $O(n\log n)$ time
by the algorithm of Alegr\'{i}­a-Galicia et al.~\cite{aosu-obhpps-18}.
\end{proof}

We then observe the following for inner align events.
\begin{lemma} \label{lem:inner_alignevent}
  If an inner align event occurs at $\phi\in\OS$, then
  there is a stapled $(4,3)$-square in $\Sq_4(\phi)$
  that is relevant to an edge event that occurs at $\phi$.
\end{lemma}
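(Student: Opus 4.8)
The plan is to route the given inner align event through a stapled $(4,2)$-square realized at a vertex of $\VD(\phi)$, and then to take a single step along a neutral edge out of that vertex to land on a stapled $(4,3)$-square of type (b) or (c), which is automatically relevant to an edge event.

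First I would fix a stapled $4$-square $S\in\Sq_4(\phi)$ that is relevant to the align event $(p,q,\phi)$; such an $S$ exists by the very definition of an align event (a side of $S$ carrying the two distinct points $p,q$ is stapled). Applying a symmetry of the square, and treating the two sub-cases of the align-event definition symmetrically, I may assume that $p$ and $q$ both lie on the bottom side $\mb(S)$ with $p$ to the left of $q$, so that segment $pq$ has orientation $\phi$. Working in the coordinate frame rotated by $\phi$, the bottom side of $S$ is horizontal and its length equals the side length of $S$, which is therefore at least $|pq|$; hence the square $S'$ of orientation $\phi$ whose bottom side is exactly the segment $pq$ (so of side length $|pq|$) is contained in $S$, and in particular $S'$ is empty. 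Its contact type contains $\{(p,\mb),(p,\ml),(q,\mb),(q,\mr)\}$ — $p$ sits at the bottom-left corner and $q$ at the bottom-right corner of $S'$ — and by the general-position assumption (no square has five contact pairs) it contains nothing more. Thus $S'$ is a stapled $(4,2)$-square, and there is a vertex $v'\in V(\phi)$ with $\ct_{v'}=\{(p,\mb),(p,\ml),(q,\mb),(q,\mr)\}$.

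Next I would examine the edges of $\VD(\phi)$ incident to $v'$. By Lemma~\ref{lem:vertextype}, a stapled $(4,2)$-vertex has exactly two non-regular growing edges directed outwards from it, namely the edge $e$ with $\ct_e=\{(p,\mb),(p,\ml),(q,\mb)\}$ and the edge with contact type $\{(p,\mb),(q,\mr),(q,\mb)\}$. This is where the \emph{inner} hypothesis is used: the embedding $\hat{v'}$ is the centre of a nontrivial square, not a point of $P$, so by Lemma~\ref{lem:unbounded_och} an unbounded edge at $v'$ would force the contact points $p,q$ of $\ct_{v'}$ to appear consecutively on a staircase of $\OCH(\phi)$, that is, it would make $(p,q,\phi)$ an \emph{outer} align event, contrary to assumption. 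Hence the outward edge $e$ is bounded; let $w$ be its other endpoint and $S_w:=S_{\ct_w}(\phi)$ the corresponding square. Moving along $e$ away from $v'$, the square grows (this being the growing direction of $e$) while $p$ stays at its bottom-left corner and $p,q$ stay on its bottom side, so $S'\subsetneq S_w$ by Lemma~\ref{lem:edge_vertex}; in particular $S_w$ is empty, and $w$ is not $\infty$ and not a point of $P$, so $|\ct_w|=4$ by general position. The fourth contact pair acquired at $w$ cannot involve the bottom side (general position forbids three distinct points on a stapled side), nor the left side (that would be a fifth contact pair for $p$), and it cannot be $(q,\mr)$ (otherwise $\ct_w=\ct_{v'}$, forcing $S_w=S'$ since four contact pairs determine a square of a fixed orientation uniquely). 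So the fourth contact point is some $r\in P\setminus\{p,q\}$ lying on the top side or the right side of $S_w$, which makes $S_w$ a stapled $(4,3)$-square of type (b) (if $r$ is on the top) or type (c) (if $r$ is on the right).

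It remains to note that a stapled $(4,3)$-square of type (b) or (c) is relevant to an edge event at $\phi$: as analysed in Lemma~\ref{lem:4sq-3sq} and the discussion following it, exactly two subsets of $\ct_w$ with three contact pairs are simultaneously valid on one side of $\phi$, they define two regular vertices joined by a regular growing edge, and the embedding of that edge collapses to $\hat w$ as $\theta\to\phi$ from that side — which is precisely an edge event whose relevant square is $S_w$. Thus $S_w\in\Sq_4(\phi)$ is the desired stapled $(4,3)$-square. I expect the only genuinely delicate point to be the use of Lemma~\ref{lem:unbounded_och} to rule out unbounded outward edges at $v'$: this is the single place where the hypothesis "inner" (rather than "outer") is invoked, and it is exactly why outer align events must be handled separately through the orthogonal convex hull. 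The remaining steps — pinning down the types of $S'$ and of $S_w$, and checking that $e$ is the edge to follow — are routine local case analyses against the twelve vertex types of \figurename~\ref{fig:vertextype}.
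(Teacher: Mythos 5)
Your overall strategy mirrors the paper's: realize the inner align event as a stapled $(4,2)$-vertex $v'$ of $\VD(\phi)$, use the \emph{inner} hypothesis together with Lemma~\ref{lem:unbounded_och} to show the non-regular outward growing edges at $v'$ are bounded, walk to the other endpoint to obtain a stapled $(4,3)$-square, and appeal to Lemma~\ref{lem:4sq-3sq}. The preparatory steps (existence of $v'$, boundedness, and the case analysis showing the other endpoint is a stapled $(4,3)$-square of type (b) or (c)) are sound. However, there is a genuine gap in the final step: \emph{you followed the wrong outward edge}.

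With $p$ at the bottom-left corner and $q$ at the bottom-right corner of the stapled $(4,2)$-square, the two outward non-regular growing edges are $e_4$ with $\ct_{e_4}=\{(p,\mb),(p,\ml),(q,\mb)\}$ (keeping $p$ anchored at the bottom-left corner, growing to the upper-right) and $e_5$ with $\ct_{e_5}=\{(p,\mb),(q,\mb),(q,\mr)\}$ (keeping $q$ anchored at the bottom-right corner, growing to the upper-left). You followed $e_4$ and arrived at a square $S_w$ with $\ct_w=\{(p,\mb),(p,\ml),(q,\mb),(r,\ma)\}$, $\ma\in\{\mt,\mr\}$, $p$ at the bottom-left corner and $q$ to its right on the bottom side. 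For exactly this configuration, the paper's own proof of Lemma~\ref{lem:4sq-3sq} records that the two three-pair subsets of $\ct_w$ with three pinned sides are valid for $\phi+\epsilon$ and \emph{none} is valid for $\phi-\epsilon$ (geometrically: tilting the bottom side about the anchored corner $p$ counterclockwise pushes $q$ out of the square, while tilting clockwise pushes $q$ into the interior). Consequently the regular edge joining the two corresponding Voronoi vertices exists only in $E(\phi+\epsilon)$: it is \emph{created} as $\theta$ passes $\phi$, it does not collapse. Since the paper's definition of an edge event is a bounded edge $e\in E(\phi-\epsilon)$ collapsing at $\phi$, the square $S_w$ you construct is \emph{not} relevant to any edge event at $\phi$; it is one of the ``$V^-_S=\emptyset$'' squares that the algorithm recovers only \emph{after} it has already learned that the align event is occurring. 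This defeats the purpose of the lemma, which is to show that the inner align event can be \emph{discovered} via an edge event.

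The fix is local: follow $e_5$ instead, exactly as the paper does. Keeping $q$ anchored at the bottom-right corner and $p$ to its left on the bottom side reverses the valid direction — the two three-pair subsets of the resulting $\ct_{w'}=\{(p,\mb),(q,\mb),(q,\mr),(p',\ma)\}$, $\ma\in\{\mt,\ml\}$, are valid for $\phi-\epsilon$, so the connecting edge lies in $E(\phi-\epsilon)$ and genuinely collapses at $\phi$, producing the required edge event. You already argued that both outward edges are bounded, so this is a one-line repair rather than a change of approach, but as written the argument does not establish the conclusion. A separate, minor quibble: in ruling out the fourth contact pair of $S_w$ lying on the left side, the parenthetical reason ``that would be a fifth contact pair for $p$'' is off; the correct reason is that the square would then have two stapled sides (bottom and left), which the general position assumption forbids.
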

\begin{proof}
We first show that if any align event occurs at $\phi$,
there exists a stapled $(4,2)$-square in $\Sq_4(\phi)$.
For any align event $(p, q, \phi)$,
there exists a stapled $4$-square $S' \in \Sq_4(\phi)$
whose contact type $\ct$ contains pairs $(p, \ma)$ and $(q, \ma)$
for some $\ma \in \{\mt,\mr,\mb,\ml\}$.
Without loss of generality, we assume that $\ma = \mb$
and $p$ is to the left of $q$.
Whichever type of $S'$ is, there is a stapled $(4,2)$-square $S\in \Sq_4(\phi)$
with contact type $\{(p, \mb), (p, \ml), (q, \mb), (q,\mr)\}$.
Indeed, one can check this fact from the five stapled vertex types
other than stapled $(4,2)$-type
listed in Lemma~\ref{lem:vertextype} and \figurename~\ref{fig:vertextype}.

Suppose that $(p, q, \phi)$ is an inner align event,
so $p$ and $q$ are not consecutive along the boundary of $\OCH(\phi)$.
Note that both may appear on two distinct staircases of $\OCH(\phi)$.
By above discussion, there is a stapled $(4,2)$-square $S \in \Sq_4(\phi)$
whose contact points are $p$ and $q$.
Let $v\in V(\phi)$ be the vertex corresponding to $S$.
Without loss of generality, we assume that
the contact type of $S$ is $\ct_v = \{(p,\mb),(p,\ml),(q,\mb),(q,\mr)\}$,
so $p$ lies on the bottom-left corner of $S$ and 
$q$ lies on the bottom-right corner.
Let $S(c)$ for $c\in \Plane$ be the maximal empty square centered at $c$.
We then grow $S=S(\hat{v})$ with its bottom-right corner fixed
by moving $c$ from $\hat{v}$ along the growing non-regular edge $\hat{e}$
directed in the upper-left direction, see the stapled $(4,2)$-type in
\figurename~\ref{fig:vertextype}.
Note that the contact type of $e$ is $\ct_e = \{(p,\mb),(q,\mb),(q,\mr)\}$.
Observe that $e$ is bounded, so we will reach a vertex $v' \in V(\phi)$
incident to $e$ other than $v$.
Otherwise, if $e$ is unbounded, then it defines an empty quadrant
and thus $p$ and $q$ must be consecutive on a staircase of $\OCH(\phi)$
by Lemma~\ref{lem:unbounded_och}, leading to a contradiction.
At $c=v'$, the square $S'=S(\hat{v'})$ gains another contact pair $(p',\ma)$
with a third contact point $p'\in P$ for some $\ma \in \{\mt, \ml\}$,
that is, $\ct_{v'} = \{(p,\mb), (q,\mb),(q,\mr), (p',\ma)\}$.
So, $S'$ is a stapled $(4,3)$-square in $\Sq_4(\phi)$.

There are two cases: either $\ma = \mt$ or $\ma = \ml$.
First, suppose that $\ma = \ml$, as illustrated in
the right of \figurename~\ref{fig:42align}(a) for $\theta=\phi$.
Let $\epsilon>0$ be a sufficiently small positive real,
and set $\theta = \phi-\epsilon$.
Then, there are two vertices $u, u'\in V(\theta)$ such that
$\ct_u = \{(p',\ml), (q,\mr),(q,\mb)\}$
and $\ct_{u'} = \{(p',\ml), (q,\mr),(p,\mb)\}$,
and an edge $e' \in E(\theta)$ between $u$ and $u'$
with $\ct_{e'} = \ct_u \cap\ct_{u'} = \{(p',\ml),(q,\mr)\}$.
Thus, $e'$ is sliding.
See the left of \figurename~\ref{fig:42align}(a).
It is obvious that both squares $S_{\ct_u}(\theta)$ and $S_{\ct_{u'}}(\theta)$
converge to the stapled $(4,3)$-square $S'$ as $\epsilon$ tends to zero,
so $S' = S_{\ct_u}(\theta) = S_{\ct_{u'}}(\theta)$.
Hence, we have an edge event $(e', \phi)$ that occurs at $\phi$
and $S'$ is relevant to it.

Next, we consider the latter case where $\ma = \mt$,
as illustrated in the right of \figurename~\ref{fig:42align}(b).
Then, for $\theta = \phi - \epsilon$,
there are two vertices $u, u'\in V(\theta)$ with
$\ct_u = \{(p',\mt),(q,\mr),(q,\mb)\}$
and $\ct_{u'} = \{(p',\mt),(q,\mr),(p,\mb)\}$,
and an edge $e' \in E(\theta)$ between $u$ and $u'$
with $\ct_{e'} = \ct_u \cap\ct_{u'} = \{(p',\mt),(q,\mr)\}$.
Thus, $e'$ is growing in this case.
See the left of \figurename~\ref{fig:42align}(b).
Observe that both squares $S_{\ct_u}(\theta)$ and $S_{\ct_{u'}}(\theta)$
converge to $S'$ as $\epsilon$ tends to zero.
Hence, we have an edge event $(e, \phi)$ that occurs at $\phi$
and a stapled $(4,3)$-square $S'$ is relevant to it.
\end{proof}

Lemma~\ref{lem:inner_alignevent} implies that
every inner align event can be noticed by handling
an edge event whose relevant $4$-square is of stapled $(4,3)$-type.
This, together with Lemma~\ref{lem:outer_alignevent},
allows us to maintain the diagram $\VD(\theta)$ in an efficient
and output-sensitive way,
as we do not need to test all pairs of points $p,q\in P$
for potential align events.

In order to catch every edge event,
we define the potential edge event as follows:
for any regular orientation $\theta\in\OS$ and
any bounded edge $e = uv \in E(\theta)$,
the \emph{potential edge event} $w(e,\theta)$ for $e$ and $\theta$
is a pair $(e,\phi)$ such that
$\theta < \phi < \pi/2$ and $S_{\ct_u}(\phi) = S_{\ct_v}(\phi)$,
regardless of its emptiness.
If such $\phi$ does not exist, then $w(e,\theta)$ is undefined.
\begin{lemma} \label{lem:potential_event}
 The potential edge event $w(e, \theta)$ is uniquely defined,
 unless undefined.
 Given $e$ and $\theta$, one can decide if $w(e,\theta)$ is defined and
 compute it, if defined, in $O(1)$ time.
\end{lemma}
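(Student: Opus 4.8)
The statement claims that for a bounded edge $e=uv\in E(\theta)$ at a regular orientation $\theta$, the ``potential edge event'' $w(e,\theta)=(e,\phi)$ is unique if it exists, and can be detected/computed in $O(1)$ time. The plan is to translate the condition $S_{\ct_u}(\phi)=S_{\ct_v}(\phi)$ into an equation in the single unknown $\phi$, using the sinusoidal parametrizations of squares already exploited in the proofs of Lemma~\ref{lem:GP1} and Lemma~\ref{lem:4sq-3sq}, and then argue this equation has at most one solution in the relevant subinterval of $\OS$.

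First I would fix the contact type $\ct_e=\ct_u\cap\ct_v$, which by Lemma~\ref{lem:regularVD} has exactly two pinned sides (since $\theta$ is regular, both $u,v$ are regular or in $P$, and the edge types are the five of \figurename~\ref{fig:edgetype}). I would split into the sliding and growing cases. If $e$ is sliding, the two pinned sides of $\ct_e$ are parallel, determined by two points $p,q\in P$; the square $S_{\ct_e}(\phi)$ has side length equal to the distance between the two parallel lines through $p$ and $q$ in orientation $\phi$, which is $|pq|\cdot|\sin(\phi+\gamma)|$ for a fixed $\gamma$ depending on the orientation of $pq$. The squares $S_{\ct_u}(\phi)$ and $S_{\ct_v}(\phi)$ are the two ``extreme'' such squares that gain a third contact — they coincide exactly when this side length matches the length contributed by the third contact pair along the orthogonal direction, which is again of the form $|p'q'|\cdot|\sin(\phi+\gamma')|$ for the relevant points. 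So $w(e,\theta)$ corresponds to a solution of an equation $a|\sin(\phi+\gamma)|=b|\sin(\phi+\gamma')|$, and as in Lemma~\ref{lem:GP1} two such sinusoids of the same period cross at most once inside a length-$\pi/4$ interval of $\OS$; the general position assumption rules out the degenerate coincidence $a=b,\ \gamma'-\gamma=\pi/2$. If $e$ is growing, I would instead use that along $\hat e$ the radius of $S_{\ct_e}(\phi)$ is a monotone function whose value at the two endpoints is governed by when the growing square, anchored at a corner determined by $\ct_e$, first hits the contact point forced at $u$ versus at $v$; the coincidence condition again reduces to equality of two sinusoidal expressions $d_\phi(\cdot,\cdot)$ of the form $|pq|\cos(\alpha-\phi)$ or $|pq|\sin(\alpha-\phi)$, with at most one crossing on $[\theta,\pi/2)$. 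In both cases the uniqueness of $\phi$ follows, and since all the quantities ($|pq|$, orientations $\alpha,\gamma$, the choice of which sides are pinned) are read off from $\ct_u,\ct_v$ in constant time, solving the trigonometric equation and testing whether the solution lies in $(\theta,\pi/2)$ takes $O(1)$ time; if no valid root exists, $w(e,\theta)$ is declared undefined.

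The main obstacle I anticipate is bookkeeping rather than depth: one must carefully verify, for each of the five bounded-edge types in \figurename~\ref{fig:edgetype}, that the coincidence $S_{\ct_u}(\phi)=S_{\ct_v}(\phi)$ is genuinely governed by a \emph{single} sinusoidal equation with at most one root in the admissible orientation range — in particular that the ``wrong'' branch of $|\sin|$ (where the sign flips) cannot produce a spurious second solution within the at-most-$\pi/4$-long window in which the relevant cone geometry (as in the proof of Lemma~\ref{lem:43_upper_F}) is valid. A clean way to handle this is to note that $S_{\ct_u}(\theta)$ and $S_{\ct_v}(\theta)$ vary continuously in $\theta$ on the maximal interval where $\ct_u$, resp.\ $\ct_v$, remains well-defined (Lemma~\ref{lem:valid_interval}), with the difference of their side lengths (or radii) a real-analytic, sign-definite-derivative function of $\phi$ on that window, so at most one zero occurs; I would phrase the argument in that form to avoid an exhaustive type-by-type sign chase. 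Finally, I would remark that $\phi$ so obtained need not be a degenerate orientation of $P$ (the square $S_{\ct_u}(\phi)$ may fail to be empty); it is only a \emph{candidate}, which is precisely why the event is called ``potential'' and must be validated when popped from the event queue.
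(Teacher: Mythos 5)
Your overall approach is essentially the same as the paper's: reduce the coincidence condition $S_{\ct_u}(\phi)=S_{\ct_v}(\phi)$ to a sinusoidal equation of period $2\pi$ and observe that such an equation has at most one root in the sub-window of $\OS$ (which has length $<\pi$), then solve it in $O(1)$ time from the constantly many defining points. The paper's own proof is a type-by-type case analysis (both vertices $(3,2)$ or both $(3,3)$; one of each, sliding vs.\ growing), and in the mixed-type subcases it observes that the candidate $\phi$ is determined even more directly as the orientation of a segment $p_ip_j$ rather than as the root of an equation; but both routes deliver the same uniqueness and $O(1)$ running time.

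Two small points you should fix. First, you acknowledge the fifth bounded-edge type of \figurename~\ref{fig:edgetype} (a growing edge incident to a $(4,1)$-type vertex, i.e.\ to a point $p\in P$), but then your sliding/growing split implicitly assumes both endpoints are regular $3$-square vertices; the paper disposes of this case up front by noting that an edge incident to a point of $P$ can never collapse (the trivial square at $p$ has radius zero while the other endpoint's square has positive radius), so $w(e,\theta)$ is simply undefined. Your framework would in fact yield ``no solution'' here, but you should say so explicitly rather than leave it to the reader. Second, your justification via a ``sign-definite-derivative'' of the difference function is stronger than what is true: the difference of two period-$2\pi$ sinusoids is again a period-$2\pi$ sinusoid, which is \emph{not} in general monotone on a length-$\pi/2$ window (it may contain one extremum). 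The correct and sufficient observation is that consecutive zeros of such a sinusoid are exactly $\pi$ apart, so any window of length $<\pi$ contains at most one zero; this is all the uniqueness you need, and it is also what rules out a spurious second root on the ``wrong'' branch of $|\sin|$. With those repairs your argument is sound, and your closing remark — that the resulting $\phi$ is only a candidate because the square may fail to be empty, which is precisely why the event is ``potential'' and must be revalidated when popped — is a useful observation worth keeping.
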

\begin{proof}
Since the potential edge event $w(e, \theta)$ is
determined only by a constant number of points,
it is obvious that it takes $O(1)$ time to compute it.
In the following, we show how to compute it in details.

Let $u, v \in V(\theta)$ be the vertices incident to $e$.
Note that if one of $u$ and $v$ is a point in $P$,
then the edge $e$ cannot be collapsed into a point,
so $w(e, \theta)$ is undefined in this case.
We thus assume that $u, v \notin P$.
Since $\theta$ is a regular orientation,
$u$ and $v$ are regular and are of either $(3,2)$-type or $(3,3)$-type
by Lemmas~\ref{lem:vertextype}.
Also, $e$ is of one of four different types by Lemma~\ref{lem:regularVD},
as shown in \figurename~\ref{fig:edgetype}.

First, suppose that both $u$ and $v$ are of $(3,2)$-type
or both of them are of $(3,3)$-type.
Note that $e$ is sliding in this case.
We consider the empty rectangle $R(\theta)$ in orientation $\theta$
with contact type $\ct_u \cup \ct_v$.
(Here, we extend our definition of contact type for squares
to that for rectangles.)
Since $e$ is sliding,
$\ct_e = \ct_u \cap \ct_v$ consists of two parallel pinned sides
and the other two sides are pinned by $\ct_u \setminus \ct_e$ and
$\ct_v\setminus \ct_e$,
so the rectangle $R(\theta)$ is uniquely defined.
In principle, we decide if there exists $\phi > \theta$
such that $R(\phi)$ is a square and, if so, compute $\phi$.
This can be done by handling the width and the height of $R(\phi)$
as functions of $\phi$ and by solving $\phi$ when they become equal.
As already known by early research~\cite{cnd-lerps-03,b-cmwsaao-18},
the width and the height functions of $R(\phi)$ are
sinusoidal functions of period $2\pi$
and thus there is at most one possible value for $\phi$
at which $R(\phi)$ is a square.

Second, suppose that $u$ is of $(3,2)$-type and $v$ is of $(3,3)$-type.
Without loss of generality, assume that
$\ct_v = \{(p_1, \mt),(p_2, \mr),(p_3,\mb)\}$ for some $p_1, p_2, p_3 \in P$.
We have two subcases: $e$ is sliding or growing.
\begin{itemize}
 \item If $e$ is sliding, then we have $\ct_e = \{(p_1, \mt), (p_3,\mb)\}$,
 and $\ct_u = \{(p_1, \mt), (p_3,\mb), (p',\mr)\}$ for some $p' \in \{p_1, p_3\}$.
 Assume that $p' = p_1$.
 The other case where $p'=p_3$ can be handled in a symmetric way.
 Then, consider the square $S_{\ct_v}(\theta)$.
 The potential edge event $w(e,\theta)$ is determined by $(e,\phi)$
 such that $p_1$ lies on the top-right corner of $S_{\ct_v}(\phi)$
 only if $\phi > \theta$ and $S_{\ct_v}(\phi)$ is defined.
 Such $\phi$ can be easily found since $\phi+\pi/2$ is
 the orientation of segment $p_1p_2$.
 So, we are done simply by checking
 if there exists the square in orientation $\phi$
 with contact type $\{(p_1, \mt),(p_2, \mr),(p_3,\mb),(p_1,\mr)\}$.
 \item If $e$ is growing, then we have either
 $\ct_e = \{(p_1,\mt),(p_2,\mr)\}$ or $\ct_e = \{(p_2,\mr),(p_3,\mb)\}$.
 Assume the latter case without loss of generality.
 Then, we have either $\ct_u = \{(p_2,\mr),(p_3,\mb),(p_2,\mt)\}$
 or $\ct_u = \{(p_2,\mr),(p_3,\mb),(p_3,\ml)\}$.
 If $\ct_u = \{(p_2,\mr),(p_3,\mb),(p_2,\mt)\}$, then
 the edge $e$ is collapsed at $\phi > \theta$ such that
 $\phi$ is the orientation of segment $p_1p_2$
 if there exists the square in orientation $\phi$ with contact type
 $\{(p_1,\mt),(p_2,\mr),(p_3,\mb),(p_2,\mt)\}$.
 Otherwise, if $\ct_u = \{(p_2,\mr),(p_3,\mb),(p_3,\ml)\}$,
 then consider the rectangle $R(\phi)$ for $\phi > \theta$
 such that $p_3$ lies its bottom-left corner, $p_2$ lies on the right side,
 and $p_1$ lies on the line extending its top side.
 The edge $e$ is collapsed when $R(\phi)$ becomes a square
 with contact type $\{(p_1,\mt), (p_2,\mr),(p_3,\mb),(p_3,\ml)\}$.
 Hence, in either case, we can check whether $e$ is collapsed or not
 and, if so, when $e$ is collapsed.
\end{itemize}
This completes the proof of the lemma.
\end{proof}

%

\subsection{Algorithm}
Our algorithm maintains the combinatorial structure
$\VG(\theta)$ of the Voronoi diagrams $\VD(\theta)$
as $\theta \in \OS$ continuously increases from $0$ to $\pi/2$.
For the purpose, we increase $\theta$ and
stop at every degenerate orientation $\phi$
to find all $4$-squares in $\Sq_4(\phi)$ and
update $\VG(\theta)$ according to the corresponding changes.

For the purpose, we maintain data structures,
keeping the invariants at the current orientation $\theta \in \OS$ as follows.
\begin{itemize}\denseitems
 \item The \emph{graph} $G = (V,E)$ stores the current Voronoi graph
 $\VG(\theta) = (V(\theta),E(\theta))$ into a proper data structure
 that supports insertion and deletion of a vertex and an edge
 in logarithmic time.
 \item The \emph{event queue} $\Q$ is a priority queue that stores
 potential edge events $w(e, \theta)$ for all $e\in E$ and
 all outer align events that occur after $\theta$,
 ordered by their associated orientations.
 Ties can be broken arbitrarily.
 It is implemented by any heap structure, such as the binary heap
 and the binomial heap,
 that supports \emph{find-min}, \emph{insert}, and \emph{delete} operations
 in logarithmic time.
 \item The \emph{search tree} $\T$ is a balanced binary search tree
 on the set $K := P\times \{\mt,\ml,\mb,\mr\}$
 of all contact pairs indexed by any total order on $K$.
 Each node labeled by $(p,\ma)\in K$ stores the set of all regular vertices
 $v\in V$ such that $(p,\ma) \in \ct_v$,
 denoted by $\T(p,\ma)$, into a sorted list $L(p,\ma)$ by
 the order along the boundary of the face of $\VD(\theta)$
 for the contact pair $(p,\ma)$.
%

 By \emph{inserting a vertex} $v$ into $\T$, we mean
 adding $v$ into $L(p, \ma)$ for all $(p, \ma) \in \ct_v$;
 by \emph{deleting a vertex} $v$ from $\T$, we mean
 deleting $v$ from $L(p, \ma)$ for all $(p, \ma) \in \ct_v$.
 We can insert or delete a vertex into or from $\T$ in logarithmic time
 by a binary search on $L(p, \ma)$ after finding the node with label $(p, \ma)$.
\end{itemize}
We also add a sufficient number of pointers between the copies of the same object
so that each of them can be referenced from another across
different data structures in $O(1)$ time.
This makes possible in $O(1)$ time, for examples,
to find a vertex $v$ from $V$ by its copies stored in $\T$ and
to directly access the node in $\Q$ storing an event corresponding to
a given edge $e\in E$.

Note that the structures we maintain stay the same
between any two consecutive degenerate orientations.
In particular, the order of list $L(p, \ma)$ does not change
until an event occurs and hence we reach a degenerate orientation.
Also, note that the space used by the data structures is bounded by $O(n)$
by the invariants.
The search tree $\T$ consists of $4n$ nodes and stores
several copies of each regular vertex in the current vertex set $V(\theta)$.
Since each regular vertex is described by a contact type with three pairs,
a vertex is stored in three different nodes in $\T$.
The number of outer align events is $O(n)$ by Lemma~\ref{lem:outer_alignevent}
and the number of potential edge events stored in $\Q$
is $O(n)$ at any moment by Lemma~\ref{lem:VD_complexity}.

Our algorithm runs in two phases: the initialization and the main loop.
Without loss of generality, we assume that $0\in\OS$ is a regular orientation.
In the initialization phase,
we initialize the data structures for $\theta = 0$.
We first compute $\VG(0)$ by any optimal algorithm computing
the $L_\infty$ Voronoi diagram~\cite{l-tdvdlpm-80,lw-vdl1m2dsa-80}.
Then,
for any regular vertex $v\in V(0)$, we insert $v$ into $V$ and $\T$;
for any bounded edge $e\in E(0)$,
we insert $e$ into $E$,
we compute the potential edge event $w(e, 0)$, if defined,
and insert it into $\Q$.
Compute all outer align events by Lemma~\ref{lem:outer_alignevent}
and insert them into $\Q$.

We are then ready to run the main loop of our algorithm
from the current orientation $\theta = 0$.
In the main loop,
we repeatedly recognize the next degenerate orientation $\phi > \theta$
by finding an event with a smallest associated orientation from $\Q$,
collect all events that occur at $\phi$ by extracting them from $\Q$,
and handle them by performing the following two steps:
(1) computing all $4$-squares in $\Sq_4(\phi)$
and (2) updating our structures properly as $\theta$ proceeds over $\phi$.

\paragraph*{Computing all $4$-squares in $\Sq_4(\phi)$.}
Let $W$ be the set of all events in $\Q$ whose associated orientation
is commonly $\phi$.
The set $W$ can be obtained by repeatedly performing operations
on the event queue $\Q$; check if the associated orientation
of the minimum element in $\Q$ is exactly $\phi$ and extract it, if so.
For each event $w\in W$, we find all squares relevant to $w$,
according to the type of $w$.
We initialize $\Sq_4(w)$ to be an empty set as a variable,
and will finally consist of all $4$-squares relevant to $w$.

First, suppose that $w = (e, \phi)$ is an edge event.
Then, $e$ is an edge in $E = E(\theta)$ for the current orientation
$\theta < \phi$.
Let $u, v\in V=V(\theta)$ be the two vertices incident to $e$.
Note that $V = V(\theta')$ and $E = E(\theta')$ for all $\theta<\theta'<\phi$
by Lemma~\ref{lem:valid_interval}.
There is a unique $4$-square $S$ in $\Sq_4(\phi)$ relevant to
the edge event $w$
such that the contact type of $S$ is $\ct_u \cup \ct_v$.
So, we can specify $S$ in $O(1)$ time and add it to $\Sq_4(w)$.

Note that $S$ is a non-stapled $4$-square or
a stapled $(4,3)$-square (of type $(4,3)$-(b) or $(4,3)$-(c)),
as observed in Lemma~\ref{lem:4sq-3sq} and \figurename~\ref{fig:transition}.
If $S$ is a stapled $(4,3)$-square with $p,q\in P$ on a common side of $S$,
then an inner align event $(p,q,\phi)$ also occurs simultaneously at $\phi$.
In this case, we add the align event $(p, q, \phi)$ into $W$.
Hence, we do not miss any inner align event by Lemma~\ref{lem:inner_alignevent}.

Next, suppose that $w = (p, q, \phi)$ is an align event.
In this case, there can be several $4$-squares in $\Sq_4(\phi)$ relevant to $w$,
and the number indeed can be $\Omega(n)$.
By definition of align events,
the orientation of segment $pq$ is either $\phi$ or $\phi+\pi/2$.
Without loss of generality, we assume that segment $pq$ is in orientation $\phi$
and $p$ is to the left of $q$ in orientation $\phi$,
so any $4$-square relevant to $w$ contains both $p$ and $q$
on its top or bottom side.
Note that all squares relevant to $w$ are stapled.

There are two possibilities for $S$ by Lemma~\ref{lem:4sq-3sq}.
(1) $V^-(S)$ is empty
or (2) $V^-(S)$ consists of a single regular vertex in $v\in V$.
See \figurename~\ref{fig:transition} for an illustration.
We first find all squares of the latter case (2).
In this case, the contact type $\ct_v$ of the unique vertex $v\in V^-(S)$
is a subset of $\ct$, and we indeed have $S_{\ct_v}(\phi) = S$.
All those squares $S$ in case (2) can be found by
the square-hit-by-point query.
In a \emph{square-hit-by-point} query at the current orientation $\theta$,
we are given a contact pair $(p, \ma) \in K$
and a point $a\in\Plane$ such that
the orientation of segment $pa$ is $\phi$ or $\phi+\pi/2$
and there is no event in open interval $(\theta,\phi)$,
the goal is to report all vertices $v \in \T(p,\ma)$ such that
its corresponding square $S_{\ct_v}(\phi)$ contains
both $p$ and $q$ on the specified side $\ma(S_{\ct_v}(\phi))$.

\begin{figure}[tbh]
\begin{center}
\includegraphics[width=.95\textwidth]{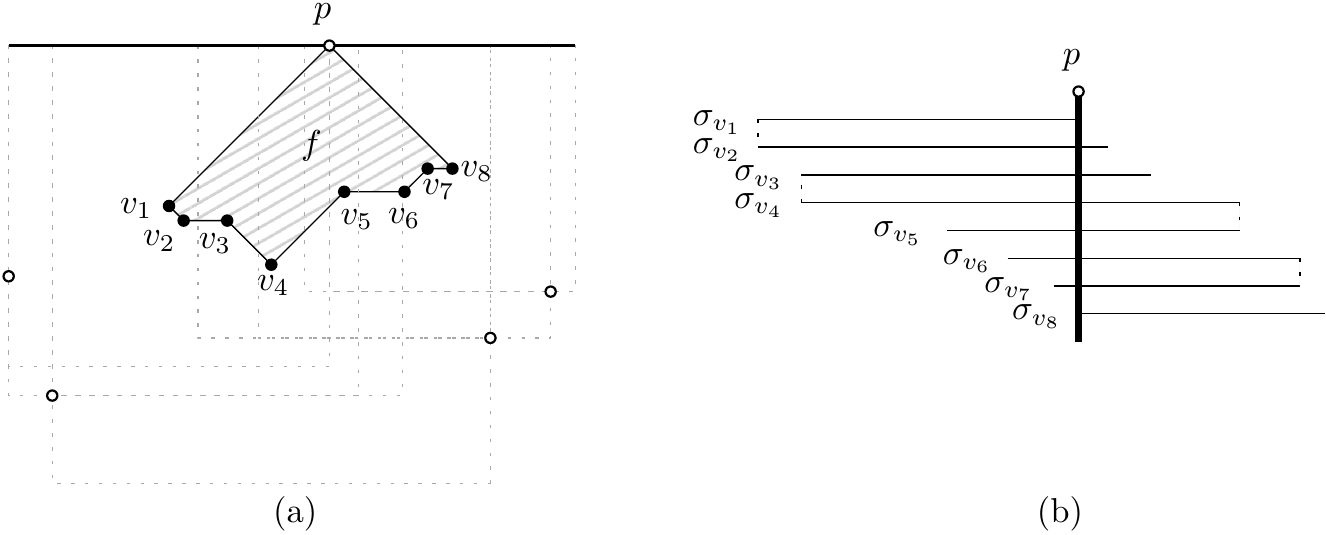}
\end{center}
\caption{(a) A typical example of the face $f$ of $\VD(\phi)$ for $(p,\mt)$
 in orientation $\phi$.
 In the list $L(p,\mt)$, the vertices $v_1, \ldots, v_8$ along the boundary
 of $f$ are stored in that order.
 (b) The top sides $\sigma_{v_1}, \ldots, \sigma_{v_8}$ of
 the squares corresponding to vertices $v_1, \ldots, v_8$ have
 a total order along the line extending the sides and the order is
 the same as that for vertices in the list $L(p, \mt)$.
 }
\label{fig:squarehit_query}
\end{figure}
This type of queries can be efficiently answered using the search tree $\T$
and its invariant.
\begin{lemma} \label{lem:squarehit_query}
 A square-hit-by-point query can be processed in $O(\log n + t)$ time,
 where $t$ denotes the number of reported vertices in $V$.
\end{lemma}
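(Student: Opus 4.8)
\emph{Proof plan.}
The plan is to reduce the query to one binary search on the sorted list $L(p,\ma)$ kept at the node of $\T$ labeled by $(p,\ma)$. Assume without loss of generality that $\ma=\mt$; the other three cases are symmetric. Since no event occurs in the open interval $(\theta,\phi)$, there is no degenerate orientation in that interval, so by Lemma~\ref{lem:valid_interval} the vertex set and the list $L(p,\mt)$ are unchanged on $[\theta,\phi)$ and each $v\in\T(p,\mt)=L(p,\mt)$ keeps its contact type and its empty square $S_{\ct_v}(\theta')$ well defined for all $\theta\le\theta'<\phi$; hence we may pass to the limit and work with the square $S_v:=S_{\ct_v}(\phi)$ in orientation $\phi$. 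All top sides of squares in orientation $\phi$ whose contact type contains $(p,\mt)$ lie on the common line $\ell$ through $p$ in orientation $\phi$, and, since the segment $pa$ is in orientation $\phi$ (the orientation of $\mt$-sides), $a$ also lies on $\ell$. Orient $\ell$ with a coordinate having $p$ at $0$; let $\alpha$ be the coordinate of $a$, and assume $\alpha>0$ (the case $\alpha<0$ is mirrored, and $\alpha=0$ would force $a=p$). Write $\mt(S_v)=[\lambda_v,\rho_v]\subset\ell$; since $p\in\mt(S_v)$ we have $\lambda_v\le 0\le\rho_v$, so the query asks for exactly the $v$ with $\lambda_v\le\alpha\le\rho_v$, which, because $\lambda_v\le 0<\alpha$ always holds, simplifies to the single inequality $\rho_v\ge\alpha$.

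The crux is the structural fact depicted in \figurename~\ref{fig:squarehit_query}: the top sides $\mt(S_v)$ appear in \emph{sliding} order along $L(p,\mt)$. Recall that $L(p,\mt)$ stores the regular vertices on the boundary of the face $f$ of $\VD(\phi)$ for the contact pair $(p,\mt)$, ordered along $\partial f$, and that $f$ is the region of centers of empty squares with contact type exactly $\{(p,\mt)\}$, with $p$ on its boundary. I would parametrize such a square by the ordered pair $(\lambda,\rho)$ of signed coordinates along $\ell$ of the two endpoints of its top side, $\lambda\le 0\le\rho$; the map from $(\lambda,\rho)$ to the center of the square is affine, so the combinatorics of $f$ is seen faithfully in the $(\lambda,\rho)$-plane. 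A short computation shows that the square with parameters $(\lambda,\rho)$ contains a point $q\in P$ in its interior precisely when a conjunction of three linear inequalities of the forms $\lambda<\mathrm{const}$, $\rho>\mathrm{const}$, $\rho-\lambda>\mathrm{const}$ holds; hence, for each fixed $\lambda$, the forbidden $\rho$-values form an upper ray $\{\rho>\psi(\lambda)\}$, and $\psi$ is a minimum of nondecreasing functions, so $\psi$ is nondecreasing. Thus $f$ is the region $\{\lambda\le 0\le\rho\le\psi(\lambda)\}$, its boundary consists of the two edges at $p$ ($\lambda=0$ and $\rho=0$, the growing edges to $(2,1)$-squares with $p$ at a corner, by Lemma~\ref{lem:vertextype}) together with the curve $\rho=\psi(\lambda)$ carrying the $(3,\cdot)$-vertices of $L(p,\mt)$, and along this curve both $\lambda_v$ and $\rho_v$ vary monotonically. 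Consequently $\{v\in L(p,\mt):\rho_v\ge\alpha\}$ is a prefix or suffix of $L(p,\mt)$ — a contiguous sublist.

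Given this, the query is answered as follows. First locate the node of $\T$ labeled by $(p,\ma)$ in $O(\log n)$ time. Then binary search on the sorted list $L(p,\ma)$ for the boundary of the contiguous sublist of vertices $v$ with $a\in\mt(S_v)$: each test ``$a\in\mt(S_v)$?'' is the comparison ``$\rho_v\ge\alpha$?'' and is decided in $O(1)$ time, since $S_{\ct_v}(\phi)$, and hence $[\lambda_v,\rho_v]$, is determined by $\ct_v$ together with a constant number of points of $P$, while $\alpha$ is determined by $a$; by the monotonicity just established, $O(\log n)$ such comparisons suffice. Finally scan the located sublist, reporting its $t$ vertices in $O(t)$ time. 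The total running time is $O(\log n+t)$, using only the space already charged to $\T$ plus $O(1)$ working space. (The mirrored cases $\alpha<0$ and $\ma\in\{\mb,\ml,\mr\}$ are handled identically, testing $\lambda_v\le\alpha$ instead, on the line through $p$ in orientation $\phi$ or $\phi+\pi/2$ as appropriate.)

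The main thing requiring care is the monotonicity claim of the second paragraph — that, read off from the $(\lambda,\rho)$-description of the Voronoi face $f$, the top sides of the squares of the vertices of $L(p,\mt)$ slide monotonically along the list, making the hit set a single contiguous sublist; everything else is routine once this is in hand. Should $\partial f$ split into a constant number of monotone pieces rather than one, the same scheme applies with $O(1)$ binary searches, preserving the $O(\log n+t)$ bound. This lemma, together with Lemma~\ref{lem:inner_alignevent}, is what lets the algorithm discover all relevant stapled $4$-squares at a degenerate orientation without examining all point pairs.
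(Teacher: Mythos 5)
Your proof is correct, and the algorithm it yields is the same as the paper's: locate the node of $\T$ labeled $(p,\ma)$, binary-search the sorted list $L(p,\ma)$ to find the boundary of the contiguous hit-sublist, then scan it. The one real point of divergence is \emph{how} the crucial structural fact — that the top sides $\ma(S_{\ct_v}(\phi))$ appear in sliding order along $L(p,\ma)$, so the hit set is contiguous — is established. The paper argues directly on the segments: any two top sides $\sigma_v,\sigma_{v'}$ both contain $p$ and come from empty squares, so they cannot be properly nested, which forces a linear order and (implicitly, appealing to the face boundary) shows the list order matches. You instead parametrize the face $f$ by the endpoints $(\lambda,\rho)$ of the top side on the line $\ell$ through $p$, show that the emptiness constraint is $\rho\le\psi(\lambda)$ for a nondecreasing envelope $\psi$ (a min of nondecreasing $\psi_q$'s obtained from a conjunction of half-plane constraints), and read off that $\lambda_v$ and $\rho_v$ both increase monotonically along the boundary curve $\rho=\psi(\lambda)$ that carries the vertices of $L(p,\ma)$. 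Your route makes explicit the step the paper leaves terse — why the \emph{list} order coincides with the segment order — at the cost of a more computational setup; the paper's non-nesting observation is shorter but relies on the reader to accept that the boundary traversal respects the segment order. One small remark: your closing hedge about $\partial f$ possibly splitting into several monotone pieces is unnecessary given your own analysis, which already shows the relevant boundary piece ($\rho=\psi(\lambda)$, the only piece carrying regular vertices) is a single monotone curve, the other two pieces being the edges $\lambda=0$ and $\rho=0$ incident to the $(4,1)$-vertex $p$, which is not stored in $L(p,\ma)$.
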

\begin{proof}
For a square-hit-by-point query,
we are given a contact pair $(p, \ma)$ and $q\in\Plane$
such that the orientation of segment $pq$ is $\phi$ or $\phi+\pi/2$
and there is no event in open interval $(\theta,\phi)$.
By our construction,
$\T(p,\ma)$ consists of the vertices on the boundary of the face $f$ for $(p,\ma)$
and
$L(p,\ma)$ stores them in the order along that boundary.
See \figurename~\ref{fig:squarehit_query}(a).
Also, note that this order remains the same up to $\phi$,
since there is no event in $(\theta, \phi)$.
This implies that the squares $S_{\ct_v}(\phi)$ corresponding to
$v \in \T(p,\ma)$ in orientation $\phi$ are well defined.

We then consider the side $\ma(S_{\ct_v}(\phi))$
of the squares $S_{\ct_v}(\phi)$ for each $v\in \T(p,\ma)$,
denoted by $\sigma_v$.
Note that $\sigma_v$ is a line segment that contains $p$.
For any two $v,v' \in \T(p,\ma)$,
two segments $\sigma_v$ and $\sigma_{v'}$ are
not properly nested,
since both squares $S_{\ct_v}(\phi)$ and $S_{\ct_{v'}}(\phi)$ are empty
and have $p$ as a contact point.
That is, if $\sigma_v \subset \sigma_{v'}$, then
exactly one endpoint of $\sigma_v$ is shared by an endpoint of $\sigma_{v'}$.
This implies that
the segments $\sigma_v$ themselves are sorted in the same order of $v$
in $L(p,\ma)$.
See \figurename~\ref{fig:squarehit_query}(b).

Based on the above observations,
we can process the square-hit-by-point query
by a binary search on the sorted list $L(p,\ma)$.
Without loss of generality,
we assume that $\ma = \mt$ and $q$ lies to the right of $p$.
First, find the node of $\T$ labeled by $(p, \mt)$ in $O(\log n)$ time.
Next, we do a binary search on the list $L(p,\mt)$
to find the rightmost vertex $u$ in the list $L(p,\mt)$
such that the top side $\sigma_{u}$ of the corresponding square
$S_{\ct_{u}}(\phi)$ contains $q$.
More precisely, this can be done by computing the coordinate of the right endpoint
of $\sigma_u$ for the median vertex $u$ in $L(p,\mt)$
and testing if $q$ is to the right or to the left of the endpoint.
Then, all the vertices in $L(p,\mt)$ to the right of $u$, including $u$,
have the same property.
So, we report all these vertices from $u$ by traversing $L(p,\mt)$
to its right end.
Hence, we can answer the query in $O(\log n)$ time plus the time
proportional to the number of reported vertices.
\end{proof}


We process four square-hit-by-point queries
for $((p,\mt),q)$, $((p,\mb, q))$, $((q,\mt),p)$, and $((q,\mb),p)$.
Let $U \subset V$ be the set of the reported vertices in the current $V$
by these four queries.
For every $v\in U$, add $S_{\ct_v}(\phi)$ into $\Sq_4(w)$.
In this way, we collect all $4$-squares relevant to $w$ of case (2).

We then find relevant $4$-squares of case (1).
If $S$ is a square of case (1),
then $S$ is of stapled $(4,3)$-(b) or $(4,3)$-(c)-type.
In addition, if the top side of $S$ is stapled by $p$ and $q$,
then the contact type $\ct$ of $S$ includes three contact pairs
$(p,\mt)$, $(q,\mt)$, and $(q,\mr)$ involving $p$ or $q$,
so it is unique, if exists.
The fourth contact pair of $S$ can be found
from the rightmost vertex reported by processing the square-hit-by-point query
for $(p,\mt)$ and $q$.
Analogously, we can find such a square whose bottom side is stapled.
In this way, we find the squares of case (1) and add them into $\Sq_4(w)$.

We have the following.
\begin{lemma} \label{lem:alg_square_finding}
 The first step of the main loop as described above
 takes time $O(|\Sq_4(\phi)| \log n)$ and
 we have $\Sq_4(\phi) = \bigcup_{w \in W} \Sq_4(w)$ at last.
\end{lemma}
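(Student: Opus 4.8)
The plan is to split the lemma into its two assertions and treat them in turn. The inclusion $\Sq_4(\phi)\supseteq\bigcup_{w\in W}\Sq_4(w)$ is the easy direction: every square the algorithm places into some $\Sq_4(w)$ is obtained as the limit, as $\theta\to\phi$, of empty $3$-squares whose centres trace a regular vertex or a regular edge of $\VD(\theta)$; being empty is a closed condition, so the limiting square is empty, and its contact type has at least four pairs, hence exactly four by the general position assumption, so it lies in $\Sq_4(\phi)$. For the reverse inclusion I would reduce the claim to two facts: (a) after all insertions made during the first step, $W$ contains every event occurring at $\phi$; and (b) for each $w\in W$, the computed set $\Sq_4(w)$ contains every $4$-square relevant to $w$. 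Given (a) and (b), Lemma~\ref{lem:event_4sq} closes the argument, since it assigns to every $S\in\Sq_4(\phi)$ an event at $\phi$ to which $S$ is relevant: an edge event if $S$ is non-stapled, an align event if $S$ is stapled.

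To prove (a): any edge event $(e,\phi)$ has $e\in E(\phi-\epsilon)=E(\theta)$ by Lemma~\ref{lem:valid_interval}, and by Lemma~\ref{lem:potential_event} together with the uniqueness of the collapse orientation the potential edge event $w(e,\theta)$ stored in $\Q$ is precisely $(e,\phi)$; so all edge events at $\phi$ are extracted into $W$. All outer align events at $\phi$ lie in $\Q$ from the initialization by Lemma~\ref{lem:outer_alignevent}. Finally, if an inner align event occurs at $\phi$, then by Lemma~\ref{lem:inner_alignevent} there is a stapled $(4,3)$-square relevant to an edge event at $\phi$; that edge event is already in $W$, and when it is handled the algorithm reads off the two contact points $p,q$ on the stapled side of that square --- which, by the proof of Lemma~\ref{lem:inner_alignevent}, are exactly the points of the inner align event --- and inserts $(p,q,\phi)$ into $W$. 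To prove (b): for an edge event the relevant square is the unique $S_{\ct_u\cup\ct_v}(\phi)$, which is what the algorithm records. For an align event $(p,q,\phi)$ I would invoke the trichotomy in Lemma~\ref{lem:4sq-3sq}: a stapled square $S$ relevant to it satisfies $|V^-_S|\in\{0,1,2\}$. If $|V^-_S|=1$, the unique vertex of $V^-_S$ belongs to the current $V$ and, reading off the stapled vertex types, carries a pair $(p,\ma)$ or $(q,\ma)$ with $\ma\in\{\mt,\mb\}$, so $S$ is returned by one of the four square-hit-by-point queries. If $|V^-_S|=0$, then $S$ is of stapled $(4,3)$-(b)- or (c)-type (the case of \figurename~\ref{fig:transition} where a non-regular vertex appears on a regular edge without a preceding merge), and is recovered in $O(1)$ time from the extreme vertex returned by the appropriate query. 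If $|V^-_S|=2$, then $S$ is again of stapled $(4,3)$-(b)/(c)-type and was already placed into $\Sq_4(w')$ when the edge event $w'$ collapsing the edge between the two vertices of $V^-_S$ was handled. Since the union ranges over all of $W$, this gives (b).

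For the running time I would first bound $|W|=O(|\Sq_4(\phi)|)$. Distinct edge events at $\phi$ have distinct relevant squares --- by Lemma~\ref{lem:4sq-3sq} and \figurename~\ref{fig:transition}, inside each $4$-square there is at most one pair of current regular vertices valid for $\phi-\epsilon$ --- and, by the closed-condition argument above, every potential edge event extracted at $\phi$ is in fact a genuine edge event; hence the number of edge events in $W$ is at most $|\Sq_4(\phi)|$. Each outer and each inner align event in $W$ carries a distinct relevant stapled square, since the stapled side determines its pair of contact points, so the number of align events in $W$ is also $O(|\Sq_4(\phi)|)$. Extracting $W$ from $\Q$ costs $O(|W|\log n)$. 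Handling an edge event costs $O(1)$ plus one insertion into $W$. Handling an align event $(p,q,\phi)$ costs four square-hit-by-point queries, each in $O(\log n+t)$ time with $t$ the number of reported vertices by Lemma~\ref{lem:squarehit_query}, plus $O(1)$ work for the squares with $|V^-_S|=0$; since each reported vertex yields a distinct $4$-square of $\Sq_4(\phi)$ relevant to exactly one align event, the reported vertices summed over all align events number $O(|\Sq_4(\phi)|)$. Adding up, the first step runs in $O(|\Sq_4(\phi)|\log n)$ time.

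The main obstacle, I expect, is part (a) and its counterpart in the time bound: one has to argue that every inner align event that actually occurs is forced to surface through Lemma~\ref{lem:inner_alignevent} when its companion edge event is processed, that this never requires scanning all pairs of points, and that re-inserting an already present align event into $W$ costs nothing extra --- which is why $W$ should be kept duplicate-free, keyed by the pair $(p,q)$. The remaining ingredients --- the closed-condition argument, the trichotomy on $|V^-_S|$, and the fact that the four square-hit-by-point queries jointly cover all possible missing contact pairs of a relevant stapled square --- are routine once the case distinctions are laid out.
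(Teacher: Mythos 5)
Your proposal is correct and follows essentially the same route as the paper: both hinge on Lemma~\ref{lem:event_4sq} to assign every $4$-square in $\Sq_4(\phi)$ to some event, on Lemma~\ref{lem:inner_alignevent} to recover inner align events from the edge events that surface them, and on Lemma~\ref{lem:squarehit_query} for the per-align-event query cost, with the $|W|=O(|\Sq_4(\phi)|)$ bound coming from the fact that a $4$-square is relevant to at most two events. The only notable difference is that you state the trichotomy $|V^-_S|\in\{0,1,2\}$ explicitly, whereas the paper's text speaks of only two possibilities and leaves the $|V^-_S|=2$ case implicit (such a square being captured by the companion edge event rather than by the align-event queries) --- your version makes this handoff between the two event types explicit, which is a small clarity gain, not a different argument.
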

\begin{proof}

As argued above,
we correctly compute and add all $4$-squares relevant to each event $w\in W$
into $\Sq_4(w)$.
In order to see that $\Sq_4(\phi) = \bigcup_{w \in W} \Sq_4(w)$,
we show that $W$ is indeed the set of all events that occur at $\phi$.
Since we precomputed all outer align events and inserted them into $\Q$,
it is obvious that $W$ contains all outer align events that occur at $\phi$.

Note that, in the main loop, $\theta$ is the current orientation
and $\phi > \theta$ is the smallest orientation of elements stored in
the current $\Q$.
We claim that any potential edge event whose associated orientation
is $\phi$ is an edge event that indeed occurs at $\phi$.
Let $w = (e,\phi)$ be such a potential edge event.
Every outer align event is correctly specified
and every inner align event occurs simultaneously with an edge event
by Lemma~\ref{lem:inner_alignevent}.
Hence, there is no align event before $\phi$.
Also, since an edge event is also a potential edge event
and $w$ is the earliest one among those potential edge events,
all orientations $\theta'$ such that $\theta < \theta' < \phi$
are regular.
By Lemma~\ref{lem:valid_interval}, this implies that
$\phi-\epsilon$ for any arbitrarily small positive $\epsilon>0$
is valid for both vertices incident to $e$.
Hence, $w$ indeed occurs as an edge event.

By handling each edge event,
we correctly recognize an inner align event by adding it into $W$.
Lemma~\ref{lem:inner_alignevent} guarantees that
all inner events that occur at $\phi$ can be captured
by edge events that occur at $\phi$.
This proves that $W$ is the set of all events that occur at $\phi$
and thus $\Sq_4(\phi) = \bigcup_{w \in W} \Sq_4(w)$.

We then analyze the time complexity.
In the first step of the main loop,
we first perform $O(|W|)$ operations on $\Q$, spending $O(|W| \log n)$ time.
For each $w\in W$, if $w$ is an edge event,
then $\Sq_4(w)$ consists of a single square
and we find it in $O(1)$ time.
If $w$ is an align event,
then we spend $O(\log n + |\Sq_4(w)|)$ time by Lemma~\ref{lem:squarehit_query}.
Finally, observe that $|W| = O(|\Sq_4(\phi)|)$
since any $4$-square in $\Sq_4(\phi)$ is relevant to at most
two different events in $W$.
Therefore,
we spend $O(|\Sq_4(\phi)| \log n)$ time over all $w\in W$.
\end{proof}

\paragraph*{Updating the structures.}
In this step, we first specify
the sets $V^-(\phi)$, $V^+(\phi)$, $E^-(\phi)$, and $E^+(\phi)$,
and then update our data structures to keep the invariants accordingly.
This can be done from the set $\Sq_4(\phi)$
obtained in the previous step by Lemma~\ref{lem:alg_square_finding}.

For each $S \in \Sq_4(\phi)$,
we compute $V^-_S$ and $V^+_S$ by Lemma~\ref{lem:4sq-3sq}
and its proof as illustrated in \figurename~\ref{fig:transition}.
By Lemma~\ref{lem:change}(i),
we obtain $V^-(\phi)$ and $V^+(\phi)$.
By Lemma~\ref{lem:change}(ii),
we can compute the edge sets $E^-(\phi)$ and $E^+(\phi)$
by searching the neighbors of $V^-(\phi)$ in $\VG = \VG(\theta)$
together with those vertices in $V^-(\phi)$ and $V^+(\phi)$.
\begin{lemma} \label{lem:alg_modifiededge}
 The sets $E^-(\phi)$ and $E^+(\phi)$ can be found
 in $O(|\Sq_4(\phi)| \log |\Sq_4(\phi)|)$ time.
\end{lemma}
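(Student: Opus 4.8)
The plan is to obtain $V^-(\phi)$, $V^+(\phi)$, $E^-(\phi)$, and $E^+(\phi)$ from the set $\Sq_4(\phi)$ delivered by the previous step, using that the combinatorial change at the degenerate orientation $\phi$ is confined to constant-size patches around the $4$-square vertices, every such patch being explicitly described in the proof of Lemma~\ref{lem:4sq-3sq} and in \figurename~\ref{fig:transition}. I first assemble the vertex sets. By Lemma~\ref{lem:change}(i), $V^-(\phi)=\bigcup_{S\in\Sq_4(\phi)}V^-_S$ and $V^+(\phi)=\bigcup_{S\in\Sq_4(\phi)}V^+_S$, and both unions are \emph{disjoint}, since a regular vertex $v$ (with $|\ct_v|=3$ and three pinned sides) determines the unique square $S_{\ct_v}(\phi)$ and hence lies in $V^\mp_S$ for at most one $S$. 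By Lemma~\ref{lem:4sq-3sq} each $V^\mp_S$ has at most four members, whose contact types --- together with the constant-size sets $E^-_S\subseteq E(\phi-\epsilon)$ and $E^+_S\subseteq E(\phi+\epsilon)$ of patch-internal edges (those with both endpoints newly destroyed, resp.\ newly created) --- are read off $\ct_S$ in $O(1)$ time. Each member of $V^-_S$ was already produced, with a pointer into the current graph $G=\VG(\theta)$, when $S$ was identified (as an endpoint of an edge event, or a vertex returned by a square-hit-by-point query), so I annotate each $v\in V^-(\phi)$ with a pointer to its $S$. Finally I store $V^+(\phi)$ in a balanced binary search tree $D$ keyed by contact types and by their two-element subsets; since $|V^-(\phi)|+|V^+(\phi)|=\Theta(|\Sq_4(\phi)|)$ by Lemma~\ref{lem:change}(iii), this costs $O(|\Sq_4(\phi)|\log|\Sq_4(\phi)|)$ time and $D$ answers lookups in $O(\log|\Sq_4(\phi)|)$ time.

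Computing $E^-(\phi)$ is then direct: by Lemma~\ref{lem:change}(ii) it is exactly the set of edges of $G$ incident to a vertex of $V^-(\phi)$, every such vertex being regular and hence of degree at most three by Lemma~\ref{lem:vertextype}. So I traverse $V^-(\phi)$, gather via the cross-pointers of $G$ the $O(1)$ incident edges of each vertex, and suppress the at-most-twofold repetition (an edge both of whose endpoints are in $V^-(\phi)$) by a timestamp on each edge record of $G$. Since $|E^-(\phi)|=\Theta(|\Sq_4(\phi)|)$ by Lemma~\ref{lem:change}(iii), this costs $O(|\Sq_4(\phi)|)$ time.

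Computing $E^+(\phi)$ is the main difficulty, since its edges belong to $\VG(\phi+\epsilon)$, which is not stored. Away from the patches nothing changes, so every edge $f$ of $G$ that is not patch-internal (not in any $E^-_S$) survives the passage through $\phi$: its contact type $\ct_f$ is unaltered, and an endpoint of $f$ that dies --- lying in some $V^-_S$, identified through the $v\to S$ annotation --- is replaced by the member of $V^+_S$ whose contact type contains $\ct_f$. I therefore form $E^+(\phi)$ as the union of $\bigcup_S E^+_S$ (already in hand), of the images of the edges in $E^-(\phi)\setminus\bigcup_S E^-_S$ under this replacement, and of the unbounded edges, which are governed by $\OCH(\theta)$ via Lemmas~\ref{lem:unbounded_och} and~\ref{lem:outer_alignevent} and are read off the maintained orthogonal convex hull at outer align events; a lookup in $D$ eliminates duplicates. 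Each of the $O(|\Sq_4(\phi)|)$ edges is handled with $O(\log|\Sq_4(\phi)|)$ work, yielding the stated bound.

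The step I expect to require the most care is proving that this survive-and-replace construction is well defined and is a bijection onto $E^+(\phi)$: one must handle the interaction of several $4$-squares of $\Sq_4(\phi)$ --- an edge may have both endpoints dying into two distinct $4$-square vertices, so both are replaced --- and, especially, the stapled $(4,3)$-(b)- and (c)-type patches of \figurename~\ref{fig:transition}, where a regular edge of $\VD(\phi-\epsilon)$ is subdivided by a newly appearing non-regular vertex, so that a single old edge must be split rather than merely relabelled. Each such exceptional configuration is nonetheless a constant-size local modification, fully enumerated in the proof of Lemma~\ref{lem:4sq-3sq}, so it contributes only $O(1)$ elementary operations and the overall time stays $O(|\Sq_4(\phi)|\log|\Sq_4(\phi)|)$.
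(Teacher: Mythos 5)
Your survive-and-replace decomposition is genuinely different from the paper's approach. The paper forms, at each vertex of $V^+(\phi)$ and at each vertex of $N$ (the surviving vertices adjacent to $V^-(\phi)$), the $O(1)$-size bundle of \emph{half-edges}, each determined purely locally from the contact type by Lemma~\ref{lem:vertextype}; it then sorts the $O(|\Sq_4(\phi)|)$ half-edges so collected by their contact type and matches them, the matched pairs constituting $E^+(\phi)$ and the unmatched half-edges being exactly the unbounded edges. This \emph{global} matching by contact type is precisely what your proposal lacks.

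The gap sits exactly where you flag it, but is worse than a subdivision subtlety. Your rule --- a dying endpoint $v\in V^-_S$ of $f$ is replaced by ``the member of $V^+_S$ whose contact type contains $\ct_f$'' --- ties the lookup to the \emph{same} $4$-square $S$ that killed $v$. That coupling breaks at align events, where the stapled $4$-squares of $\Sq_4(\phi)$ form a chain and the replacement vertex generally lives in the gain set of a \emph{different} member of that chain. Concretely, the two mirror stapled $(4,3)$-(b)-squares $S$ (with $p$ at one shared corner) and $S'$ (with $q$ at the opposite shared corner) of an align event $(p,q,\phi)$ satisfy, by Lemma~\ref{lem:4sq-3sq} and its symmetric variant, $V^-_S=\emptyset$, $V^+_S=\{u',v'\}$, whereas $V^-_{S'}\neq\emptyset$, $V^+_{S'}=\emptyset$. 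The regular edge of $E(\phi-\epsilon)$ with contact type $\{(q,\mb),(r,\mt)\}$ loses its endpoint into $V^-_{S'}$, but its $\phi+\epsilon$-successor terminates at $u'\in V^+_S$; your lookup in $V^+_{S'}$ returns nothing, so the edge is lost. (This also corrects the picture you sketch for the $(4,3)$-(b)/(c) case: the old edge is not ``subdivided while its endpoints survive'' --- the endpoint does die, into another $4$-square of the chain, which is why Lemma~\ref{lem:change}(ii) and your traversal-based computation of $E^-(\phi)$ remain sound.) The ``patch-internal'' set $E^+_S$ is likewise not well defined: a new edge with both endpoints newly created may have one endpoint in $V^+_S$ and the other in $V^+_{S''}$ for a distinct $S''$ of the chain (e.g., the stapled $(4,2)$-square with $p,q$ at two opposite corners), so it is internal to no single patch and is ``read off'' no single $\ct_S$. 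The repair is exactly the paper's: detach the lookup from $S$ and match half-edges globally by contact type --- your tree $D$ already has the right keys if you populate it with the half-edge bundles of $V^+(\phi)\cup N$ rather than with $V^+(\phi)$ alone --- after which your time analysis carries over.
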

\begin{proof}
Note that the number of vertices in $V^-(\phi) \cup V^+(\phi)$
is $O(|\Sq_4(\phi)|)$ by Lemma~\ref{lem:change}(iii).
The set $E^-(\phi)$ can be easily found from $E$
in time proportional to $|V^-(\phi)|$ by Lemma~\ref{lem:change}
since the degree of any vertex is bounded by a constant
by Lemma~\ref{lem:vertextype}.
Let $N \subset V \setminus V^-(\phi)$ be the set of vertices
adjacent to some $v\in V^-(\phi)$.
Hence, each $v\in N$ is incident to some $e\in E^-(\phi)$.
Note that $|N| = \Theta(|V^-(\phi)|)$ by Lemma~\ref{lem:vertextype}.

To find $E^+(\phi)$, we need a concept of \emph{half-edges},
informally being two halves of an edge when cut into two.
Note that if a vertex $v$ exists in $V(\theta)$, then
all the half-edges incident to $v$ can be specified locally
without knowing its opposite incident vertex
as shown in Lemma~\ref{lem:vertextype} and \figurename~\ref{fig:vertextype}.

For a set $U$ of vertices, let $H(U)$ be the set of half-edges incident to
a vertex $v \in U$.
Then, every edge in $E^-(\phi)$ can be specified by matching half-edges
in $H(V^+(\phi)) \cup H(N)$.
This can be done in $O((|V^-(\phi)|+|V^+(\phi)|) \log (|V^-(\phi)|+|V^+(\phi)|))$
by sorting those half-edges in $H(\Vin^-) \cup H(N)$ by
their contact type with two or three contact pairs.
If there are some unmatched half-edges in $H(V^+(\phi)) \cup H(N)$,
then they are unbounded edges.
Therefore, we can specify $E^-(\phi)$ and $E^+(\phi)$
in the claimed time bound.
\end{proof}

We are ready to update our structures for $\phi+\epsilon$
for any arbitrarily small $\epsilon>0$.
Note that we currently have $V = V(\theta) = V(\phi-\epsilon)$ and
$E = E(\theta) = E(\phi - \epsilon)$.
We update $V$ and $E$ as follows:
delete all vertices in $V^-(\phi)$ from $V$
and all edges in $E^-(\phi)$ from $E$, and
then insert all vertices in $V^+(\phi)$ into $V$
and all edges in $E^+(\phi)$ into $E$.
Then, update $\T$ and $\Q$ as follows:
We delete each $v \in V^-(\phi)$ from $\T$
and insert each $v\in V^+(\phi)$ into $\T$.
For each $e\in E^-(\phi)$, we delete the potential edge event for $e$ from $\Q$.
For each $e\in E^+(\phi)$,
we compute the potential edge event $w(e, \phi+\epsilon)$
by Lemma~\ref{lem:potential_event} and insert it into $\Q$, if defined.
Lastly, set $\theta$ to be $\phi+\epsilon$.

\begin{remark_nonumber}
If one wants to have the diagram $\VD(\phi)$ in the degenerate orientation $\phi$,
then $V(\phi)$ can be obtained by deleting all vertices in $V^-(\phi)$
from $V=V(\phi-\epsilon)$ and inserting vertices corresponding to
$4$-squares in $\Sq_4(\phi)$ into $V$,
while $E(\phi)$ can be obtained by deleting all edges in $E^-(\phi)$
from $E=E(\phi-\epsilon)$
and inserting those incident to any new vertex into $E$,
which can be found by the same method described in Lemma~\ref{lem:alg_modifiededge}.
This dose not increase the total asymptotic time complexity of
our algorithm.
\end{remark_nonumber}

We finally conclude the following.
\begin{theorem} \label{thm:alg_VD}
 Given a set $P$ of $n$ points in general position,
 the total amount of combniatorial changes of
 the $L_\infty$ Voronoi diagram of $P$ while the axes rotates by $\pi/2$
 is bounded by $\Theta(s_4)$,
 where $s_4$ denotes the number of $4$-squares among $P$.
 The combinatorial structure of the Voronoi diagram can be maintained
 explicitly in total $O(s_4 \log n)$ time using $O(n)$ space.
\end{theorem}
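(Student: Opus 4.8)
The plan is to assemble the structural results of Section~\ref{sec:sq_VD} with the algorithmic ingredients built above into three pieces: the combinatorial $\Theta(s_4)$ bound, correctness of the maintenance algorithm, and its time/space analysis. For the combinatorial bound, I would first recall the discussion at the start of this section: $\VG(\theta)$ changes only at degenerate orientations and stays fixed on each open interval between two consecutive ones (Lemmas~\ref{lem:valid_interval} and~\ref{lem:edge_vertex}). At a single degenerate orientation $\phi$, Lemma~\ref{lem:change}(iii) says the amount of change $|V^-(\phi)| + |V^+(\phi)| + |E^-(\phi)| + |E^+(\phi)|$ equals $\Theta(|\Sq_4(\phi)|)$. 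Since every $4$-square has a unique orientation, the families $\{\Sq_4(\phi)\}$ over degenerate $\phi$ partition $\Sq_4$; summing the per-orientation estimate then yields both the upper and the lower bound $\Theta(s_4)$ on the total change, because each $\Sq_4(\phi)$ is nonempty and forces a change of size at least its cardinality.

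For correctness, I would argue inductively over the sequence of degenerate orientations that the structures $G$, $\Q$, $\T$ correctly encode $\VG(\theta)$ at the current $\theta$, keeping the stated invariants. The key points: (i) by Lemma~\ref{lem:event_4sq} every degenerate orientation is witnessed by an event, with non-stapled $4$-squares producing edge events and stapled ones producing align events; (ii) outer align events are precomputed and placed in $\Q$ during initialization (Lemma~\ref{lem:outer_alignevent}); (iii) inner align events always coincide with an edge event whose relevant square is a stapled $(4,3)$-square (Lemma~\ref{lem:inner_alignevent}), which the algorithm uncovers while processing that edge event and folds into $W$. Hence the set $W$ extracted from $\Q$ at the current minimum orientation $\phi$ is exactly the set of events occurring at $\phi$, and $\Sq_4(\phi)=\bigcup_{w\in W}\Sq_4(w)$ by Lemma~\ref{lem:alg_square_finding}. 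The update step is then justified by Lemma~\ref{lem:change}: $V^-(\phi)$ and $V^+(\phi)$ are computed from $\Sq_4(\phi)$ via Lemma~\ref{lem:4sq-3sq}, $E^-(\phi)$ and $E^+(\phi)$ via the half-edge matching of Lemma~\ref{lem:alg_modifiededge}, and the freshly inserted edges get their potential edge events via Lemma~\ref{lem:potential_event}. Since only bounded edges can collapse and each edge event is itself a potential edge event, the next degenerate orientation is always correctly identified as the current minimum of $\Q$.

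For the analysis, initialization costs $O(n\log n)$: computing $\VG(0)$ by the known optimal $L_\infty$ algorithm~\cite{l-tdvdlpm-80,lw-vdl1m2dsa-80}, inserting $\Theta(n)$ vertices and edges and their $O(1)$-computable potential edge events (Lemma~\ref{lem:potential_event}) into logarithmic-update structures, and computing all outer align events (Lemma~\ref{lem:outer_alignevent}). At each degenerate orientation $\phi$, extracting $W$ and computing $\Sq_4(\phi)$ takes $O(|\Sq_4(\phi)|\log n)$ (Lemma~\ref{lem:alg_square_finding}), computing $E^-(\phi)$ and $E^+(\phi)$ takes $O(|\Sq_4(\phi)|\log|\Sq_4(\phi)|)$ (Lemma~\ref{lem:alg_modifiededge}), and the $O(|\Sq_4(\phi)|)$ vertex/edge insertions and deletions in $G$, $\Q$, $\T$ take $O(|\Sq_4(\phi)|\log n)$. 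Summing, $\sum_\phi|\Sq_4(\phi)| = s_4$ gives $O(s_4\log n)$ for the main loop; since $s_4=\Omega(n)$ by Lemma~\ref{lem:43sq_lower}, the initialization is absorbed and the total is $O(s_4\log n)$. The $O(n)$ space bound follows from the invariants: $G$ has $\Theta(n)$ size by Lemma~\ref{lem:VD_complexity}, $\T$ keeps three copies of each regular vertex over $4n$ nodes, and $\Q$ holds $O(n)$ potential edge events plus $O(n)$ outer align events at any time.

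The hard part will be the correctness bookkeeping around inner align events: such an event can be relevant to $\Omega(n)$ stapled $4$-squares simultaneously, yet is never enqueued in advance. The linchpin is Lemma~\ref{lem:inner_alignevent} — checking that it genuinely prevents any inner align event from being missed, and that the four square-hit-by-point queries (Lemma~\ref{lem:squarehit_query}) together recover every relevant $4$-square without more than constant-factor duplication — is where the real attention goes; the remaining steps are a routine assembly of the preceding lemmas.
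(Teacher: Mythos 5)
Your proposal is correct and takes essentially the same approach as the paper's proof: cite Lemma~\ref{lem:change}(iii) for the $\Theta(s_4)$ bound via summing over degenerate orientations, invoke Lemmas~\ref{lem:event_4sq}, \ref{lem:outer_alignevent}, \ref{lem:inner_alignevent}, and \ref{lem:alg_square_finding} for correctness, and combine Lemmas~\ref{lem:alg_square_finding}, \ref{lem:alg_modifiededge}, and \ref{lem:potential_event} with $\sum_\phi |\Sq_4(\phi)| = s_4$ for the time bound. Your explicit note that the $O(n\log n)$ initialization is absorbed because $s_4 = \Omega(n)$ (Lemma~\ref{lem:43sq_lower}) is a detail the paper leaves implicit, but otherwise the two arguments are the same.
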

\begin{proof}
The bound $\Theta(s_4)$ on the total amount of combinatorial changes
of $\VD(\theta)$ is shown by Lemma~\ref{lem:change}(iii).

The correctness of our algorithm maintaining
the combinatorial structure of $\VD(\theta)$ is already established
by the above discussions,
including Lemmas~\ref{lem:change} and~\ref{lem:alg_square_finding}.

For the time complexity,
note that the initialization phase takes $O(n\log n)$ time.
In the main loop, it takes $O(|\Sq_4(\phi)| \log n)$
for each degenerate orientation $\phi\in\OS$
by Lemmas~\ref{lem:alg_square_finding}
and~\ref{lem:alg_modifiededge}, together with the observation
that the number of operations performed to our structures
is bounded by $O(|\Sq_4(\phi)|)$.
Summing up this over all degenerate orientations $\phi$,
we have $O(s_4 \log n)$,
since $\sum_{\phi} |\Sq_4(\phi)| = |\Sq_4| = s_4$.
\end{proof}

By maintaining the Voronoi diagram $\VD(\theta)$,
it is now obvious that we can indeed compute all $4$-squares
in an output-sensitive way.
\begin{corollary} \label{coro:alg_4sq}
 Given a set $P$ of $n$ points in general position,
 we can compute all $4$-squares among points in $P$
 in $O(s_4 \log n)$ time.
\end{corollary}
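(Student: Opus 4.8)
The plan is to invoke Theorem~\ref{thm:alg_VD} almost verbatim: run the algorithm that maintains $\VG(\theta)$ as $\theta$ sweeps from $0$ to $\pi/2$, and simply output the squares it discovers along the way. First I would recall that the main loop halts precisely at every degenerate orientation $\phi\in\OS$, and that its first step already computes the set $\Sq_4(\phi)=\bigcup_{w\in W}\Sq_4(w)$ by Lemma~\ref{lem:alg_square_finding}. So it suffices to report each square in $\Sq_4(\phi)$ as soon as it is computed. To avoid reporting a square twice (a $4$-square may be relevant to up to two events in $W$, as noted in the proof of Lemma~\ref{lem:alg_square_finding}), I would perform a constant-time deduplication per square, for instance by marking each reported square, which does not affect the asymptotic running time.

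For correctness, observe that every nontrivial $4$-square $S$ among $P$ has a well-defined orientation $\phi_S\in\OS$, and $\phi_S$ is degenerate precisely because $S$ itself witnesses a $4$-square in that orientation; hence $S\in\Sq_4(\phi_S)$. Since the sweep visits every degenerate orientation, the algorithm processes $\phi_S$ and, by Lemma~\ref{lem:alg_square_finding}, $S$ occurs in $\bigcup_{w\in W}\Sq_4(w)=\Sq_4(\phi_S)$ and is therefore reported. Conversely, every square we output belongs to some $\Sq_4(\phi)\subseteq\Sq_4$, so no spurious square is produced. Thus the output equals $\Sq_4$ exactly.

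For the running time, Theorem~\ref{thm:alg_VD} already bounds the total cost of the algorithm by $O(s_4\log n)$ time and $O(n)$ space; the only addition here is $O(1)$ extra work per reported square for the output and the optional deduplication, and there are $\sum_{\phi}|\Sq_4(\phi)|=|\Sq_4|=s_4$ reported squares in total. Hence the whole procedure runs in $O(s_4\log n)$ time. I do not anticipate a serious obstacle: the corollary is essentially a restatement of what the algorithm of Theorem~\ref{thm:alg_VD} already accomplishes, and the only point needing a line of care is confirming that every $4$-square's orientation is degenerate, which is immediate from the definition of degenerate orientation, so that nothing slips through the sweep.
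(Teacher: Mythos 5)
Your proposal is correct and follows exactly the route the paper intends: the corollary is an immediate consequence of Theorem~\ref{thm:alg_VD} combined with Lemma~\ref{lem:alg_square_finding}, since the sweep visits every degenerate orientation $\phi$ and computes $\Sq_4(\phi)$ there, and the paper indeed leaves the proof to the reader with the remark that it is ``now obvious.'' Your extra care about deduplication (a square can be relevant to at most two events at a given $\phi$) is a reasonable implementation detail that does not change the $O(s_4\log n)$ bound.
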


\section{Maximal Empty Squares} \label{sec:mes}

An empty square among $P$ is \emph{maximal} if there is no other empty square
that contains it.
The following is a well known fact.
\begin{lemma} \label{lem:MES1}
For any $\theta\in\OS$, a maximal empty square in $\theta$
is centered at a vertex or a point on an edge of $\VD(\theta)$.
\end{lemma}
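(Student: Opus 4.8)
The plan is to show that no maximal empty square in orientation $\theta$ can be centered in the interior of a face of $\VD(\theta)$. Since the faces, edges, and vertices of the plane graph $\VD(\theta)$ together cover $\Plane$, this immediately yields the statement. So let $S$ be a maximal empty square in orientation $\theta$, with center $c$ and radius $r$. First I would observe that $S$ must be the largest empty square in orientation $\theta$ centered at $c$, that is, $r = d_\theta(c,P)$ where $d_\theta(c,P):=\min_{p\in P} d_\theta(c,p)$: if $r < d_\theta(c,P)$, then the square in orientation $\theta$ with center $c$ and radius $d_\theta(c,P)$ is empty and properly contains $S$, contradicting maximality.

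Now suppose, for a contradiction, that $c$ lies in the interior of some face $f$ of $\VD(\theta)$. By the construction of $\VD(\theta)$ recalled in Section~\ref{sec:sq_VD}, the face $f$ is the locus of centers of empty squares in orientation $\theta$ sharing a common contact type with exactly one pinned side; hence $S$ has exactly one pinned side, and each of its contact points lies in the relative interior of that side, since a contact point at a corner would pin two sides. By the symmetry group of the square we may assume this pinned side is the top side $\mt(S)$, so in the coordinate frame rotated by $\theta$ every contact point $p$ of $S$ satisfies $p_y = c_y + r$ and $|p_x - c_x| < r$. Pick $\delta>0$ small enough that the point $c'$ obtained by translating $c$ downward by $\delta$ (in the frame rotated by $\theta$) still lies in $f$, and set $r':=r+\delta$. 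The square $S'$ with center $c'$ and radius $r'$ has the same top-side line as $S$, so it still carries exactly the contact points of $S$ on its top side; moreover, since $c'\in f$, the largest empty square centered at $c'$ has the contact type of $f$, so its radius equals $d_\theta(c',P) = (c_y+r) - (c_y-\delta) = r'$. Hence $S'$ is empty. Since the center moved straight down by $\delta$ we have $d_\theta(c,c') = \delta = r' - r$, which gives $S\subseteq S'$, and the inclusion is proper because $\delta>0$. This contradicts the maximality of $S$, so $c$ cannot be interior to any face, i.e.\ $c$ lies on an edge or at a vertex of $\VD(\theta)$.

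The one place that needs a little care is the \emph{neutral}-face (stapled-side) case, in which the pinned side of $S$ carries two distinct points of $P$; but under the same downward translation both points remain in the relative interior of the top side of $S'$, so the argument above applies verbatim. I expect the only genuinely routine technical points to be (i) that $f$ is open, so a small downward translation stays inside it, and (ii) that $d_\theta(\cdot,P)$ is affine along the downward ray through $c$ inside $f$, realized by the contact point(s) on the top side; both follow directly from the description of the faces of $\VD(\theta)$ in Section~\ref{sec:sq_VD}, so no substantial obstacle is anticipated.
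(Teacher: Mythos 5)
Your proposal is correct and takes the same approach as the paper's proof: assume the center lies in a face, observe that this forces exactly one pinned side, translate the center away from that pinned side, and grow the square to obtain a contradiction with maximality. The paper keeps the argument to a few lines, while you carefully fill in the supporting details (the $r=d_\theta(c,P)$ observation, openness of the face, the affine behavior of $d_\theta(\cdot,P)$ along the ray, and the neutral-face case), but the underlying argument is the same.
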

\begin{proof}
Let $S$ be a maximal empty square in orientation $\theta$.
Suppose to the contrary that the center of $S$ lies in a face of $\VD(\theta)$.
Then, the number of pinned sides of $S$ is exactly one
and its contact type $\ct$ is either $\{(p,\ma)\}$ or $\{(p,\ma),(q,\ma)\}$
for some $p,q\in P$ and $\ma\in\{\mt,\mr,\mb,\ml\}$,
by our general position assumption.
Without loss of generality, assume that $\ma=\mb$.
In either case, we can grow $S$ by slightly moving its center upwards,
keeping its contact type, a contradiction. 
\end{proof}

Indeed, if $S$ is a maximal empty square centered at a point on an edge of $\VD(\theta)$,
then the edge should be sliding by Lemma~\ref{lem:edge_vertex}.
We, however, handle all the edges together, since taking only sliding edge into account
does not reduce the asymptotic complexity of our algorithms.

Let $\mathcal{V}:=\bigcup_{\theta\in\OS} V(\theta)$ and 
$\mathcal{E}:=\bigcup_{\theta\in\OS} E(\theta)$.
For each regular vertex $v\in \mathcal{V}$ and each valid interval $I$ for $\ct_v$,
the pair $(v, I)$ is called an \emph{MES class} for $v$;
for each regular edge $e = uv\in \mathcal{E}$ and each nonempty interval $I$,
the pair $(e, I)$ is called an \emph{MES class} for $e$
if $I$ is the intersection of a valid interval for $\ct_u$ and a valid interval for $\ct_v$.
Let $\MES_V$ and $\MES_E$ be the set of all MES classes for $v\in \mathcal{V}$
and those for $e\in \mathcal{E}$, respectively. 
For each MES class $\Mes = (v, I) \in \MES_V$,
define $\Mes(\theta) := S_{\ct_v}(\theta)$ for $\theta \in I$.
For each MES class $\Mes = (e, I) \in \MES_E$ such that $e$ is between $u,v\in \mathcal{V}$,
define $\Mes(\theta) := S_{\ct_u}(\theta) \cup S_{\ct_v}(\theta)$.
Observe that every maximal empty square is either equal to $\Mes(\theta)$ for some $\Mes\in\MES_V$
or contained in $\Mes(\theta)$ for some $\Mes \in\MES_E$.
Hence, in this way, all maximal empty squares are described by the MES classes.

Lemma~\ref{lem:4sq-3sq} implies 
that $\MES_V$ and $\MES_E$ consist of $O(s_4) = O(n^2)$ MES classes.
By our algorithm described above, we can compute all MES classes in $\mathcal{M}$
in $O(s_4 \log n)$ time.
\begin{lemma} \label{lem:MES}
 Given a set $P$ of $n$ points in the general position,
 the number of MES classes in $\MES_V \cup \MES_E$ is $O(s_4)$ and 
 we can compute all the MES classes in $O(s_4 \log n)$ time.
\end{lemma}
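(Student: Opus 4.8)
The plan is to bound the number of MES classes by relating them to the structure already developed around $\GSq$ and the valid intervals, and then to observe that the algorithm of Theorem~\ref{thm:alg_VD} already produces all the data needed to enumerate them. First I would handle $\MES_V$: an MES class for a regular vertex $v$ is a pair $(v,I)$ where $I$ is a valid interval for $\ct_v$. By Lemma~\ref{lem:valid_interval} every valid interval has two endpoints, each of which is a degenerate orientation at which $S_{\ct_v}(\phi)$ is a $4$-square; moreover Lemma~\ref{lem:4sq-3sq} says that each $4$-square $S\in\Sq_4$ is the limit $S_{\ct'}(\phi)$ for exactly two or four contact types $\ct'$ with three pinned sides, one from each side of $\phi$. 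Thus each endpoint of a valid interval is ``charged'' to a $4$-square, and each $4$-square receives only $O(1)$ such charges. Since every valid interval has at least one endpoint (the intervals are open subintervals of the circle $\OS$, and a valid interval spanning all of $\OS$ would force $\ct_v$ to be realizable in every orientation, which cannot happen because $S_{\ct_v}(\theta)$ is well defined only on a closed subinterval of $\OS$), the total number of valid intervals, summed over all regular contact types, is $O(s_4)$. Hence $|\MES_V| = O(s_4)$.

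Next I would handle $\MES_E$. An MES class for a regular edge $e = uv$ is $(e,I)$ where $I$ is the intersection of a valid interval for $\ct_u$ and a valid interval for $\ct_v$. The edge $e$ itself, as a combinatorial object of $\VG(\theta)$, persists exactly over an interval of orientations bounded by degenerate orientations at which $e$ appears or disappears — and, by Lemma~\ref{lem:change} together with the correspondence between combinatorial changes and $4$-squares, each such appearance/disappearance of an edge is charged to a $4$-square in $\Sq_4(\phi)$ with only $O(1)$ edges charged per $4$-square (the constant coming from the degree bound of Lemma~\ref{lem:vertextype} and the analysis in Lemma~\ref{lem:change}(iii)). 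Since $I$ is contained in the lifetime interval of the edge $e$ and the endpoints of $I$ are again either endpoints of a valid interval for $\ct_u$ or for $\ct_v$ (hence degenerate orientations of the type already counted) or endpoints of the lifetime of $e$, the number of MES classes in $\MES_E$ is also $O(s_4)$. Combining, $|\MES_V \cup \MES_E| = O(s_4)$, and since $s_4 = O(n^2)$ by Theorem~\ref{thm:4sq} this is $O(n^2)$.

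For the running time, I would invoke Theorem~\ref{thm:alg_VD}: the algorithm there maintains $\VG(\theta)$ explicitly through all of $\OS$ in $O(s_4\log n)$ time, stopping at every degenerate orientation $\phi$ and producing $\Sq_4(\phi)$. Each MES class in $\MES_V$ is determined by a regular vertex $v$ and the valid interval of $\ct_v$; both endpoints of that interval are recognized by the algorithm precisely when the corresponding $4$-square is reported (by Lemma~\ref{lem:4sq-3sq} and Figure~\ref{fig:transition}, each $4$-square tells us exactly which regular vertices merge into or split from it, i.e.\ which valid intervals start or end there). So as the sweep proceeds we open a tentative MES class when a regular vertex is created and close it when it is destroyed, at an $O(1)$ cost per event, amortized into the $O(s_4\log n)$ already spent. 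The same bookkeeping applies to edges for $\MES_E$, using that the algorithm maintains $E(\theta)$ and detects every edge event. Hence all MES classes are output in $O(s_4\log n)$ time.

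The main obstacle I expect is the careful charging argument that no MES class is counted twice and that every valid interval (and every edge-lifetime interval) has an endpoint to charge — in other words, ruling out the degenerate possibility of a contact type whose valid set is the whole circle $\OS$, and correctly accounting for valid intervals that are cut short by the edge's own combinatorial lifetime rather than by the emptiness of the square. Both follow from Lemma~\ref{lem:valid_interval} (a contact type with three pinned sides is well defined only on a closed proper subinterval of $\OS$, so its valid set cannot be all of $\OS$) and from Lemma~\ref{lem:change}, but the write-up needs to be explicit that each of the $O(1)$-bounded local pictures of Figure~\ref{fig:transition} contributes only a bounded number of interval endpoints per $4$-square so that the sum telescopes to $\Theta(s_4)$.
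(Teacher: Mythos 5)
Your proposal is correct and follows essentially the same route as the paper's own proof, which simply invokes Lemma~\ref{lem:4sq-3sq} and Theorem~\ref{thm:alg_VD} for the $O(s_4)$ bound and observes that the sweep algorithm of Theorem~\ref{thm:alg_VD} can be instrumented to record each vertex/edge together with its valid interval on the fly. You have merely unpacked the charging argument (each valid-interval endpoint is a degenerate orientation charged to a $4$-square, each receiving $O(1)$ charges by Lemma~\ref{lem:4sq-3sq}, with $\MES_E$ bounded via the constant degree and Lemma~\ref{lem:change}) that the paper leaves implicit.
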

\begin{proof}
The number of MES classes is bounded by $O(s_4)$ by Lemma~\ref{lem:4sq-3sq}
and Theorem~\ref{thm:alg_VD}.
In the algorithm described and summarized in Theorem~\ref{thm:alg_VD},
we can actually collect all the vertices in $\mathcal{V}$ with their valid intervals
and all the edges in $\mathcal{E}$ in the same time bound $O(s_4 \log n)$.
Therefore, we can compute all the MES classes in $\MES_V \cup \MES_E$
in $O(s_4 \log n)$ time. 
\end{proof}

\subsection{Computing a largest empty square}

Here, we discuss the problem of computing a largest empty square over all orientations among $P$.
Since any point $c$ arbitrarily far from all points in $P$ determines an empty square
of arbitrarily large size,
we need to restrict candidate optimal squares not to be unbounded,
so the problem becomes nontrivial.
There can be several such ways, but the problem is in principle an optimization 
over all MES classes under consideration of boundary cases.
We demonstrate two such variants.
\begin{enumerate}[(1)] \denseitems
  \item Find a largest empty square among all those with three or four pinned sides.
  \item  Find a largest empty square among all those in orientation $\theta$ contained in
    the bounding box $B(\theta)$ over all orientations $\theta\in\OS$,
    where $B(\theta)$ denotes the smallest rectangle in orientation $\theta$ that encloses $P$.
\end{enumerate}
For the rectangle variant of this problem,
Chaudhuri et al.~\cite{cnd-lerps-03} restricted an empty rectangle
to have four contact pairs, one on each side,
and Bae~\cite{b-marsap-19X} considered those contained in
the bounding box $B(\theta)$ with an application
to the problem of computing a minimum rectangular or square annulus
in arbitrary orientation.
We show that both variants can be solved in $O(n^2\log n)$ time in the worst case.

For the first variant, we can just search MES classes in $\MES_V$
by Lemma~\ref{lem:edge_vertex}.
For each MES class $\Mes = (v,I) \in\MES_V$, we maximize the radius of $\Mes(\theta)$
over all $I\in \theta$, including its endpoints by checking the limit of $\Mes(\theta)$
at each endpoint of $I$.
Hence, we do not miss any $4$-square by Lemma~\ref{lem:4sq-3sq}.
We can just return the maximum of maximum squares over all $\Mes \in \MES_V$.
This takes $O(s_4)$ time after computing all MES classes in $\MES_V$,
so $O(s_4 \log n)$ time in total.

For the second variant, 
observe that the bounding box $B(\theta)$ is described by its four contact points.
We first compute all changes of the contact points of $B(\theta)$
by computing the convex hull of $P$ and the antipodal pairs of extreme points~\cite{t-sgprc-83}.
This results in a partition of $\OS$ into $h = O(n)$ intervals whose endpoints 
$0 \leq \theta_1 < \theta_2 < \cdots < \theta_h < \pi/2$ 
correspond to a change of the contact pairs of $B(\theta)$.
Define $\theta_0 := \theta_h$ since the orientation space $\OS$ is cyclic.
For each MES class $\Mes = (e, I) \in \MES_E$ for edge $e$, 
let $\Mes_i := (e, I\cap (\theta_{i-1}, \theta_i); i)$ for each $i=1,\ldots h$.
We call each $\Mes_i$ a \emph{boxed MES class} if $I\cap (\theta_{i-1}, \theta_i) \neq \emptyset$.
We collect all boxed MES classes into set $\bMES_E$ in $O(s_4 \log n + n^2)$ time.
\begin{lemma} \label{lem:boxedMES}
 We can compute the set $\bMES_E$ of all boxed MES classes in $O(s_4 \log n + n^2)$ time.
\end{lemma}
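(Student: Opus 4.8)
The plan is to compute $\MES_E$ and the bounding-box partition of $\OS$ separately, and then to split each MES class along the partition by binary search. By Lemma~\ref{lem:MES} I can obtain all MES classes in $\MES_E$ in $O(s_4\log n)$ time, keeping for each $\Mes=(e,I)$ the edge $e$ together with the interval $I\subseteq\OS$. The partition $0\le\theta_1<\cdots<\theta_h<\pi/2$ with $h=O(n)$, as already noted before the lemma, is obtained from the convex hull of $P$ and its antipodal pairs of extreme vertices in $O(n\log n)$ time~\cite{t-sgprc-83}, since a contact point of $B(\theta)$ changes exactly when a convex-hull edge becomes parallel or orthogonal to a side of $B(\theta)$.

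The crux is to bound the output size $|\bMES_E|$. Each boxed MES class arises from some $(e,I)\in\MES_E$ by chopping $I$ at the partition points in its interior, so
\[ |\bMES_E| \;\le\; \sum_{(e,I)\in\MES_E}\Bigl(1 + \bigl|\{\,i : \theta_i \in \intr I\,\}\bigr|\Bigr). \]
The leading terms sum to $|\MES_E| = O(s_4)$. For the remaining sum I would exchange the order of summation: a fixed partition point $\theta_i$ lies in the interior of $I$ only when both endpoints of $e=uv$ are regular vertices of $\VD(\theta_i)$ and $e$ is an edge of $\VD(\theta_i)$ (the valid intervals being open and bounded by degenerate orientations, the diagram is combinatorially constant on $\intr I$), and distinct such MES classes map to distinct edges of $\VD(\theta_i)$ since the intervals $I$ associated with a fixed $e$ are pairwise disjoint. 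By Lemma~\ref{lem:VD_complexity} there are only $O(n)$ edges in $\VD(\theta_i)$, so this term contributes $O(n)$ for each of the $h=O(n)$ indices, i.e.\ $O(n^2)$ total. Hence $|\bMES_E| = O(s_4 + n^2)$.

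With this bound in hand the rest is routine. For each $\Mes=(e,I)\in\MES_E$ I binary-search the two endpoints of $I$ in the sorted list $\theta_1,\dots,\theta_h$ in $O(\log n)$ time, and then emit one boxed MES class $\Mes_i=(e,\,I\cap(\theta_{i-1},\theta_i);\,i)$ for each index $i$ with nonempty intersection, in $O(1)$ time apiece. Summed over $\MES_E$ this costs $O(|\MES_E|\log n + |\bMES_E|) = O(s_4\log n + n^2)$, which absorbs the $O(s_4\log n)$ cost of computing $\MES_E$ and the $O(n\log n)$ cost of the partition. The one nontrivial point is the output-size estimate $|\bMES_E| = O(s_4 + n^2)$ established above; once that is in place, the algorithm and its time analysis are immediate.
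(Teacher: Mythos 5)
Your proof is correct and follows essentially the same approach as the paper: compute $\MES_E$ via Lemma~\ref{lem:MES}, compute the bounding-box partition from the convex hull and its antipodal pairs, locate the endpoints of each interval $I$ by binary search in the sorted list $\theta_1,\dots,\theta_h$, and then bound the total number of produced pieces by $O(hn)=O(n^2)$ because each $\theta_i$ can lie in the interior of only $O(n)$ intervals. Where the paper simply invokes Lemma~\ref{lem:VD_complexity} to assert that ``the number of valid intervals cut at $\theta_i$ is bounded by $O(n)$,'' you spell out the exchange of summation and the injection from MES classes containing $\theta_i$ into the $O(n)$ edges of $\VD(\theta_i)$ --- a slightly more explicit justification of the same step, but not a different argument.
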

\begin{proof}
For each MES class $\Mes = (e, I) \in \MES_E$,
we can do a binary search for the endpoints of $I$ on the sorted list of
$\theta_1, \theta_2, \ldots, \theta_h$.
So, for each $\Mes \in \MES_E$, we can find all $\Mes_i$ that are boxed MES classes
in $O(\log h + t_\Mes) = O(\log n + t_\Mes)$ time,
where $t_\Mes$ denotes the number of those $\Mes_i$ that are boxed MES classes.
By Lemma~\ref{lem:MES},
the total time is hence $O(s_4 \log n + t)$, where $t = \sum_{\Mes\in\MES_E} t_\Mes$.
Finally, observe that $t = O(hn) = O(n^2)$
since the number of valid intervals that are cut at $\theta_i$ is bounded by $O(n)$
by Lemma~\ref{lem:VD_complexity}.
\end{proof}
Finally, for each $\Mes_i = (e, I_i; i) \in \bMES_E$ with $\Mes=(e, I)\in\MES_E$,
let $\Mes_i(\theta) := \Mes(\theta) \cap B(\theta)$ 
for $\theta \in I_i = I\cap (\theta_{i-1},\theta_i)$.
Note that in the range of $\theta$ the four contact points of $B(\theta)$ is fixed,
so $\Mes_i(\theta)$ as a function of $\theta$ is of constant complexity.
So, we can find a largest square contained in $\Mes_i(\theta)$ over all 
$\theta \in I\cap (\theta_{i-1},\theta_i)$ in $O(1)$ time.
Therefore, the second variant of the largest empty square problem can be solved
in $O(s_4\log n + n^2)$ time.

\begin{theorem} \label{thm:MES}
 Given a set $P$ of $n$ points in the plane,
 a largest empty square among $P$ in arbitrary orientation
 can be computed in worst-case $O(n^2 \log n)$ time.
\end{theorem}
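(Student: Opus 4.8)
The plan is to combine the two variants analyzed above and argue that the bound $O(s_4)$ is never worse than $O(n^2)$, so in the worst case the running time is $O(n^2\log n)$. First I would recall that Theorem~\ref{thm:4sq} (together with the upper-bound analysis leading to it) gives $s_4 = O(n^2)$; hence by Lemma~\ref{lem:MES} all MES classes in $\MES_V \cup \MES_E$ can be computed in $O(s_4\log n) = O(n^2\log n)$ time, and there are $O(n^2)$ of them. For the first variant, I would then simply iterate over every MES class $\Mes = (v,I) \in \MES_V$ and, treating the radius of $\Mes(\theta)$ as an explicit sinusoidal function of $\theta$ on the closed interval $\cl I$, maximize it in $O(1)$ time per class; by Lemma~\ref{lem:4sq-3sq} the $4$-squares appearing at endpoints of valid intervals are captured as limits, so no candidate is missed. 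The maximum over all of $\MES_V$ is returned, and the whole computation is dominated by the $O(s_4\log n)$ term.

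For the second variant I would invoke Lemma~\ref{lem:boxedMES} to build $\bMES_E$ in $O(s_4\log n + n^2)$ time, then for each boxed MES class $\Mes_i = (e, I_i; i)$ write $\Mes_i(\theta) := \Mes(\theta)\cap B(\theta)$; since the four contact points of $B(\theta)$ are fixed on $I_i$ and $\Mes(\theta)$ is a union of two squares of constant combinatorial complexity, $\Mes_i(\theta)$ is a region of constant descriptive complexity, so finding the largest square it contains over $\theta \in I_i$ is an $O(1)$ optimization per boxed class. Taking the maximum over all $\bMES_E$ again costs $O(s_4\log n + n^2)$. Finally, substituting the worst-case bound $s_4 = O(n^2)$ into both running times yields $O(n^2\log n)$, which proves the theorem.

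The main obstacle I anticipate is not the asymptotic bookkeeping but making the two ``restrict to bounded squares'' formulations precise enough that \emph{every} maximal bounded empty square is genuinely represented by some $\Mes(\theta)$ with $\theta$ in the (closed) valid interval or boxed interval — in particular handling the degenerate orientations and the boundary limits correctly, so that a largest empty square achieved exactly at a $4$-square orientation or at an interval endpoint is not overlooked. This is exactly where Lemmas~\ref{lem:4sq-3sq}, \ref{lem:edge_vertex}, and \ref{lem:MES1} are needed: Lemma~\ref{lem:MES1} guarantees the optimum is centered on a vertex or edge of some $\VD(\theta)$, Lemma~\ref{lem:edge_vertex} tells us that on an edge the relevant square is contained in $S_u\cup S_v$, and Lemma~\ref{lem:4sq-3sq} ensures the endpoints of valid intervals contribute only finitely many extra $4$-square candidates that are reached as one-sided limits. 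Once this is set up cleanly, the optimization itself reduces to a constant-size calculus exercise per MES class, and the time bound follows immediately.
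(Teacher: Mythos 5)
Your proposal is correct and follows essentially the same route as the paper: the two variants (unbounded optimum over $\MES_V$, and bounding-box restricted optimum over $\bMES_E$) are solved by a constant-time sinusoidal optimization per MES or boxed-MES class, and the final bound comes from $s_4 = O(n^2)$ together with Lemmas~\ref{lem:MES} and~\ref{lem:boxedMES}. The care you take with limits at interval endpoints via Lemma~\ref{lem:4sq-3sq} is exactly the point the paper makes as well.
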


Some query versions of the problem can also be considered.
\begin{theorem} \label{thm:MES_query1}
 Given a set $P$ of $n$ points in general position,
 in $O(n^2 \log n)$ time,
 we can preprocess $P$ into a data structure of size $O(n^2 \alpha(n))$
 that answers the following query in $O(\log n)$ time:
 given an orientation $\beta\in\OS$, find
 a largest empty square in orientation $\beta$.
\end{theorem}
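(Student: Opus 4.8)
The plan is to precompute, as a searchable structure, the upper envelope over all orientations of the radii of the maximal-empty-square families $\MES_V$ defined above, and to answer a query by point location in this envelope. First I would invoke Lemmas~\ref{lem:MES1} and~\ref{lem:edge_vertex}: for a fixed orientation $\beta$, a largest empty square in $\beta$ among those with at least three pinned sides is centered at a vertex of $\VD(\beta)$, since a maximal empty square centered on an edge lies on a \emph{sliding} edge, along which all empty squares share one common radius equal to the radii at its two incident vertices. Writing $r_\Mes(\theta)$ for the radius of $\Mes(\theta)$, I then claim that the radius of such a largest square equals
\[ \rho(\beta) := \max\{\, r_\Mes(\beta) : \Mes = (v,I)\in\MES_V,\ \beta\in\bar I\,\}, \]
where $\bar I$ is the closure of the valid interval $I$. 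Indeed, for a regular $\beta$ every vertex of $\VD(\beta)$ other than $\infty$ and the points of $P$ is a regular vertex, hence corresponds to a class in $\MES_V$ active at $\beta$; and for a degenerate $\beta$, any $4$-square in $\Sq_4(\beta)$ equals $S_{\ct'}(\beta)$ for some contact type $\ct'$ with three pinned sides for which $\beta$ is an endpoint of a valid interval (Lemma~\ref{lem:4sq-3sq}), so it is recovered as a boundary value of the corresponding class in $\MES_V$.

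Next I would describe each $r_\Mes$, for $\Mes = (v,I)\in\MES_V$, as a partial function on $\bar I$ whose graph is a single sinusoidal arc: twice the radius equals the distance between the two parallel pinned sides of $S_{\ct_v}(\theta)$, which is $|ab|\cdot|\sin(\theta-\gamma_{ab})|$, where $a,b\in P$ pin those two sides and $\gamma_{ab}$ is the orientation of $ab$; and since an endpoint of $I$ must carry a $4$-square (Lemma~\ref{lem:valid_interval}), the degenerate value $\sin(\theta-\gamma_{ab})=0$ does not occur on $\bar I$, so $r_\Mes$ is a genuine sinusoidal arc with no kink. Since $\OS$ has length $\pi/2$, any two such arcs cross at most once, so the upper envelope $E^\star(\theta):=\max_{\Mes} r_\Mes(\theta)$ of the $N=|\MES_V|=O(s_4)=O(n^2)$ arcs is a Davenport--Schinzel structure of order three; its complexity is $\lambda_3(N)=O(N\alpha(N))=O(n^2\alpha(n))$, and it can be built by standard divide-and-conquer in $O(n^2\log n)$ time~\cite{sa-dsstga-95}, once the classes in $\MES_V$, together with their valid intervals, have been produced in $O(s_4\log n)=O(n^2\log n)$ time by Lemma~\ref{lem:MES} and Theorem~\ref{thm:4sq}. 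I would then store the $O(n^2\alpha(n))$ breakpoints of $E^\star$ in a sorted array, each tagged with a witness MES class attaining the maximum on its piece; this uses $O(n^2\alpha(n))$ space. A query for $\beta\in\OS$ is answered by a binary search for the piece of $E^\star$ containing $\beta$, reading off its witness class $\Mes$, and returning $\Mes(\beta)$ (or its limit when $\beta$ is an endpoint of $\bar I$), in $O(\log(n^2\alpha(n)))=O(\log n)$ time.

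The hard part will be the correctness of the reduction rather than the data-structuring: one must verify that restricting attention to $\MES_V$ — and not $\MES_E$ nor the unbounded pieces — loses nothing for the (three-or-four-pinned-sides) variant in question, and in particular that degenerate query orientations, where $4$-squares appear and $\VD(\beta)$ is structurally special, are still handled correctly by evaluating the sinusoidal arcs of $\MES_V$ on the \emph{closed} valid intervals. A secondary point is the Davenport--Schinzel bookkeeping: confirming that each pair of radius arcs meets only $O(1)$ times over $\OS$, so that the envelope size is genuinely $O(n^2\alpha(n))$ and matches the claimed space bound. (The bounding-box variant of Theorem~\ref{thm:MES} admits the same treatment after refining $\OS$ by the $O(n)$ breakpoints of the contact pairs of $B(\theta)$, as in Lemma~\ref{lem:boxedMES}.)
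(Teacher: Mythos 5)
Your proposal is correct and follows essentially the same route as the paper: represent the query function $\rho(\theta)$ as the upper envelope of the sinusoidal radius curves of the MES classes, observe that any two cross at most once on a window of length $\pi/2$ (hence a Davenport--Schinzel sequence of order $3$ since the arcs are defined on partial intervals), and answer queries by binary search on the envelope. You fill in details the paper leaves implicit — the explicit form $|ab|\cdot|\sin(\theta-\gamma_{ab})|$ of the arcs, the reduction from all MES classes to just $\MES_V$ via Lemmas~\ref{lem:MES1} and~\ref{lem:edge_vertex}, and the treatment of degenerate query orientations as closed-interval endpoints — but the structure of the argument and the resulting $O(n^2\alpha(n))$ space and $O(n^2\log n)$ preprocessing / $O(\log n)$ query bounds are the same.
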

\begin{proof}
For any $\theta\in\OS$,
let $\rho(\theta)$ be the radius of a largest empty square
in orientation $\theta$ under consideration of any of the above two variants.
From the above discussions,
it is not difficult to see that
the function $\rho$ is the upper envelope of 
the possible radii over all MES classes.
Each of curve pieces in consideration is sinusoidal with period $2\pi$,
and thus any two of them cross at most once.
This implies that $\rho$ corresponds to a Davenport--Schinzel sequence
of order three.
Hence, $\rho$ is piecewise sinusoidal with $O(n^2\alpha(n))$ breakpoints,
and can be explicitly computed in $O(n^2 \log n)$ time~\cite{h-fuenls-89}.
A query of this type can be simply processed by evaluating 
the function value of $\rho$ in $O(\log n)$ time.
\end{proof}

\begin{theorem}\label{thm:MES_query2}
 Given a set $P$ of $n$ points in the general position,
 in $O(s_4\log n)$ time, 
 we can preprocess $P$ into a data structure of size $O(s_4)$
 that answers the following query in $O(\log n)$ time:
 given a point $c\in\Plane$ and an orientation $\beta\in\OS$,
 find a largest empty square in orientation $\beta$ centered at $c$.
\end{theorem}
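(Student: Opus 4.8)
The plan is to reduce each query to a point location in a three-dimensional subdivision. A largest empty square in orientation $\beta$ centered at $c$ has radius exactly $\min_{p\in P} d_\beta(c,p)$, and if $c$ lies in the face of $\VD(\beta)$ associated with a contact pair $(p,\ma)$, then that square is the one centered at $c$ in orientation $\beta$ whose side $\ma$ contains $p$. Hence it suffices to build a data structure that, given $(c,\beta)$, reports the face of $\VD(\beta)$ containing $c$; the answer is then produced in $O(1)$ extra time, and degenerate queries (a point on a lower-dimensional feature of $\VD(\beta)$, or a $\beta$ that is one of the finitely many orientations admitting a neutral face) are resolved by a fixed tie-breaking rule.

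Consider the subdivision $\mathcal{A}$ of $\Plane\times\OS$ whose cell through $(c,\theta)$ is the maximal connected region determined by the combinatorial identity of the face of $\VD(\theta)$ containing $c$. Every cross-section $\{\theta\}\times\Plane$ of $\mathcal{A}$ equals $\VD(\theta)$, which has $\Theta(n)$ complexity by Lemma~\ref{lem:VD_complexity}, and $\mathcal{A}$ is monotone in the $\theta$-direction (each cell meets every plane $\theta=\mathrm{const}$ in a single connected region) since every face of $\VD(\theta)$ stays connected as $\theta$ varies. By Lemmas~\ref{lem:valid_interval} and~\ref{lem:4sq-3sq} the combinatorial structure of $\VD(\theta)$ changes only at the degenerate orientations, and by Theorem~\ref{thm:alg_VD} the total number of such changes is $\Theta(s_4)$; so $\mathcal{A}$ has complexity $\Theta(s_4)$. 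Running the algorithm of Theorem~\ref{thm:alg_VD} --- equivalently, computing the MES classes of Lemma~\ref{lem:MES} refined by which contact pair sits on each side --- produces all combinatorial faces of $\mathcal{A}$ together with the $\theta$-parametrized descriptions of their bounding arcs and surfaces (each a sinusoidal function of $\theta$, as in the proof of Lemma~\ref{lem:4sq-3sq}) in $O(s_4\log n)$ time.

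Since $\mathcal{A}$ is $\theta$-monotone, I would do point location by the sweep-and-persist paradigm: sweep a plane $\theta=\mathrm{const}$ through $\mathcal{A}$ while maintaining a planar point-location structure for the current cross-section $\VD(\theta)$, and make this structure partially persistent. Within one combinatorial epoch --- a maximal interval between consecutive degenerate orientations --- the \emph{combinatorial} shape of the point-location structure is invariant; only the vertex positions and the supporting lines of the edges of $\VD(\theta)$ move, so each comparison node stores a predicate comparing sinusoidal functions of $\theta$ that is instantiated with $\theta=\beta$ only at query time, and no update is performed during an epoch. At each degenerate orientation $\phi$ we perform the $O(|\Sq_4(\phi)|)$ local structural updates that realize the edge flips and vertex appearances/disappearances of Lemma~\ref{lem:4sq-3sq}; since $\sum_\phi|\Sq_4(\phi)|=s_4$ and $s_4=\Omega(n)$ by Theorem~\ref{thm:4sq}, the persistent structure occupies $O(s_4)$ space and is built in $O(s_4\log n)$ time. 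A query $(c,\beta)$ then binary-searches the sorted degenerate orientations for the epoch of $\beta$ in $O(\log s_4)=O(\log n)$ time, accesses the corresponding version, and locates $c$ in $\VD(\beta)$ in $O(\log n)$ time by walking that version with its predicates evaluated at $\theta=\beta$, obtaining the contact pair $(p,\ma)$ and hence the answer; as $s_4=O(n^2)$ every $O(\log s_4)$ factor is $O(\log n)$.

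The hard part is making the persistent point-location layer cope simultaneously with (i) the \emph{continuous} motion of $\VD(\theta)$ inside an epoch, which rules out off-the-shelf persistent point location (built for a static or discretely changing planar map), and (ii) the \emph{batched} updates at an align event, where one degenerate orientation may trigger $\Omega(n)$ changes at once. Point (ii) is harmless because only the \emph{aggregate} number of updates, $\Theta(s_4)$, enters the bounds. Point (i) I would handle by decoupling the combinatorial shape of the search structure (constant within an epoch) from its numeric predicates (functions of $\theta$); this forces the use of a point-location method whose shape depends only on the planar map --- for instance a balanced separator hierarchy built from the Voronoi graph $\VG(\theta)$ --- rather than one resting on a geometric triangulation, since a triangulation of $\VD(\theta)$ could flip inside an epoch even when $\VD(\theta)$ itself undergoes no combinatorial change. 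Checking that such a purely combinatorial hierarchy supports the required $O(\log n)$-time walk with $\theta$-parametrized predicates and admits partial persistence at $O(\log n)$ amortized cost per structural change is the technical core of the argument.
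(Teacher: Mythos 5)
Your proposal takes the same high-level route as the paper: binary search over the sorted degenerate orientations for the epoch containing $\beta$, then a parametrically evaluated planar point-location query in $\VD(\beta)$ to find the face (equivalently, the contact pair $(p,\ma)$) containing $c$, with persistence across epochs to compress the total space from $O(s_4 n)$ to $O(s_4)$. The paper's own proof is exactly this: it first describes the naive $O(s_4 n)$-space version with a separate parametric structure $\mathcal{D}(\theta)$ per epoch, then invokes Driscoll et al.\ to persist it along the $\theta$-sweep.

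Where you go beyond the paper is in spotting a real subtlety that the paper elides: standard persistent planar point location (slab decomposition or triangulation-refinement hierarchies) has a \emph{combinatorial} shape that is itself geometry-dependent --- slab boundaries are vertex $x$-coordinates, triangulations can flip --- and these can change continuously \emph{within} an epoch even though $\VG(\theta)$ does not. The paper's phrase ``stored in $\mathcal{D}(\theta)$ as a function of $\theta$'' implicitly assumes the search structure is combinatorially stable per epoch, which is not automatic. Your proposed fix --- a purely graph-based hierarchy (separator tree) whose shape depends only on $\VG(\theta)$, with $\theta$-parametrized inside/outside predicates evaluated at query time --- is the right kind of repair, and you honestly flag that its details (in particular, that persistence at the degenerate orientations costs only $O(\log n)$ amortized per structural change, yielding $O(s_4)$ total space rather than $O(s_4\log n)$) are the unverified technical core. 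So you and the paper are in the same place, but you have located and named the gap; the paper's proof, as written, is at least as incomplete on this point as yours, and arguably more so since it does not acknowledge the issue.
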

\begin{proof}
We first show how to build such a query structure
using $O(s_4 n)$ space.
By Lemma~\ref{lem:4sq-3sq} and Theorem~\ref{thm:alg_VD}, 
we know that there are at most $O(s_4)$ combinatorially distinct
Voronoi diagrams $\VD(\theta)$ over all $\theta \in \OS$.
For each combinatorially distinct diagram $\VD(\theta)$,
we build a point location structure $\mathcal{D}(\theta)$ on $\VD(\theta)$
in such a way that each vertex in $V(\theta)$ and edge in $E(\theta)$
are stored in $\mathcal{D}(\theta)$ as a function of $\theta$.
Then, after a binary search for $\beta$ in the sorted list of 
all degenerate orientations, we can process a point location query
on $\mathcal{D}(\theta)$ in a parametric way.
This requires $O(s_4n)$ space by Lemma~\ref{lem:VD_complexity},
while all the point location structures $\mathcal{D}(\theta)$
can be obtained in $O(s_4 \log n)$ time
since the total amount of changes in $\mathcal{D}(\theta)$
is also bounded by $O(s_4)$ by Theorem~\ref{thm:alg_VD}.

To reduce the storage usage, 
we store $\mathcal{D}(\theta)$ in a persistent way~\cite{dsst-mdsp-89}
as $\theta$ increases from $0$ to $\pi/2$,
along with the execution of our algorithm described in
Theorem~\ref{thm:alg_VD}.
Since the total amount of changes is bounded by $O(s_4)$,
we need only $O(s_4)$ space. 
\end{proof}

\subsection{Computing a minimum square annulus}

A \emph{square annulus} is the closed region between two concentric
squares in a common orientation.
The \emph{minimum square annulus problem in arbitrary orientation} asks to find 
a square annulus that encloses given points $P$ 
and minimizes its width or area over all orientations.
The width of a square annulus is the difference of the radii of its two defining squares.

To solve this problem, Bae~\cite{b-cmwsaao-18,b-marsap-19X} observed that
there exists an optimal annulus such that its outer square is a smallest bounding square $S$ for $P$
in some orientation $\theta \in \OS$ and its inner square is a largest empty square $S'$
whose center restricted to be in the set of possible centers of $S$.

Let $r(\theta)$ be the radius of a smallest bounding square for $P$ in orientation $\theta$ and
$\ell(\theta)$ be the set of centers of all those smallest bounding squares.
Then, $\ell(\theta)$ forms a line segment whose orientation is either $\theta$ or $\theta+\pi/2$.
Also, observe that the radius $r(\theta)$ of any possible outer square
and the line segment $\ell(\theta)$ are determined only by the bounding box $B(\theta)$
and its four contact points.
More specifically, $r(\theta)$ is of the form $b \cdot \sin(\theta + c)$ 
for some constants $b,c\in\Real$, so is a sinusoidal function of period $2\pi$.

Let $\rho(\theta)$ be the radius of a largest empty square among those
whose orientation is $\theta$ and whose center lies on $\ell(\theta)$.
Also, let $w(\theta)$ be the minimum possible width of 
a square annulus in orientation $\theta$ that encloses $P$
Then, by the above observation,
we have $w(\theta) = r(\theta) - \rho(\theta)$.
The problem of minimum width is to minimize $w(\theta)$ over all $\theta \in \OS$.

Now, consider each boxed MES class $\Mes_i\in\bMES_E$ and define
$\rho_{\Mes_i}(\theta)$ to be the radius of a largest square
among those contained in $\Mes_i(\theta)$
and centered at some point on $\ell(\theta)$.
Since our MES classes describe all possible maximal empty squares,
it is obvious that
\[ \rho(\theta) = \max_{\Mes_i\in\bMES} \rho_{\Mes_i}(\theta),\]
and thus
 \[ \min_{\theta \in \OS} w(\theta) 
  = \min_{\theta \in\OS} \min_{\Mes_i\in\bMES} (r(\theta) - \rho_{\Mes_i}(\theta))
  = \min_{\Mes_i\in\bMES} \min_{\theta\in\OS} (r(\theta) - \rho_{\Mes_i}(\theta)),
\]
where we define $\rho_{\Mes_i}(\theta) = 0$ for $\theta$ out of the interval of $\Mes_i$.
A similar analysis to that in Bae~\cite{b-marsap-19X} proves that $\delta_{\Mes_i}(\theta)$
is piecewise sinusoidal of period $2\pi$ with $O(1)$ breakpoints.
Hence, we can find the minimum $\min_{\theta\in\OS} (r(\theta) - \rho_{\Mes_i}(\theta))$
in $O(1)$ time for each $\Mes_i\in\bMES$.
This results in an $O(s_4\log n+ n^2)$-time algorithm that computes a minimum-width square annulus
in arbitrary orientation.
A square annulus of minimum area can also be 
computed similarly in the same time bound.

\begin{theorem} \label{thm:sq_annulus}
 Given a set $P$ of $n$ points in the general position,
 a square annulus of minimum width or minimum area in arbitrary orientation
 that encloses $P$ can be computed in $O(n^2 \log n)$ time.
\end{theorem}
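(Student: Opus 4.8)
The plan is to follow the recipe already sketched in the paragraph preceding the statement, making each ingredient explicit. First I would invoke the structural observation of Bae~\cite{b-cmwsaao-18,b-marsap-19X} that there is always an optimal square annulus whose outer square is a smallest enclosing square $S$ of $P$ in some orientation $\theta\in\OS$ and whose inner square is a maximal empty square whose center is constrained to lie on the segment $\ell(\theta)$ of admissible centers of $S$. Accepting this, the problem reduces to minimizing $w(\theta)=r(\theta)-\rho(\theta)$ over $\theta\in\OS$, where $r(\theta)$ is the radius of the smallest enclosing square (a piecewise-sinusoidal function of period $2\pi$ with $O(n)$ breakpoints, computable in $O(n\log n)$ time from the convex hull of $P$ and its antipodal pairs~\cite{t-sgprc-83}), and $\rho(\theta)$ is the radius of a largest empty square centered on $\ell(\theta)$ in orientation $\theta$.

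Next I would express $\rho(\theta)$ in terms of the MES classes. By the discussion in Section~\ref{sec:mes}, every maximal empty square is either $\Mes(\theta)$ for some $\Mes\in\MES_V$ or is contained in $\Mes(\theta)$ for some $\Mes\in\MES_E$; so after further slicing each $\MES_E$ class at the $h=O(n)$ breakpoints of $B(\theta)$ to obtain the boxed MES classes $\bMES_E$ (computed in $O(s_4\log n + n^2)$ time by Lemma~\ref{lem:boxedMES}, together with $\MES_V$ sliced similarly), one has $\rho(\theta)=\max_{\Mes_i} \rho_{\Mes_i}(\theta)$, where $\rho_{\Mes_i}(\theta)$ is the radius of the largest square contained in $\Mes_i(\theta)$ and centered on $\ell(\theta)$, set to $0$ outside the $\theta$-interval of $\Mes_i$. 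Crucially, within each such interval the four contact points of $B(\theta)$ are fixed, so $\ell(\theta)$ and $r(\theta)$ have constant descriptive complexity there, and by the analysis of Bae~\cite{b-marsap-19X} the quantity $r(\theta)-\rho_{\Mes_i}(\theta)$ is piecewise sinusoidal of period $2\pi$ with $O(1)$ breakpoints. Hence
\[
 \min_{\theta\in\OS} w(\theta)
  = \min_{\Mes_i} \ \min_{\theta\in\OS}\bigl(r(\theta)-\rho_{\Mes_i}(\theta)\bigr),
\]
and each inner minimum can be evaluated in $O(1)$ time, for a total of $O(s_4\log n + n^2)$ after the preprocessing. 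Since $s_4 = O(n^2)$ by Theorem~\ref{thm:4sq}, this is $O(n^2\log n)$. For the minimum-area version I would repeat the argument with the area of the annulus, $r(\theta)^2-\rho_{\Mes_i}(\theta)^2$, in place of the width; this expression is again piecewise of constant complexity in each interval, so the same bound holds.

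The main obstacle I anticipate is verifying cleanly that $\rho_{\Mes_i}(\theta)$ — the radius of the largest square that is \emph{both} inside the region $\Mes_i(\theta)$ \emph{and} centered somewhere on the moving segment $\ell(\theta)$ — really is of constant complexity on the interval of $\Mes_i$, and in particular handling the boundary/limit cases where the maximizing center sits at an endpoint of $\ell(\theta)$ or where $\Mes_i(\theta)$ is cut off by a side of $B(\theta)$. This is precisely the step that leans on the detailed computation in Bae~\cite{b-marsap-19X}; I would cite that analysis rather than redo it, noting only that the combinatorics of which side of $\Mes(\theta)$ or of $B(\theta)$ is active changes $O(1)$ times because all relevant curves are sinusoids of period $2\pi$ meeting pairwise at most once, and that the endpoints of valid intervals are captured by the $4$-square enumeration of Corollary~\ref{coro:alg_4sq}, so nothing is missed. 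Everything else — computing the convex hull and antipodal pairs, running the rotational maintenance algorithm of Theorem~\ref{thm:alg_VD} to collect $\MES_V\cup\MES_E$ in $O(s_4\log n)$ time, and the binary searches of Lemma~\ref{lem:boxedMES} — is routine and already in place.
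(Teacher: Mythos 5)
Your proposal follows essentially the same route as the paper: the Bae reduction to minimizing $r(\theta)-\rho(\theta)$, the decomposition of $\rho(\theta)$ over boxed MES classes via Lemma~\ref{lem:boxedMES}, the observation that each $r(\theta)-\rho_{\Mes_i}(\theta)$ is piecewise sinusoidal with $O(1)$ breakpoints on the relevant interval, and then minimizing per class. The only additions are small expansions (explicitly writing the area objective as $r(\theta)^2-\rho_{\Mes_i}(\theta)^2$ and noting that $\MES_V$ classes are covered), which are consistent with and implicit in the paper's argument.
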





%

{
\bibliographystyle{abbrv}
\bibliography{emptysquares}
}

%

\end{document}